\documentclass{IJCAS}
\usepackage{url}
\usepackage{array,tabularx}
\usepackage{multicol} 
\usepackage{multirow}
\usepackage{makecell}
\usepackage{booktabs}
\journalvolumn{VV}
\journalnumber{X}
\journalyear{YYYY}
\setarticlestartpagenumber{1}
\allowdisplaybreaks
\begin{document}
\title{Design of Economic Dispatch Schemes of An Isolated BESS Network Based on Distributed Discrete-time PI+Rest Consensus}
\author{Yalin Zhang*\orcid{0000-0002-3788-700X}, Zhongxin Liu\orcid{0000-0002-3565-4800}, and Zengqiang Chen\orcid{0000-0002-1415-4073}}
\begin{abstract}
	Battery energy storage systems (BESSs) are widely integrated into smart grids. For an isolated BESS network, however, capacity degradation and power loss of battery units increase operating costs. To alleviate this problem, two distributed economic dispatch (ED) schemes with discrete-time dynamics are developed in this paper, thus obtaining the optimal output power vector and ensuring supply-demand balance while considering dynamic line loss and capacity constraints. It is worth mentioning that proportional-integral protocols with reset mechanisms (PI+R protocols) are introduced into these schemes which can greatly improve the consensus rate and control accuracy. Specifically, a marginal cost (MC) consensus controller and an average power mismatch estimator are the core components of each scheme, where these two are coupled to each other. In this regard, the difference between the two schemes is that for the MC consensus controllers, one involves incorporating the estimated average power mismatch feedback term into the closed-loop system error to design the PI+R protocol, while the other does not. In addition, regarding to each PI+R protocol, the integral term is reset to 0 when the proportional term experiences zero crossing, in order to accelerate the convergence rate and reduce overshoot. The effectiveness, conditions under which the reset mechanism works, and stability of these schemes are all well analyzed. Finally, some simulation cases are designed and compared with an existing solution. From the simulation results, it can be seen that the designed second scheme greatly improves the performance of the previous scheme in consensus rate, convergence rate, BESS/agent plug and play, load switching, and wide area system application.
\end{abstract}

\begin{keywords}
	Battery energy storage system, multi-agent systems, distributed consensus control, economic dispatch, a PI+Reset controller.
\end{keywords}

\maketitle

\makeAuthorInformation{Yalin Zhang is with the College of Electrical Engineering, Zhejiang University, Hangzhou 310012. Zhongxin Liu and Zengqiang Chen are with the College of Artificial Intelligence, Nankai University, Tianjin 300350, and also with the Tianjin Key Laboratory of Interventional Brain-Computer Interface and Intelligent Rehabilitation, Nankai University, Tianjin 300350, China. (e-mail: zhangyalin@zju.edu.cn; lzhx@nankai.edu.cn; chenzq@nankai.edu.cn).

* Corresponding author.
}

\runningtitle{2025}{Yalin Zhang, Zhongxin Liu, and Zengqiang Chen}{Manuscript Template for the International Journal of Control, Automation, and Systems: ICROS {\&} KIEE}{xxx}{xxxx}{x}
\begin{center}
	N\footnotesize{OMENCLATURE}
\end{center}
\normalsize
\begin{tabbing}
	\hspace{2cm} \= \kill
	BESS\> battery energy storage system\\
	ED\>economic dispatch\\
	MG\> micrigrid\\
	MAS\> multi-agent system\\
	MC\> marginal cost\\
	$\alpha_i$, $\beta_i$\>the cost function coefficients of\\
	$P_i^k$, $f_i$\>the output power and cost function of\\
	$\lambda_i^k$\>the MC of BESS $i$\\
	$P_{\mathrm{D},i}$, $P_{\mathrm{L},i}$\> the load power and the line power loss\\
	$P_{\mathrm{m},i}$, $P_{\mathrm{M},i}$\> the output power upper and lower limits\\
	$n$\>the number of BESSs\\
	$Z_i$\>the line power loss ratio\\
	${\cal G}$, $\cal A$\> the communication topology\\
	\>and its adjacency matrix\\
	$L$\>the Laplacian matrix\\
	$\eta_i(L)$, $\eta_M$\>the eigenvalues and their maximum value\\
	\>of $L$\\
	$\Delta {\hat P}_{a,i}^k$\> the estimated average power mismatch\\
	$\sigma^k$\>a decay feedback gain\\
	$\xi_{\lambda,i}^k$, $\nu_{\lambda,i}^k$\>the control error and its accumulate of $\lambda_i^k$\\
	$\xi_{P,i}^k$, $\nu_{P,i}^k$\>the control error and its accumulate of $\Delta {\hat P}_{a,i}^k$\\
	$\Delta P_i^k$\> the power mismatch\\
	$h_1$ and $h_2$\>the control gains of $\lambda_i^k$\\
	$z_1$ and $z_2$\>the control gains of $\Delta {\hat P}_{a,i}^k$\\
	$\lambda^k$, $\xi_{\lambda}^{k}$, and $\nu_{\lambda}^{k}$\>the stack vectors for $\lambda_i^{k}$, $\xi_{\lambda,i}^{k}$, and $\nu_{\lambda,i}^{k}$\\
	$\Delta\hat{P}_{a}^{k}$ and $\xi_{P}^{k}$\>the stack vectors for $\Delta\hat{P}_{a,i}^{k}$ and $\xi_{P,i}^{k}$\\
	$\nu_{P}^{k}$ and $\Delta P^{k}$\>the stack vectors for $\nu_{P,i}^{k}$ and $\Delta P_i^{k}$\\
	${\cal J}_\lambda$, ${\cal J}_P$\>the jump sets\\
	${\cal F}_\lambda$, ${\cal F}_P$\>the flow sets\\
	$\Phi(L)$\> the matrix of the base system
\end{tabbing}
\section{Introduction}
Recently, a high proportion of renewable energy is used for power generation in smart grids, and its unique randomness and intermittency, however, greatly reduce the quality of electricity \cite{11075888, caleroReviewModelingApplications2022, 10660524}. For this reason, battery energy storage systems (BESSs) are extensively integrated into smart grids, as shown in Fig. \ref{figei}, to suppress supply-demand imbalances and serve as backup power sources, thereby providing more reliable electricity \cite{wangApplicationEnergyStorage2022, caleroReviewModelingApplications2022, yangModellingOptimalEnergy2022, hossainlipuReviewControllersOptimizations2022}. At the same time, some worrying facts have to be noted.
During the charging and discharging circle, the capacity of each battery continuously deteriorates \cite{caleroReviewModelingApplications2022, forero-quinteroProfitabilityAnalysisDemandside2022, rouholaminiReviewModelingManagement2022}, which makes operators have to replace battery units frequently. In addition, the power consumed by the internal impedance of each battery also increases operating costs \cite{caleroReviewModelingApplications2022, forero-quinteroProfitabilityAnalysisDemandside2022, rouholaminiReviewModelingManagement2022}. Therefore, it is necessary to design an efficient economic dispatch (ED) scheme for a BESS network to reduce operational expenses.
\begin{figure}
	\centering
	\renewcommand{\thefigure}{1}
	\includegraphics[width=8cm]{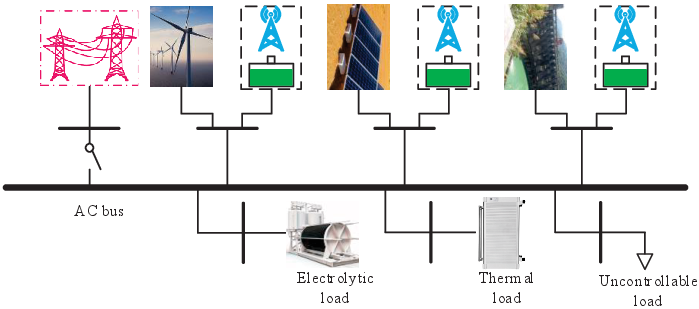} \caption{BESSs in microgrids\label{figei}}
\end{figure}
\par At present, the ED solutions for microgrids MGs mainly include centralized and distributed approaches. The implementation of a centralized solution is simple and feasible, but the disadvantage of being susceptible to single point failure is also evident \cite{caleroReviewModelingApplications2022, rouholaminiReviewModelingManagement2022,bazhang2022}. In addition, the expensive centralized controller and their high bandwidth communication links greatly increase operational expenses \cite{caleroReviewModelingApplications2022, rouholaminiReviewModelingManagement2022,bazhang2022}. Given this, we plan to invest effort in designing a distributed ED scheme for an isolated BESS network in this article.
\par As a typical distributed technology, multi-agent systems (MASs) have strong robustness and scalability. Most of the existing distributed ED solutions are based on MASs, and further, they need to combine methods such as Lagrange multiplier method \cite{zhangConvergenceAnalysisIncremental2012}, gradient search method \cite{Yi2016}, and Newton method \cite{andersonDistributedApproximateNewton2019}, etc. Correspondingly, some distributed schemes such as consensus-based \cite{zhangConvergenceAnalysisIncremental2012}, gradient descent \cite{Yi2016}, and approximate Newton methods \cite{andersonDistributedApproximateNewton2019} have been developed. However, current distributed schemes based on gradient descent and approximate Newton methods do not take into account both dynamic line loss and output limitations of distributed generators. On the contrary, consensus-based solutions, due to their simple structure and ease of modeling, have produced a large number of excellent results under these physical limitations. The consensus-based distributed ED schemes are mainly divided into control schemes with continuous time dynamics and discrete time dynamics.
\par \textit{1). On continuous-time schemes}. For example, in order to save communication resources in a distributed ED scheme, some distributed solutions with event-triggered-based communication mechanisms are developed in \cite{Liu2022b, 9928208}. In addition, in order to accelerate system convergence, some distributed finite \cite{Liu2022b, 10210483}/fixed \cite{Dai2021, Li2022m, Huang2023b}/prescribed \cite{Ji2023} time ED schemes are developed. These solutions with continuous time dynamics effectively solve the ED problem of microgrids by designing marginal cost (MC) consensus controllers and average power mismatch estimators. 
\par \textit{2). On discrete-time schemes}. At present, the consensus-based distributed discrete-time ED schemes mainly focus on solving problems such as privacy protection, communication network attacks, and packet loss, etc. Some researchers in \cite{yeRandomWeightPrivacyPreservingAlgorithm2021} use random communication weights to design a communication mechanism to improve the privacy of communication data, and design an error compensation mechanism to improve the scheme. A differential privacy protection mechanism based on double-layer communication has been developed in \cite{wangDisEHPPCEnablingHeterogeneous2022a}. In addition, the authors developed an event-triggered privacy protection method in \cite{Yan2022a}, which improves the privacy of communication data while saving communication resources. In order to cope with network attacks, virtual power plant based and local event driven solutions have been developed in \cite{liRobustDistributedEconomic2018} and \cite{sahooLocalizedEventDrivenResilient2021a}, respectively. In addition, a push-based method is developed in \cite{wangPushBasedDistributedEconomic2021} to resist gradient noise and communication delay, and is suitable for unbalanced communication networks. In order to alleviate the problem of data packet loss in communication networks, algorithms such as gossip \cite{zhangRobustDistributedSystem2016} and distributed algorithms based on buffer nodes \cite{liConsensusBasedEnergyManagement2023} have been proposed to ensure the normal transmission of power mismatch information.
\par Inspired by the solution to solve the distributed ED problem in microgrids, some researchers design some distributed solutions based on MASs for BESS networks \cite{yuFrequencySynchronizationPower2021b, zhaoDifferentialPrivacyEnergy2022, jinManageDistributedEnergy2022}. A distributed ED scheme considering battery capacity limitations is designed for an isolated BESS network in \cite{yuFrequencySynchronizationPower2021b}. Unfortunately, the line loss in a BESS network is overlooked. Even worse, the cost function of a BESS has not been effectively modeled, but instead used that of other types of microsources. A similar
problem also appears in \cite{zhaoDifferentialPrivacyEnergy2022}, where the authors design a distributed ED scheme with a privacy protection mechanism. It is not until in \cite{jinManageDistributedEnergy2022} that the authors initially solved this problem by considering the cost of charging and discharging losses in
the cost function. 
\par Based on the above analysis, we notice that currently, distributed ED schemes with discrete time dynamics for BESSs/MGs are all proportional control protocols \cite{yuFrequencySynchronizationPower2021b, zhaoDifferentialPrivacyEnergy2022, jinManageDistributedEnergy2022}. These schemes are validated to be effective. However, these protocols exhibit low control accuracy and slow convergence rates, which gives us room to expand our capabilities. As is well known, a PI controller exhibits high control accuracy but may cause overshoot \cite{banosResetControlSystems2011}.
\par In view of this, some researchers have started to improve the PI controller by introducing a reset mechanism \cite{chengResetControlLeaderfollowing2022, huResetControlConsensus2022, mengResetControlSynchronization2019}. We have also designed a continuous time MC consensus controller with a reset mechanism for a distributed ED problem in a grid-connected/isolated BESS network in \cite{10302354, 10488096}. Similar to a finite/fixed/preset time control scheme, we design a distributed ED scheme with continuous time dynamics and a simple and easy to implement PI+Reset controller in these two works to accelerate consensus rate, convergence rate, and control accuracy. Unfortunately, due to previous technological limitations, capacity limitations and dynamic line losses are not taken into account in our previous work. These two physical limitations resulted in a nonlinear and strongly coupled relationship between variables. Considering both of these limitations, the current distributed solutions with continuous time dynamics cannot effectively solve the ED scheme of isolated microgrids. Some researchers conduct a preliminary investigation in \cite{chenDistributedEconomicDispatch2021} into a distributed scheme with discrete-time dynamics that consider both constraints simultaneously. However, the coupling term is not introduced into the control error, and with the application of a P-controller, the solution of ED encounter poor dynamic performance. In view of this, we plan to design a distributed ED scheme with reset-mechanism-based discrete-time dynamics to solve the ED problem of an isolated BESS network with capacity constraints and dynamic line losses. In detail, our contributions are listed as follows:
\begin{enumerate}
	\item In order to achieve optimal power output of each BESS, a distributed ED scheme with reset mechanisms is developed, which ensure MC consensus and the estimated average power mismatch convergences to 0, thereby ensuring supply-demand balance. Compared to the existing solution in \cite{chenDistributedEconomicDispatch2021}, the designed approach exhibits significant advantages in terms of consensus,
	stability, and dynamic performance.
	\item Furthermore, an improved solution is proposed. The feedback term of the estimated average power mismatch in the MC consensus control scheme is incorporated into the closed-loop error to further improve dynamic performance, which is different from the one in \cite{chenDistributedEconomicDispatch2021}. Based on this, the PI+R controller is redesigned. Through simulation results, it can be seen that the dynamic performance of MC and the estimated average power mismatch is greatly improved compared to the previous scheme.
	\item The gain conditions for ensuring the stability of the base system, i.e. regularity, are given, thereby deriving the stability condition for the distributed PI+R controller. The enabling conditions for the reset mechanism are
	also derived. In addition, the stability of MC and the estimated average power mismatch under the designed scheme is cleverly demonstrated by utilizing the characteristics of the cost function and power constraints, which is rarely reported in the literature on distributed ED schemes considering transmission losses.
\end{enumerate}
\par The remaining part of this article is arranged as follows. The ED problem of an isolated BESS network is modeled in the section 2, and the optimal working point of the network is derived from it. In the section 3, two distributed ED schemes with reset mechanisms are constructed and their stability, regularity, and effectiveness are analyzed. In order to test the performance of the designed schemes, some simulation cases are designed in the section 4. Finally, a conclusion is summarized in the section 5.
\section{Preliminaries}
In this section, the optimal output power vector is obtained by applying the Lagrange multiplier method. Subsequently, the control objectives of this article are provided, and the upcoming graph theory is introduced.
\subsection{The Optimal ED for A BESS Network}
\par Taking into account the internal consumption cost and
capacity degradation cost of a battery, the following convex
cost function \cite{10302354, 10488096, Zhang02092025} for BESS $i$ is constructed as
$$f_i=\beta_iP_i^2+\alpha_iP_i,\;i\in\{1,2,\cdots,n\},$$
where $\beta_i=(1+m_i)\pi_i$, $\alpha=(1+m_i)\Omega_i$, $m_i$, $\pi_i$ and $\Omega_i$ are constants and defined in \cite{10302354, 10488096, Zhang02092025}. So, the ED objective
function and its constraints for an isolated BESS network are
$$\begin{array}{l}
	\mathrm{min}\;F=\sum_{i=1}^{n}f_i,\\
	s.t.\;\sum_{i=1}^{n}P_i=\sum_{i=1}^{n}(P_{D,i}+P_{L,i}),
	\;P_{m,i}\le P_i\le P_{M,i},
\end{array}$$
where $P_{D,i}$ is the total active load at bus $i$, $P_{m,i}$ and $P_{M,i}$ are the output power upper and lower limits of BESS $i$. For a bus without any BESS, it is considered to be connected to a BESS with both upper and lower output power limits of 0, such as bus $j$, $P_{M,j} = 0$ and $P_{m,j} = 0$.
\par Construct a Lagrangian function as follows
$$\begin{aligned}
	L_a(P_1,\cdots,P_n)=&F+\omega_1(\sum_{i=1}^{n}(P_{D,i}+P_{L,i})
	-\sum_{i=1}^{n}P_i)\\
	&+\sum_{i=1}^{n}\omega_{2,i}(P_i-P_{M,i})\\
	&+\sum_{i=1}^{n}\omega_{3,i}(P_{m,i}-P_i),
\end{aligned}$$
where $\omega_1$, $\omega_{2,i}$, and $\omega_{3,i}$ are Lagrangian multipliers. Calculate the gradient of the above as follow
$$\frac{\partial L_a(P_1,\cdots,P_n)}{\partial P_i}=2\beta_iP_i+\alpha_i+\omega_1(2Z_i P_i-1),$$
which leads to the optimal condition as follow by letting $\frac{\partial L_a(P_1,\cdots,P_n)}{\partial P_i}=0$
\begin{equation}
	\label{gsmmc}
	\lambda_1=\cdots=\lambda_n=\omega_1^*,
\end{equation}
where $\lambda_i=\frac{2\beta_iP_i+\alpha_i}{1-2Z_iP_i}$ for $i\in\{1, 2, \cdots, n\}$ is the MC of BESS $i$, $\omega_{1}^{*} = \frac{\sum_{i=1}^{n}(D_{i}+\frac{\alpha_{i}}{2\beta_{i}})}{\sum_{i=1}^{n}\frac{1}{2\beta_{i}}}$. Combined with the capacity limitations, the optimal output power of BESS $i$ can be calculated from the following,
\begin{equation}
	\label{power}
	P_i^*=\left\{\begin{matrix}
		P_{m,i},&if\;\frac{\omega_1^*-\alpha_i}{2(\beta_i+Z_i\omega_1^*)}<P_{m,i},\\
		\frac{\omega_1^*-\alpha_i}{2(\beta_i+Z_i\omega_1^*)},&if\;P_{M,i}\le\frac{\omega_1^*-\alpha_i}{2(\beta_i+Z_i\omega_1^*)}\le P_{m,i},\\
		P_{M,i},&if\;\frac{\omega_1^*-\alpha_i}{2(\beta_i+Z_i\omega_1^*)}>P_{M,i}.
	\end{matrix}
	\right.
\end{equation}
\begin{figure}
	\centering
	\includegraphics[width=7cm]{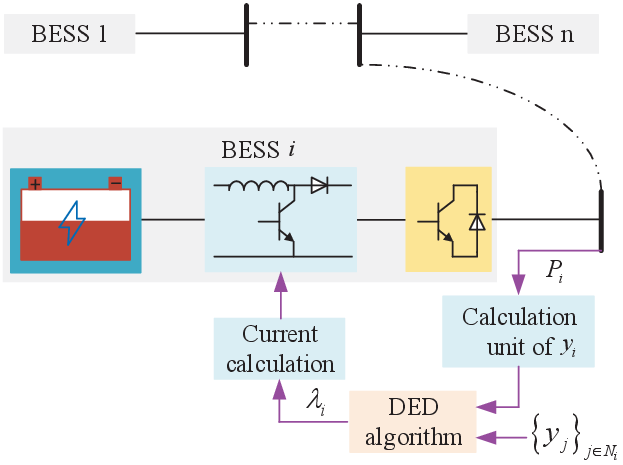} \caption{The ED framework of isolated BESSs\label{figisolate}}
\end{figure}
\subsection{Control Objectives}
\par For an isolated BESS network, its load needs to be shared by the battery units, unlike in grid connected mode where the utility grid compensates for power mismatch. Therefore, for the ED problem of an isolated BESS network, it is necessary to promote MC consensus to reduce cost, while ensuring that the output power of each BESS does not exceed the limitations and meeting supply-demand balance. Based on the previous analysis, we have summarized these goals based on a ED framework shown in Fig. \ref{figisolate} in detail as follows:
\begin{itemize}
	\item For the isolated mode, MCs of BESSs reach consensus
	asymptotically, i.e.,
	$$\lim\limits_{k\to+\infty}\vert\lambda_i-\lambda_j\vert=0,\;\forall i\in\{1,2,\cdots,n\}.$$
	\item For the isolated mode, the power calculated by \eqref{power} can meet the required total power containing line loss and load power, i.e.,
	$$\sum_{i=1}^{n}(P_{D,i}+P_{L,i}-P_i)=0.$$
\end{itemize}
\subsection{Graph Theory}
\par In this article, an isolated BESS network is governed by MASs. For a network containing $n$ BESSs, there are $n$ agents in this MAS corresponding to each BESS in this network. The communication topology between agents can be modeled as ${\cal G}({\cal V},{\cal E})$, where $v_i$ for $i\in\{1,2,\cdots,n\}$ and ${\cal V}=\{v_1,v_2,\cdots,v_n\}$ represent its vertices and their sets, $(v_i,v_j)$ and ${\cal E}$ represent its edges and their sets,
respectively. The agent $j$ for $j\in\{1,2,\cdots,n\}$ is considered
as a neighbor of agent $i$ if agent $i$ can access agent $j$, and this is denoted as $(v_i,v_j)\in{\cal E}$. The graph ${\cal G}$ is called undirected or bidirectional if $(v_i,v_j)\in{\cal E}$ inevitably leads to $(v_j,v_i)\in{\cal E}$. Two useful matrices are defined here. One is called adjacency matrix ${\cal A}=[a_{ij}]_{n\times n}$, where $a_{ij}=1$ if $(v_i,v_j)\in{\cal E}$, otherwise $a_{ij}=0$. The other is the degree matrix ${\cal D}=diag(d_1,d_2,\cdots,d_n)$, where $d_i=\sum_{j=1}^{n}a_{ij}$. So, the Laplace matrix corresponding to $\cal G$ is calculated as $L={\cal D}-{\cal A}$.
\begin{assumption}
	\label{assu1}
	Assuming that $\cal G$ used in this article is undirected and connected, then $L$ is symmetric and has a unique simple zero eigenvalue.
\end{assumption}
\begin{definition}
	$\eta_i(L)$ denotes taking the $i$-th eigenvalue of $L$ such that $0=\eta_1(L)\le\eta_2(L)\le\cdots\le\eta_n(L)=\eta_M$.
\end{definition}
\section{Distributed PI+Reset Schemes for Discrete Time Economic Dispatch of An Isolated BESS Network}
In this section, two distributed ED schemes with PI+R controllers are provided. The base system, consensus, and stability of the two schemes are analyzed, and the effectiveness of the designed schemes are explained accordingly.
\subsection{Distributed Control Scheme for Achieving Control Objectives}
For the isolated mode, the solution to the ED problem of an isolated BESS network mainly includes two parts. One is a MC consensus scheme, and the other is an average power mismatch estimation scheme. So, a distributed ED scheme with PI+R controllers, as shown in Fig. \ref{figpir}, is as follows,
\begin{subequations}
	\label{3}
	\begin{equation}
		\label{lambdaxi}
		\xi_{\lambda,i}^k=\sum_{i=1}^na_{ij}(\lambda_i^k-\lambda_j^k),
	\end{equation}
	\begin{equation}
		\label{res1}
		\nu_{\lambda,i}^{k}=\begin{cases}\nu_{\lambda,i}^{k-1}+\xi_{\lambda,i}^{k}, if \xi_{\lambda,i}^{k-1}\xi_{\lambda,i}^{k}>0,\\\xi_{i}^{k}, if \xi_{\lambda,i}^{k-1}\xi_{\lambda,i}^{k}\leq0,\end{cases}
	\end{equation}
	\begin{equation}
		\label{lambdai}
		\lambda_{i}^{k+1}=\lambda_{i}^{k}-h_{1}\xi_{\lambda,i}^{k}-h_{2}\nu_{\lambda,i}^{k}+\sigma^{k}\Delta\hat{P}_{a,i}^{k},
	\end{equation}
	\begin{equation}
		\label{xiP}
		\xi_{P,i}^k=\sum_{i=1}^na_{ij}(\Delta\hat{P}_{a,i}^k-\Delta\hat{P}_{a,j}^k),
	\end{equation}
	\begin{equation}
		\label{DPnu}
		\nu_{P,i}^{k}=\begin{cases}\nu_{P,i}^{k-1}+\xi_{P,i}^{k-1}, if \xi_{P,i}^{k-1}\xi_{P,i}^{k}>0,\\\xi_{P,i}^{k-1}, if \xi_{P,i}^{k-1}\xi_{P,i}^{k}\leq0,\end{cases}
	\end{equation}
	\begin{equation}
		\label{DPi}
		\Delta\hat{P}_{a,i}^{k+1}=\Delta\hat{P}_{a,i}^k-z_1\xi_{P,i}^k-z_2\nu_{P,i}^k+\Delta P_i^{k+1}-\Delta P_i^k,
	\end{equation}
\end{subequations}
where $\Delta\hat{P}_{a,i}^k$ is the average power mismatch estimated by agent $i$, $\sigma^{k}$ is a designed feedback gain, $h_1$, $h_2$, $z_1$, and $z_2$ are the gains to be designed, $\Delta P_i^k=P_{D,i}-P_i^k$. Then, a compact form of this scheme is
\begin{figure}
	\centering
	\includegraphics[width=7cm]{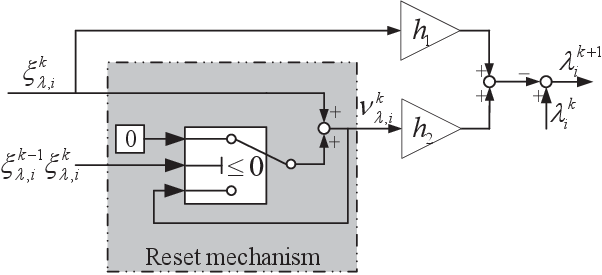} \caption{A PI controller with a error-dependent reset mechanism.\label{figpir}}
\end{figure}
\begin{subequations}
	\begin{equation}
		\xi_\lambda^k=L\lambda^k,
	\end{equation}
	\begin{equation}
		\label{lambdanu}
		\nu_\lambda^k=\begin{cases}\quad\nu_\lambda^{k-1}+\xi_\lambda^{k-1}, if k\in\mathcal{F}_\lambda,\\\nu_\lambda^{(k-1)+}+\xi_\lambda^{k-1}, if k\in\mathcal{J}_\lambda,\end{cases}
	\end{equation}
	\begin{equation}
		\label{lambda}
		\lambda^{k+1}=\lambda^k-h_1\xi_\lambda^k-h_2\nu_\lambda^k+\sigma^k\Delta\hat{P}_a^k,
	\end{equation}
	\begin{equation}
		\xi_P^k=L\Delta\hat{P}_a,
	\end{equation}
	\begin{equation}
		\label{Pnu}
		\nu_P^k=\left\{\begin{matrix}\nu_P^{k-1}+\xi_P^{k-1}, if k\in\mathcal{F}_P,\\\nu_P^{(k-1)+}+\xi_P^{k-1}, if k\in\mathcal{J}_P,\end{matrix}\right.
	\end{equation}
	\begin{equation}
		\label{pmf}
		\Delta\hat{P}_{a}^{k+1}=\Delta\hat{P}_{a}^{k}-z_{1}\xi_{P}^{k}-z_{2}\nu_{P}^{k}+\Delta P^{k+1}-\Delta P^{k},
	\end{equation}
\end{subequations}
where $\lambda^k,\xi_{\lambda}^{k},\nu_{\lambda}^{k},\Delta\hat{P}_{a}^{k},\xi_{P}^{k},\nu_{P}^{k}$, and $\Delta P^{k}$ are, respectively, stack vectors for $\lambda_i^{k},\xi_{\lambda,i}^{k},\nu_{\lambda,i}^{k},\Delta\hat{P}_{a,i}^{k},\xi_{P,i}^{k},\nu_{P,i}^{k}$, and $\Delta P_i^{k}$ for $\forall i\in \{ 1, 2, \cdots , n\} , \mathcal{J} _{\lambda }: \{ k| k\in l_{\lambda }\}$ ($\mathcal{J} _{P}: \{ k| k\in l_{P}\}$) and $\mathcal{F}_\lambda:\{k|k\notin l_\lambda\}$ ($\mathcal{F}_P:\{k|k\notin l_P\}$), respectively, are a jump set and a flow set, and $l_\lambda:\{k|\xi_{\lambda,i}^{k-1}\xi_{\lambda,i}^k\leq0,i\in$ $\{ 1, 2, \cdots , n\} \}$ ($l_{P}: \{ k| \xi _{P, i}^{k- 1}\xi _{P, i}^{k}\leq 0, i\in \{ 1, 2, \cdots , n\} \}$) is a jump instant set, $\nu_{\lambda}^{(k-1)+}$ ($\nu_{P}^{(k-1)+}$) is a stack vector obtained by resetting certain components of $\nu _{\lambda }^{( k- 1) }$ ($\nu _{P}^{( k- 1) }$) that meet the reset criteria to 0.
\subsection{Analysis of The Base System}
Considering capacity limitation, $\|\Delta P^{k+1}-\Delta P^k\|<+\infty $. Thus, the base system of \eqref{pmf} is
\begin{equation}
	\label{base1}
	x_P^{k+1}=\Phi(L)x_P^k,
\end{equation}
where $x_P^k=[(\xi_P^k)^T\quad(\nu_P^k)^T]^T$, $ \Phi(L)=\begin{bmatrix}I-z_1L&-z_2L\\I&I\end{bmatrix}$. Thus, the characteristic polynomial of $\Phi(L)$ is
$$\begin{aligned}|\mu I-\Phi(L)|=\prod_{i=1}^{n}|&\mu_{i}^{2}+(z_{1}\eta_{i}(L)-2)\mu_{i}+1\\
	&+\eta_{i}(L)(z_{2}-z_{1})|,\end{aligned}$$
which gives
\begin{equation}
	\mu_i=\frac{2-z_1\eta_i(L)\pm\sqrt{z_1^2\eta_i^2(L)-4z_2\eta_i(L)}}{2}.
\end{equation}
For \eqref{base1}, its properties are summarized in Lemma \ref{eig1} and its proof.
\begin{lemma}
	\label{eig1}
	Under Assumption \ref{assu1}, the base system \eqref{base1}, defined in \cite{banosResetControlSystems2011}, is asymptotically stable if $\frac{2(z_1+z_2)}{z_{1}^{2}}\ge\eta_M$ and $\frac{4z_2}{z_{1}^{2}}>\eta_m$, or $\frac{4z_{2}}{z_{1}^{2}}\le\eta_{m}$ with $z_{1} \leq \frac{2}{\eta_M}$ or $z_{1}>\frac{2}{\eta_M}$ and $2z_{1}-z_{2}<\frac{4}{\eta_M}$. Furthermore, the matrix $\Phi(L)$ has at least a pair of conjugate complex eigenvalues if $\frac{2(z_1+z_2)}{z_{1}^{2}}>\eta_M$ and $\frac{4z_2}{z_{1}^{2}}>\eta_m$,
	and the base system is so-called regular \cite{banosResetControlSystems2011}.
\end{lemma}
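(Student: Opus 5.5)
Since $L$ is symmetric (Assumption \ref{assu1}), I will diagonalize it as $L=U\,\mathrm{diag}(\eta_1(L),\dots,\eta_n(L))\,U^T$ with $U$ orthogonal, and apply the congruence $\mathrm{diag}(U,U)$ to $\Phi(L)$. After a permutation this makes $\Phi(L)$ block diagonal with $2\times 2$ blocks
$$\Phi_i=\begin{bmatrix}1-z_1\eta_i&-z_2\eta_i\\1&1\end{bmatrix},\quad i=1,\dots,n.$$
The block $\Phi_i$ has characteristic polynomial $p_i(\mu)=\mu^2+(z_1\eta_i-2)\mu+1+\eta_i(z_2-z_1)$, matching the factorization in the paper. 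Asymptotic stability of \eqref{base1} then reduces to requiring every root of every $p_i$ to lie in the open unit disk. For the zero eigenvalue $\eta_1=0$, the block has a double root at $1$. I will handle this as the paper implicitly does: the base system is considered on the disagreement subspace $\mathbf 1^\perp$, since $\xi_P^k=L\Delta\hat P_a^k\in\mathbf 1^\perp$ and $\nu_P$ accumulates such vectors. Accordingly, I restrict attention to $\eta_i\in[\eta_m,\eta_M]$ with $\eta_m=\eta_2(L)>0$.

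For each $\eta\in[\eta_m,\eta_M]$ I will apply the Jury (Schur–Cohn) test to $\mu^2+a\mu+b$, which requires $|b|<1$, $1+a+b>0$ and $1-a+b>0$, and split into two cases according to the sign of the discriminant $z_1^2\eta^2-4z_2\eta$.

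\textbf{Case 1: complex roots.} Here $4z_2/z_1^2>\eta$, and the roots have modulus $\sqrt{b}$. The condition becomes $b=1+\eta(z_2-z_1)<1$, together with the real-part conditions.

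\textbf{Case 2: real roots.} Here $4z_2/z_1^2\le\eta$. The conditions $1\pm a+b>0$ read $z_2\eta>0$ and $4-(2z_1-z_2)\eta>0$. The second gives $\eta<4/(2z_1-z_2)$, which is what produces the alternative $z_1>2/\eta_M$ with $2z_1-z_2<4/\eta_M$. The subcase $z_1\le 2/\eta_M$ keeps $1-z_1\eta\ge -1$, and I will show this suffices.

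Since the conditions are monotone in $\eta$, it is enough to check the extreme eigenvalues $\eta_m$ and $\eta_M$. Checking Case 1 at $\eta_M$ should yield the inequality $2(z_1+z_2)/z_1^2\ge\eta_M$ from the bound on $1-a+b$ combined with the discriminant relation. I will assemble the stated disjunction by matching each case to its extreme eigenvalue.

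For the second claim, if $4z_2/z_1^2>\eta_m$ then the discriminant at $\eta_m$ is negative. The block $\Phi_m$ therefore has a nonreal conjugate pair, so $\Phi(L)$ has at least one conjugate complex pair. Regularity in the sense of \cite{banosResetControlSystems2011} means the base system has oscillatory modes, so that the error genuinely crosses zero and resets occur. I will argue that these complex modes on a nonconstant eigenvector force sign changes of the components of $\xi_P^k$ infinitely often, while asymptotic stability, which uses the strict inequality $2(z_1+z_2)/z_1^2>\eta_M$, keeps the flow well defined.

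The main obstacle I expect is the bookkeeping of the Jury inequalities across the real and complex cases. Matching them exactly to the paper's stated gain conditions may require verifying that $|b|<1$ is implied in Case 1. Otherwise the stated condition may need the additional implicit requirement $z_1>z_2$, in which case I will point that out explicitly.
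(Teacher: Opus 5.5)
Your setup is the right one, and it is the same per-eigenvalue root-location route the paper uses. You diagonalize orthogonally, you note that the $\eta_1=0$ block is a Jordan block at $1$ (so the claim only makes sense on $\mathbf 1^\perp\times\mathbf 1^\perp$, a point the paper's proof ignores), and you apply the Jury test. The gap is that you then try to recover the stated gain conditions. That cannot succeed, because the lemma is false as written.

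\textbf{Case 1 (complex roots).} Here $a^2<4b$ already gives $1\pm a+b>(1-\sqrt b)^2\ge 0$. So the Jury test reduces to $b=1-(z_1-z_2)\eta<1$, i.e.\ $z_2<z_1$. No combination of $1-a+b>0$ with the discriminant produces $2(z_1+z_2)/z_1^2\ge\eta_M$. The paper's inequality comes from taking $|\mu|^2=\tfrac14[(2-z_1\eta)^2+\Delta]=1-(z_1+z_2)\eta+\tfrac12 z_1^2\eta^2$, the formula shown in its later remark on $|\mu_{ij}|$. That formula has the sign of $\Delta$ flipped; the correct value is $\tfrac14[(2-z_1\eta)^2-\Delta]=1-(z_1-z_2)\eta$.

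In addition, $4z_2/z_1^2>\eta_m$ only makes the roots complex at $\eta_m$. So your step of checking Case 1 at $\eta_M$ is not available: the roots there may be real, and the first branch does not control them.

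\textbf{Case 2 (real roots).} You dropped $|b|<1$. It does follow in the sub-branch $z_1\le 2/\eta_M$, since there $z_2\le z_1^2\eta/4\le z_1/2$. It does not follow in the sub-branch $z_1>2/\eta_M$.

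\textbf{Counterexamples.}
\begin{enumerate}
\item Two-node graph ($\eta_m=\eta_M=2$), $z_1=0.1$, $z_2=0.2$. Every hypothesis of the first branch and of the ``furthermore'' clause holds strictly, yet the complex pair has $|\mu|^2=1.2$.
\item Three-node path (spectrum $0,1,3$), $z_1=1$, $z_2=\tfrac12$. The first branch holds and $z_2<z_1$, yet the block at $\eta=3$ has characteristic polynomial $\mu^2+\mu-\tfrac12$, with root $-\tfrac{1+\sqrt3}{2}$.
\item Two-node graph, $z_1=2.5$, $z_2=3.1$. The second real-root branch holds ($4z_2/z_1^2=1.984\le 2$, $z_1>1$, $2z_1-z_2=1.9<2$), yet $\mu^2+3\mu+2.2$ has both roots below $-1$.
\end{enumerate}
So your hedge about $z_1>z_2$ points the right way, but adding it alone does not rescue the first branch, as the path example shows.

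\textbf{What your computation actually proves.} The quantities $1+a+b=z_2\eta$, $1-b=(z_1-z_2)\eta$ and $1-a+b=4-(2z_1-z_2)\eta$ are affine in $\eta$. Hence the base system is asymptotically stable on $\mathbf 1^\perp\times\mathbf 1^\perp$ if and only if $0<z_2<z_1$ and $(2z_1-z_2)\eta_M<4$; the condition $b>-1$ is then automatic. You should state and prove this corrected criterion instead of the lemma as written.

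\textbf{Remaining claims.} The complex-pair claim is fine with $\eta_m=\eta_2>0$. Your regularity argument, however, also needs the complex mode at $\eta_m$ to be excited by the initial state and to be the slowest-decaying mode. You have not established either, and the paper gives no argument for this part.
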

\begin{proof}
	We first consider the case where $\Delta_i<0$ for $i\in\{1,2,\cdots,n\}$, i.e., $\frac{4z_2}{z_{1}^{2}}>\eta_{m}$, where $\Delta_i=z_1^2\eta_i^2(L)-4z_2\eta_i(L)$. By calculation, it can be inferred that $|\mu_{i,j}|<1$ for $\forall j\in\{1,2\}$ if $\frac{2(z_1+z_2)}{z_{1}^{2}}\ge\eta_M$.
	\par Next, we consider a real eigenvalue caused by $\Delta_i\geq0$, that is, $\frac {4z_{2}}{z_{1}^{2}}\leq \eta_m$. If $2-z_{1}\eta _{i}(L)\ge 0$ and $2-z_{1}\eta_{i}(L)+\sqrt{\Delta_{i}}<2$, i.e., $z_{1} \leq \frac{2}{\eta_M}$, $|\mu_{i,j}|< 1$ for $j \in \{1,2\}$. Besides, if $2-z_{1}\eta_{i}(L)<0$ and $(2z_{1}-z_{2})\eta_{i}(L)<4$, i.e., $z_{1}>\frac{2}{\eta_M}$ and $2z_{1}-z_{2}<\frac{4}{\eta_M}$, $|\mu_{i,j}|< 1$ for $j \in \{1,2\}$.
	\par Thus, it can be asserted that $\Phi$ has at least a pair of conjugate
	complex eigenvalues if $\frac{4z_{2}}{z_{1}^{2}}>\eta_{m}$. Furthermore, under $\frac{2(z_1+z_2)}{z_{1}^{2}}\ge\eta_M$, the base system \eqref{base1} is asymptotically stable. Besides, if $\frac{4z_{2}}{z_{1}^{2}}\le\eta_{m}$ with $z_{1} \leq \frac{2}{\eta_M}$ or $z_{1}>\frac{2}{\eta_M}$ and $2z_{1}-z_{2}<\frac{4}{\eta_M}$, the base system \eqref{base1} is asymptotically stable. So far, Lemma \ref{eig1} can be derived.
\end{proof}
\par Based on Lemma \ref{eig1}, the base system of \eqref{lambda} with \eqref{lambdanu} can be represented as \eqref{base2},
\begin{equation}
	\label{base2}
	x_\lambda^{k+1}=\Phi(L)x_\lambda^k,
\end{equation}
where $x_\lambda^k=[(\xi_\lambda^k)^T\quad(\nu_\lambda^k)^T]^T$. Referring to Lemma \ref{eig1}, we give Lemma \ref{eig2} without proof.
\begin{lemma}
	\label{eig2}
	Under Assumption \ref{assu1}, if $\frac{2(h_1+h_2)}{h_{1}^{2}}\ge\eta_M$ and $\frac{4h_2}{h_{1}^{2}}>\eta_m$, the base system \eqref{base2} is asymptotically stable, i.e., regular \cite{banosResetControlSystems2011}.
\end{lemma}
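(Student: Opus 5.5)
The base system \eqref{base2} has the same structure as \eqref{base1}, with $(z_1,z_2)$ replaced by $(h_1,h_2)$. So I would reproduce the eigenvalue argument of Lemma \ref{eig1} in that setting.

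\textbf{Spectral decomposition.} Under Assumption \ref{assu1}, $L$ is symmetric, so $L=U\,\mathrm{diag}(\eta_1(L),\dots,\eta_n(L))\,U^T$ with $U$ orthogonal. Applying the congruence $\mathrm{diag}(U,U)$ and a permutation block-diagonalizes $\Phi(L)$ into the $2\times2$ blocks
$$\Phi_i=\begin{bmatrix}1-h_1\eta_i(L)&-h_2\eta_i(L)\\1&1\end{bmatrix}.$$
These blocks give the factorized characteristic polynomial and the roots
$$\mu_{i}=\frac{2-h_1\eta_i\pm\sqrt{\Delta_i}}{2},\qquad \Delta_i=h_1^2\eta_i^2-4h_2\eta_i .$$
Asymptotic stability then reduces to $|\mu_{i,j}|<1$ for every $i$ with $\eta_i>0$.

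The block with $\eta_1=0$ has the double eigenvalue $1$. I would handle it by restricting to the invariant subspace relevant to the error. Since $\xi_\lambda^k=L\lambda^k$ and $\nu_\lambda^k$ is a sum of such terms, both lie in $\mathrm{range}(L)=\mathbf{1}^\perp$. On that subspace only $\eta_2,\dots,\eta_n$ appear, so $\eta_m$ is read as $\eta_2$, the smallest nonzero eigenvalue.

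\textbf{Complex-root bound.} The condition $\frac{4h_2}{h_1^2}>\eta_m$ should give $\Delta_i<0$ for the relevant indices. Strictly, $\Delta_i<0$ means $\eta_i<\frac{4h_2}{h_1^2}$ for every nonzero $\eta_i$, which needs the bound against $\eta_M$. I would note this and interpret the hypothesis as in Lemma \ref{eig1}, so that at least one conjugate pair exists. For complex roots,
$$|\mu_i|^2=\det\Phi_i=1+(h_2-h_1)\eta_i .$$
In the rewritten characteristic polynomial the constant term appears as $1+\eta_i(h_2-h_1)$, and I would check the sign conventions against the matrix entries carefully. Requiring $|\mu_i|<1$ gives $(h_1-h_2)\eta_i>0$.

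\textbf{Real-root case and the main obstacle.} For real roots I would use the Jury/Schur conditions on $p(\mu)=\mu^2+a\mu+b$, namely $|b|<1$ and $|a|<1+b$. With $a=h_1\eta_i-2$ and $b=1+\eta_i(h_2-h_1)$, the condition $1+b+a>0$ gives $h_2\eta_i>0$. The condition $1+b-a>0$ gives $4+(h_2-2h_1)\eta_i>0$, that is, $\eta_i(2h_1-h_2)<4$. Imposing these at the worst case $\eta_M$ yields a bound of the form $\frac{2(h_1+h_2)}{h_1^2}\ge\eta_M$ after combining with the discriminant relation. The main obstacle is reconciling these Jury inequalities exactly with the stated gain condition, because the statement's formula may encode a slightly different algebra. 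I would carry out the Jury test first and then show that the stated condition implies it, possibly with strict inequality needed at equality.

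\textbf{Conclusion.} Once all nonzero-mode eigenvalues lie strictly inside the unit circle, \eqref{base2} is asymptotically stable on $\mathbf{1}^\perp$. Asymptotic stability together with a complex pair gives regularity in the sense of \cite{banosResetControlSystems2011}, by the same reasoning as in Lemma \ref{eig1}.
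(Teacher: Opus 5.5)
\textbf{There is a genuine gap: the final step of your plan cannot be carried out, because the lemma is false as written.} Your eigenvalue algebra is correct, and it is exactly what defeats that step. The stated gain condition does not imply the Schur--Jury conditions, so there is nothing to reconcile.

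On $\mathbf{1}^\perp$, your inequalities give the exact condition $0<h_2<h_1$ and $(2h_1-h_2)\eta_M<4$. The remaining requirement $(h_1-h_2)\eta_i<2$ from $|b|<1$ follows from these two. Neither part is implied by $\frac{2(h_1+h_2)}{h_1^2}\ge\eta_M$ and $\frac{4h_2}{h_1^2}>\eta_m$:
\begin{itemize}
\item If $h_2\ge h_1>0$, the first hypothesis gives $\eta_i\le\eta_M\le\frac{2(h_1+h_2)}{h_1^2}\le\frac{4h_2}{h_1^2}$. So every nonzero mode has $\Delta_i\le 0$, and by your own formula $|\mu_i|^2=1+(h_2-h_1)\eta_i\ge 1$. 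Concretely, two agents ($\eta_2=2$) with $h_1=0.1$, $h_2=1$ satisfy both hypotheses ($220\ge 2$, $400>2$) yet produce a complex pair with $|\mu|^2=2.8$.
\item The condition fails even when $h_2<h_1$. For the path on three vertices (spectrum $\{0,1,3\}$), $h_1=1$ and $h_2=0.55$ give $3.1\ge 3$ and $2.2>1$. But the $\eta=3$ block has characteristic polynomial $\mu^2+\mu-0.35$, which has a root near $-1.27$; indeed $(2h_1-h_2)\eta_M=4.35>4$.
\end{itemize}

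Your claim that the Jury inequalities combine with the discriminant relation into $\frac{2(h_1+h_2)}{h_1^2}\ge\eta_M$ is therefore unfounded. That bound does not re-encode the Jury test. It comes from a sign slip in the modulus of the complex roots: using $\frac{(2-h_1\eta_i)^2+\Delta_i}{4}=1-(h_1+h_2)\eta_i+\frac{1}{2}h_1^2\eta_i^2$ instead of the correct $\frac{(2-h_1\eta_i)^2-\Delta_i}{4}=1-(h_1-h_2)\eta_i$. The wrong formula is the one in the gain-discussion remark after Theorem~\ref{TH2}. The paper gives no separate proof of this lemma; it defers to the argument of Lemma~\ref{eig1}, which contains the same slip, so following the paper cannot close your gap either.

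What your route does establish is a corrected statement. The base system restricted to the invariant subspace $\mathbf{1}^\perp\times\mathbf{1}^\perp$ is asymptotically stable if and only if $0<h_2<h_1$ and $(2h_1-h_2)\eta_M<4$. The condition $\frac{4h_2}{h_1^2}>\eta_m$, with $\eta_m$ the smallest nonzero eigenvalue, only additionally guarantees a conjugate pair. Two of your other points are right and are glossed over in the paper:
\begin{itemize}
\item The zero mode is a Jordan block at $1$, so restricting to $\mathbf{1}^\perp$ is genuinely necessary.
\item The hypothesis $\frac{4h_2}{h_1^2}>\eta_m$ only makes the slowest mode complex, not all modes.
\end{itemize}
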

\par Based on Lemmas \ref{eig1} and \ref{eig2}, we provide the following theorem
regarding the enabling conditions of the reset mechanisms in the designed PI+R controllers.
\begin{theorem}
	\label{TH1}
	Assume that the communication graph of the MAS governing isolated BESSs satisfies Assumption \ref{assu1}. Thus, the reset control system \eqref{base2} with \eqref{lambdanu} and $\Phi(L)$ is regular with at least one reset instant if $\frac{2(h_1+h_2)}{h_{1}^{2}}\ge\eta_M$ and $\frac{4h_2}{h_{1}^{2}}>\eta_m$. And the reset control system \eqref{base1} with \eqref{Pnu} and $\Phi(L)$ is regular with at least one reset instant if $\frac{2(z_1+z_2)}{z_{1}^{2}}\ge\eta_M$ and $\frac{4z_2}{z_{1}^{2}}>\eta_m$.
\end{theorem}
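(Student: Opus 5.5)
The plan is to reduce Theorem \ref{TH1} to Lemmas \ref{eig1} and \ref{eig2} plus a standard fact about discrete-time linear systems with complex eigenvalues: a stable oscillatory mode forces the error to change sign. I carry out the argument for \eqref{base2} with \eqref{lambdanu}. The case \eqref{base1} with \eqref{Pnu} is identical with $(h_1,h_2)$ replaced by $(z_1,z_2)$.

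\textbf{Step 1: the base system is regular.} By Lemma \ref{eig2}, the conditions $\frac{2(h_1+h_2)}{h_1^2}\ge\eta_M$ and $\frac{4h_2}{h_1^2}>\eta_m$ give $\rho(\Phi(L))<1$ on the subspace orthogonal to the consensus direction. The zero eigenvalue of $L$ yields the trivial mode, since $\xi_\lambda^k=L\lambda^k$ has no component along $\mathbf 1$. Hence the base system is regular in the sense of \cite{banosResetControlSystems2011}. Because the reset map $\nu^{(k-1)}\mapsto\nu^{(k-1)+}$ only zeroes components, it is a contractive projection. So between resets the trajectory follows the stable flow, and each reset does not increase the relevant norm of the $\nu$ part. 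This gives regularity of the reset system, provided resets do not accumulate pathologically; in discrete time they cannot, since at most one reset per step is possible.

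\textbf{Step 2: at least one reset instant.} Diagonalize with $L=U\,\mathrm{diag}(\eta_i)U^T$. In modal coordinates, each mode $i\ge2$ obeys a $2\times2$ recursion whose eigenvalues are
\[
\mu_i=\tfrac{2-h_1\eta_i\pm\sqrt{h_1^2\eta_i^2-4h_2\eta_i}}{2}.
\]
Under $\frac{4h_2}{h_1^2}>\eta_m$, this discriminant is negative for at least one mode, as in the proof of Lemma \ref{eig1}. For that mode the eigenvalues are $r e^{\pm j\theta}$ with $0<\theta<\pi$ and $r<1$. Its modal error then evolves as $r^k(c_1\cos k\theta+c_2\sin k\theta)$, which changes sign within every window of length about $\lceil\pi/\theta\rceil$.

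Since $\xi_\lambda^k=\sum_i U_{\cdot i}\,(\text{mode } i)$, I pick an agent $i$ with nonzero weight on this oscillatory mode. Suppose the initial error excites that mode; otherwise the reset is vacuous, and I add this as a generic-initial-condition remark. Then $\xi_{\lambda,i}^k$ cannot keep a constant sign forever. Hence some $k$ satisfies $\xi_{\lambda,i}^{k-1}\xi_{\lambda,i}^k\le0$, i.e. $k\in l_\lambda$, and $\mathcal J_\lambda\neq\emptyset$.

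\textbf{Main obstacle.} The delicate point is the sign-change argument for a single component when several modes are superposed. A dominant real, non-oscillating mode could in principle mask the oscillation. The first remedy I would try is to assume, as the theorem implicitly does, that the complex pair governs the slowest decay, or to compare magnitudes asymptotically. Specifically, if all modes of a component were real and positive, then $\xi_{\lambda,i}^k$ would be eventually of fixed sign. A nonzero projection onto a complex pair whose modulus is at least that of the real modes forces an eventual sign alternation, because $\cos(k\theta+\phi)$ is negative infinitely often. As a fallback, I would simply use the existence of a complex pair, which follows from Lemma \ref{eig1}, together with the regularity of the base system. Combined with the reset-system framework of \cite{banosResetControlSystems2011}, these yield the claim for both \eqref{base1} and \eqref{base2}.
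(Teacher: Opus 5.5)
\textbf{Verdict: your proposal follows the paper's route, but Step~2 has a genuine gap; the missing step is integral action, not oscillation.}

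Theorem~\ref{TH1} has no separate proof in the paper. It is read off from the regularity in Lemma~\ref{eig2} and the complex pair in Lemma~\ref{eig1}, which is exactly your route.

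\textbf{The gap in Step~2.} The step you flag as the main obstacle is not closed, either by your patches or by the paper. A complex pair among the slowest modes does not force any $\xi_{\lambda,i}$ to change sign.
\begin{itemize}
\item Assuming the initial data excite that mode is an extra hypothesis, not part of the theorem.
\item The dominance you would need does not follow from the gain conditions. Take the star graph on $19$ nodes ($\eta_m=1$, $\eta_M=19$) with $h_1=0.1$, $h_2=0.0026$. Both inequalities hold. Yet the hub mode $\eta=19$ has the real eigenvalue $\approx0.974$, larger than the modulus $\approx0.950$ of the complex pairs at $\eta=1$. The hub eigenvector has no zero entries, so along a generic base trajectory every component is eventually of one sign. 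The hub component also has no weight on the complex pairs at all.
\item Your fallback simply restates the conclusion.
\end{itemize}

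\textbf{How to close it.} Start the integrator at $\nu_\lambda^0=0$ and suppose no reset ever occurs. Then the trajectory is the base trajectory, and each $\xi_{\lambda,i}^k$ is nonzero with a fixed sign $s_i$. Hence $s_i\nu_{\lambda,i}^k=\sum_{j<k}|\xi_{\lambda,i}^j|\ge|\xi_{\lambda,i}^0|>0$ for all $k\ge1$. But $1_n^T\xi_\lambda^k=1_n^T\nu_\lambda^k=0$ keeps the state in the subspace on which the base system is Schur, so $\nu_\lambda^k\to0$, a contradiction. This uses neither complex eigenvalues nor generic data, and it carries over verbatim to \eqref{base1} with \eqref{Pnu}. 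The zero initialisation matters: started on the hub-mode eigenvector for $0.974$, the system never resets.

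\textbf{Step~1 does not give regularity of the reset system, if regularity means stability.}
\begin{itemize}
\item Schur stability does not make $\Phi(L)$ a contraction: its block row $[I\;\;I]$ already gives $\|\Phi(L)\|_2\ge\sqrt2$. So alternating stable flow with norm-nonincreasing jumps proves nothing.
\item Zeroing only some $\nu_{\lambda,i}$ generally gives $1_n^T\nu_\lambda^{+}\neq0$. That component lies on the eigenvalue $1$ that $\eta_1=0$ contributes to $\Phi(L)$, and the flow never damps it. Your restriction to the complement of the consensus direction is therefore lost at the first reset.
\end{itemize}

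\textbf{A caution inherited from the paper.} The bound $r<1$ you import from Lemma~\ref{eig2} does not follow from the stated gains. For a complex pair, $|\mu|^2$ equals the determinant of the modal block $\begin{bmatrix}1-h_1\eta_i&-h_2\eta_i\\1&1\end{bmatrix}$, namely $1+(h_2-h_1)\eta_i$. Such modes are stable only if $h_2<h_1$, which the two inequalities do not enforce (take $h_2=2h_1$ with $h_1$ small). The Jury test gives the actual Schur conditions on the disagreement subspace: $0<h_2<h_1$ and $(2h_1-h_2)\eta_M<4$. This Schur property is what both your argument and the fix above really need.
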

\begin{remark}
	Here, the gain condition in Theorem 1 is obtained by analyzing the eigenvalues of the base system \eqref{base2}. In contrast, in \cite{chenDistributedEconomicDispatch2021}, the gain conditions given by the Gale disk theorem are more conservative.
\end{remark}
\subsection{The Effectiveness of The Designed Scheme}
\par Define two error vectors $\delta_\lambda^k$ and $\delta_P^k$ as $\delta_\lambda^k=(I-J)\lambda^k$ and $\delta_P^k=\Delta\hat{P}_a^k-0_n$, respectively, where $J=\frac{1}{n}1_{n}1_{n}^{T}$. Then, referring \eqref{lambda}, \eqref{pmf}, and $1^T_nL=0_n$, we have
\begin{subequations}
	\begin{equation}
		\label{dlambda}
		\delta_{\lambda}^{k}=\delta_{\lambda}^{k-1}-h_{1}\xi_{\lambda}^{k-1}-h_{2}\nu_{\lambda}^{k-1}+\sigma^{k}(I-J)\Delta\hat{P}_{a}^{k},
	\end{equation}
	\begin{equation}
		\label{dP}
		\delta_P^k=\delta_P^{k-1}-h_1\xi_P^{k-1}-h_2\nu_P^{k-1}+\Delta P^{k+1}-\Delta P^k.
	\end{equation}
\end{subequations}
The compact form of \eqref{dlambda} and \eqref{dP} are written as
\begin{subequations}
	\begin{equation}
		\label{dlambda1}
		y_\lambda^{k+1}=\Phi(L)y_\lambda^k+\begin{bmatrix}\sigma^k(I-J)\Delta\hat P_a^k\\0\end{bmatrix},
	\end{equation}
	\begin{equation}
		\label{dP1}
		y_P^{k+1}=\Phi(L)y_P^k+\begin{bmatrix}\Delta P^{k+1}-\Delta P^k\\0\end{bmatrix},
	\end{equation}
\end{subequations}
where $y_{P}^{k}=col(\delta_{P}^{k},\nu_{P}^{k})$, $y_{P}^{k}=col(\delta_{\lambda}^{k},\nu_{\lambda}^{k})$. For \eqref{dlambda1} and \eqref{dP1}, convergence and stability are summarized in Theorems \ref{TH2} and
\ref{TH3}, and relevant proofs are provided, respectively.
\begin{theorem}
	\label{TH2}
	For the isolated mode, with the premise of Theorem \ref{TH1}, MC consensus with static errors can be guaranteed under a static gain $\sigma$. If a dynamic gain $\sigma^k>0$ for $k\in\{0,1,2,\cdots\}$ is selected, zero static error consensus of MC, i.e., $\lim\limits_{k\to+\infty}|\lambda_{i}-\lambda_{j}| = 0$ for $i,j\in\{1,2,\cdots,n\}$, can be guaranteed, where $\lim\limits_{k\to+\infty}\sigma^{k}=0$ and $\lim\limits_{k\to+\infty}\sum_{j=0}^{k}\sigma^{j}=+\infty $. At the same time, MC of each BESS is asymptotically stable. In addition, the estimated average power mismatch is bounded, i.e., $\lim\limits_{k\to+\infty}|\Delta\hat{P}_{a,i}|<+\infty$ for $i\in\{1,2,\cdots,n\}$.
\end{theorem}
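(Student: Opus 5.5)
The plan is to treat the consensus-error dynamics \eqref{dlambda1} as an asymptotically stable (regular) linear system driven by the input $\sigma^k(I-J)\Delta\hat P_a^k$. The estimator dynamics \eqref{dP1} are handled the same way, with input $\Delta P^{k+1}-\Delta P^k$.

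\textbf{Step 1: stability of the homogeneous parts.} By Theorem \ref{TH1} and Lemmas \ref{eig1}, \ref{eig2}, the matrix $\Phi(L)$ restricted to the disagreement subspace $\{(I-J)\cdot\}$ has spectral radius $\rho<1$. The zero eigenvalue of $L$ gives eigenvalues $\mu=1$; these correspond to consensus directions and are removed by the projection $I-J$. Hence there exist $c>0$ and $\rho<1$ such that $\|\Phi(L)^k(I-J)\|\le c\rho^k$. For the reset instants I will use regularity: resetting components of $\nu$ to $0$ does not increase the quadratic Lyapunov function of the base system. Concretely, I will pick a $P$ solving $\Phi^TP\Phi-P<0$ on the disagreement subspace and check that it satisfies the usual $H_\beta$-type reset condition of \cite{banosResetControlSystems2011}. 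This yields uniform exponential decay of the unforced reset system.

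\textbf{Step 2: boundedness of $\Delta\hat P_a$.} The capacity limits give $P_{m,i}\le P_i^k\le P_{M,i}$. Therefore $\Delta P^k$ is bounded and $\|\Delta P^{k+1}-\Delta P^k\|\le 2\max_i(P_{M,i}-P_{m,i})+$const. Summing \eqref{pmf} with $1_n^T$ gives the invariant $1_n^T\Delta\hat P_a^k=1_n^T\Delta P^k$, assuming the initialization $\Delta\hat P_a^0=\Delta P^0$. So the average part of $\Delta\hat P_a^k$ is bounded. The disagreement part obeys \eqref{dP1}, and by the variation-of-constants formula it is bounded by $c\sum_j\rho^{k-j}\|\Delta P^{j+1}-\Delta P^j\|\le cM/(1-\rho)$. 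This proves $\sup_k\|\Delta\hat P_a^k\|<\infty$, which is the last claim of the theorem.

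\textbf{Step 3: MC consensus.} Applying the same formula to \eqref{dlambda1} gives
\[
\|\delta_\lambda^{k}\|\le c\rho^k\|y_\lambda^0\|+c\sum_{j=0}^{k-1}\rho^{k-1-j}\sigma^j\|\Delta\hat P_a^j\|.
\]
With a constant gain $\sigma$, Step 2 bounds the right-hand side by a constant times $\sigma/(1-\rho)$. This is consensus with a static error of order $\sigma$. With $\sigma^k\to0$, the discrete convolution of the summable kernel $\rho^k$ with a sequence tending to zero tends to zero (a Toeplitz lemma argument). Hence $\delta_\lambda^k\to0$, i.e. $|\lambda_i^k-\lambda_j^k|\to0$.

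\textbf{Step 4: stability of each MC.} It remains to show that the common value $\bar\lambda^k=\frac1n1_n^T\lambda^k$ converges. Since $1_n^TL=0$, we have $\bar\lambda^{k+1}=\bar\lambda^k-\frac{h_2}{n}1_n^T\nu_\lambda^k+\frac{\sigma^k}{n}1_n^T\Delta\hat P_a^k$. By the invariant in Step 2, the last term equals $\sigma^k$ times the true average mismatch $\frac1n\sum_i(P_{D,i}-P_i^k)$. The power $P_i^k$ is the projection onto $[P_{m,i},P_{M,i}]$ of the inverse of $\lambda_i=\frac{2\beta_iP_i+\alpha_i}{1-2Z_iP_i}$, which is monotonically increasing in $\lambda_i$. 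Therefore the mismatch is a nonincreasing function of $\bar\lambda$ (up to the vanishing consensus error) and vanishes at $\omega_1^*$. The update thus acts as a diminishing-step root-finding iteration. Taking $V=(\bar\lambda-\omega_1^*)^2$, the conditions $\sum\sigma^k=\infty$ and $\sigma^k\to0$ then force $\bar\lambda^k\to\omega_1^*$, because the product of the monotone term and $\bar\lambda-\omega_1^*$ is negative.

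\textbf{Main obstacle.} I expect the main difficulty in this last step. The integral term $1_n^T\nu_\lambda^k$ must be shown not to bias the average: it is a sum of $1_n^TL\lambda$ terms, which vanish, except at resets. Resets inject nonzero averages, so I would first try to show that the reset-induced perturbations are summable, using the exponential decay of $\xi_\lambda$ from Step 1. The same difficulty affects the line-loss nonlinearity, which requires $1-2Z_iP_i>0$ on the feasible box; I would treat that as a standing modelling assumption.
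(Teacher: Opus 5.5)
\paragraph{Verdict.} There is a genuine gap, and it appears earlier than the ``main obstacle'' you name: it already breaks Step 2.

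\paragraph{The invariant in Step 2 fails under resets.} Summing \eqref{pmf} with $1_n^T$ gives
\[
1_n^T\Delta\hat P_a^{k+1}=1_n^T\Delta\hat P_a^{k}-z_2\,1_n^T\nu_P^{k}+1_n^T(\Delta P^{k+1}-\Delta P^{k}).
\]
So the invariant $1_n^T\Delta\hat P_a^k=1_n^T\Delta P^k$, which you use in Steps 2 and 4, holds only if $1_n^T\nu_P^k\equiv0$. Since $1_n^T\xi_P^k=0$, this sum is frozen during flows. A reset of component $i$, however, changes it by $-\nu_{P,i}^{k-1}$, and for $n\ge3$ the components do not reset simultaneously (only for $n=2$ does $\xi_{P,1}=-\xi_{P,2}$ synchronize them). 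Write $m^k=\frac1n1_n^T\Delta\hat P_a^k$ and $\bar\nu_P^k=\frac1n1_n^T\nu_P^k$. The flow is then
\[
m^{k+1}=m^k-z_2\bar\nu_P^k+\frac1n1_n^T(\Delta P^{k+1}-\Delta P^k),\qquad \bar\nu_P^{k+1}=\bar\nu_P^k,
\]
which is a Jordan block at $\mu=1$: the double eigenvalue of $\Phi(L)$ coming from $\eta_1(L)=0$. Thus $\bar\nu_P$ acts as a \emph{rate} on the average estimate, and bounding $\Delta\hat P_a$ requires $\sum_k 1_n^T\nu_P^k$ to stay bounded.

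\paragraph{Your proposed remedy is not enough.} The fix you sketch for $\nu_\lambda$ is summability of the reset-induced jumps. That only makes $1_n^T\nu^k$ converge to \emph{some} constant. A nonzero constant still drives linear drift of the averages of $\Delta\hat P_a$ and $\lambda$, so even the weak conclusion $\lambda_i^{k+1}-\lambda_i^k\to0$ would fail. What is missing is a proof that the full integrator states, including their $1_n$-components, tend to zero summably.

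\paragraph{Step 1 is affected too.} Zeroing one coordinate of a vector $\nu\in1_n^\perp$ takes it out of $1_n^\perp$. So the reset maps do not preserve the disagreement subspace, and there is no closed reset system there for your Lyapunov matrix to certify. Moreover, an $H_\beta$-type condition is an additional hypothesis, not a consequence of the gain inequalities in Theorem~\ref{TH1}.

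\paragraph{Comparison with the paper.} You cannot borrow the missing ingredient from the paper, whose proof has the same hole.
\begin{itemize}
\item It bounds the unprojected $y_P$ by taking $\alpha<1$ to be the spectral radius of $\Phi(L)$. That spectral radius is actually $1$, because of the eigenvalue above. Also, $\|\Phi^k(L)\|\le\alpha^k$ need not hold for this non-normal matrix, so your $c\rho^k$ form is the more careful one, although $c$ then compounds across resets.
\item For MC it passes from $(I-J)\nu_\lambda^k\to0$ to $\nu_\lambda^k\to0$ without argument.
\end{itemize}
Otherwise your Steps 1--3 follow the paper's skeleton: variation of constants for \eqref{dP1} with a bounded input, then a decaying kernel convolved with $\sigma^k\Delta\hat P_a^k$ for \eqref{dlambda1}.

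In Step 4 you aim higher than the statement. By ``MC is asymptotically stable'' the paper means only $\lambda_i^{k+1}-\lambda_i^k\to0$. It reads this off \eqref{lambdai} from $\xi_\lambda^k\to0$, $\nu_\lambda^k\to0$ and $\sigma^k\Delta\hat P_{a,i}^k\to0$, and leaves the behaviour of the averages to Theorems~\ref{re3} and \ref{TH3} and the Proposition. Both versions, however, hinge on $1_n^T\nu_\lambda^k\to0$ summably.

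\paragraph{What would close the gap.} You would need, for example, a proof that every component keeps resetting at bounded intervals while $\xi$ decays exponentially; then $|\nu_i^k|$ is bounded by a few recent $|\xi_i^j|$. Alternatively, add an extra hypothesis that guarantees this.
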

\begin{proof}
	Obviously, at any reset instant $l_{P,j}\in l_{P}$ and $l_{\lambda,j}\in l_{\lambda}$, there must be
	$$\|y_P^{l_{P,j}+}\|\leq\|y_P^{l_{P,j}}\|\,\text{and}\,\|y_\lambda^{l_{\lambda,j}+}\|\leq\|y_\lambda^{l_{\lambda,j}}\|.$$
	Besides, $\|\Delta P^k\|$ for $\forall k \in \{1,2,\cdots\}$ is bounded according to \eqref{power}. Thus, it is easy to verify that each component of $\|(I-J)(\Delta P^{k+1}-\Delta P^{k})\|$ is bounded by a
	positive constant denoted as $\Delta\bar{P}$ with $0\leq\Delta\bar{P}<+\infty $. Before the first reset instant,
	\begin{equation}
		\label{10}
		y_P^k=\Phi^k(L)y_P^0+\sum_{j=0}^{k-1}\Phi^{k-1-j}(L)\begin{bmatrix}\Delta P^{j+1}-\Delta P^j\\0\end{bmatrix}.
	\end{equation}
	For the first term on the right side of \eqref{10}, we have $\|\Phi^{k}(L)y_{P}^{0}\|\leq\|\Phi^{k}(L)\|\|y_{P}^{0}\|\leq\alpha^{k}\|y_{P}^{0}\|$, where $\alpha$ is the spectral radius of the matrix $\Phi(L)$ and $0<\alpha<1$ according to Theorem \ref{TH1}. Hence, one can get
	\begin{equation}
		\begin{aligned}
			&\|\sum_{j=0}^{k-1}\Phi^{k-1-j}(L)\begin{bmatrix}\Delta P^{j+1}-\Delta P^{j}\\0\end{bmatrix}\|\\
			\leq&\sum_{j=0}^{k-1}\|\Phi^{k-1-j}(L)\|\|\left\lfloor\begin{matrix}\Delta P^{j+1}-\Delta P^{j}\\0\end{matrix}\right\rfloor\|
			\leq\frac{1-\alpha^{k-1}}{1-\alpha}\Delta\bar{P}.
		\end{aligned}
	\end{equation}
	Anyway, $\|y_{P}^{k}\|\leq\alpha^{k}\|y_{P}^{0}\|+\frac{1-\alpha^{k-1}}{1-\alpha}\Delta\bar{P}$. For $l_{P,1}<k\leq l_{P,2}$, we have
	\begin{equation}
		y_P^k=\Phi^{k-l_{P,1}}(L)y_P^{l_{P,1}+}+\sum_{j=l_{P,1}}^{k-1}\Phi^{k-1-j}(L)\begin{bmatrix}\Delta P^{j+1}-\Delta P^j\\0\end{bmatrix}.
	\end{equation}
	Thus, there must be
	$$\begin{aligned}
		\|y_{P}^{k}\|&\leq\|\Phi^{k}(L)y_{P}^{l_{P,1}+}\| 
		+\|\sum_{j=l_{1}}^{k-1}\Phi^{k-1-j}(L)\left[\begin{matrix}{\Delta P^{j+1}-\Delta P^{j}}\\{0}\\\end{matrix}\right]\| \\
		&\leq\alpha^{k-l_{P,1}}\|y_{P}^{l_{P,1}+}\|+\frac{\alpha^{l_{1}}-\alpha^{k-1}}{1-\alpha}\Delta\bar{P} \\
		&\leq\alpha^{k}\|y_{P}^{0}\|+\frac{\alpha^{k-l_{1}}-\alpha^{k-1}}{1-\alpha}\Delta\bar{P}+\frac{1-\alpha^{k-l_{1}-1}}{1-\alpha}\Delta\bar{P} \\
		&=\alpha^{k}\|y_{P}^{0}\|+\frac{1-\alpha^{k-1}}{1-\alpha}\Delta\bar{P}.
	\end{aligned}$$
	According to the Mathematical Induction method, for any moment, it conforms to the above equation for $\forall k\in\{1,2,\cdots\}$. Then,
	\begin{equation}
		\label{DP}
		\|y_P^k\|\leq\frac{1}{1-\alpha}\Delta\bar{P}, k\to+\infty,
	\end{equation}
	which implies $\delta_{P,i}^k$, i.e., $\Delta\hat{P}_{a,i}^{k}$ and $\nu_{P,i}^k$,
	are bounded for $i\in\{1,2,\cdots,n\}$.
	\par Next, \eqref{dlambda1} can be deduced as
	\begin{equation}
		y_\lambda^k=\Phi^k(L)y_\lambda^0+\sum_{j=0}^{k-1}\Phi^{k-1-j}(L)\begin{bmatrix}(I-J)\sigma\Delta\hat{P}_a^k\\0\end{bmatrix}.
	\end{equation}
	Based on the previous results, two conclusions on MC are directly given here,
	$$\lim_{k\to+\infty}\|\Phi^{k}(L)y_{\lambda}^{0}\|=0.$$
	$$\begin{aligned}\|\sum_{j=0}^{k-1}\Phi^{k-1-j}(L)\begin{bmatrix}\sigma\Delta\hat{P}_a^k\\0\end{bmatrix}\|&\leq\sum_{j=0}^{k-1}\|\Phi^{k-1-j}(L)\sigma\|\|\begin{bmatrix}\Delta\hat{P}_a^k\\0\end{bmatrix}\|\\&<+\infty.\end{aligned}$$
	Thus, $\lim\limits_{k\to+\infty}\|y_{\lambda}^{k}\|<+\infty$.
	\par When introducing a time-varying gain $\sigma^k$ as shown in Theorem \ref{TH2}, we have
	\begin{equation}
		\label{lam1}
		\begin{aligned}
			\parallel\sum_{j=0}^{k-1}\Phi^{k-1-j}\begin{bmatrix}\sigma^{k}\Delta\hat{P}_{a}^{k}\\0\end{bmatrix}\parallel 
			\leq&\sum_{j=0}^{k-1}\|\Phi^{k-1-j}\sigma^{k}\|\|\begin{bmatrix}\Delta\hat{P}_{a}^{k}\\0\end{bmatrix}\| \\
			<&o(1)\sum_{j=0}^{k-1}\|\begin{bmatrix}\Delta\hat{P}_{a}^{k}\\0\end{bmatrix}\|=o(1).
		\end{aligned}
	\end{equation}
	In this way, $\|y_\lambda^k\|=o(1)$ with $k\to+\infty$, which leads to $\lim\limits_{k\to+\infty}\|\delta_{\lambda}^{k}\|=0$. Thus, consensus on MC can be reached and $\lim\limits_{k\to+\infty}\xi_{\lambda,i}^{k}=0$ for $\forall i\in\{1,2,\cdots,n\}$.
	\par Besides, $\|y_\lambda^k\| = o(1)$ with $k\to+\infty$ gives $\lim\limits_{k\to+\infty}\|(I-J)\nu_\lambda^k\| = 0$, which directly leads to $\lim\limits_{k\to+\infty}\nu_{\lambda,i}^{k} = 0$ for $\forall i \in \{1,2,\cdots,n\}$. Hence, based on the above conclusions and \eqref{lambdai}, we have
	$$\lim_{k\to+\infty}(\lambda_i^{k+1}-\lambda_i^k)=\lim_{k\to+\infty}\sigma^k\Delta\hat{P}_a^k,$$
	for $\forall i \in \{1,2,\cdots,n\}$, which is further deduced as
	$$\lim_{k\to+\infty}(\lambda_i^{k+1}-\lambda_i^k)=0,$$
	for $\forall i \in \{1,2,\cdots,n\}$ according to \eqref{DP}. So, it can be asserted that MC for each BESS is asymptotically stable.
\end{proof}
\begin{remark}
	Here, take MC as an example to briefly describe the progressiveness of the involved scheme. Compared to the proportional protocol in \cite{chenDistributedEconomicDispatch2021}, the designed scheme can accelerate the convergence rate of MC. According to the reset
	mechanism \eqref{res1}, the sign of $\nu_{\lambda,i}^k$ is aligned with $\xi_{\lambda,i}^k$ at all instants. Thus,
	$$h_1\xi_{\lambda,i}^k+h_2\nu_{\lambda,i}^k\geq h_1\xi_{\lambda,i}^k.$$
	As a result, the convergence rate of $\lambda_i$ is accelerated. It can
	be seen that the introduction of reset mechanism enhances the consensus of the system, which is further compared in the case study.
\end{remark}
\begin{remark}
	Here, a discussion is organized to explain the impact of gains on the designed schemes, taking the MC consensus controller as an example. If $\frac{2(h_1+h_2)}{h_{1}^{2}}\ge\eta_M$ and $\frac{4h_2}{h_{1}^{2}}>\eta_m$, $\vert\mu_{ij}\vert=\sqrt{1-((h_1+h_2)\eta_i(L)-\frac{1}{2}h_1^2\eta_i^2(L))}$. Thus, $\vert\mu_{ij}\vert$ decreases under a increasing $h_2$. The impact of $h_1$ on $\vert\mu_{ij}\vert$ is relatively complex. If $h_1<\frac{1}{\eta_M}$, $\vert\mu_{ij}\vert$ increases with the increase of $h_1$; if $h_1>\frac{1}{\eta_m}$, the opposite is true. If $\frac{4h_{2}}{h_{1}^{2}}\le\eta_{m}$ and $h_{1} \leq \frac{2}{\eta_M}$, $\vert\mu_{ij}\vert$ increases under increasing $h_2$ or/and $h_1$. If $\frac{4h_{2}}{h_{1}^{2}}\le\eta_{m}$, $h_{1}>\frac{2}{\eta_M}$, and $2h_{1}-h_{2}<\frac{4}{\eta_M}$, $\vert\mu_{ij}\vert$ increases under decreasing $h_2$ or/and $h_1$. Although the situation analyzed above is somewhat conservative, the impact of its complement on the results is very unclear and may require trial and error to explore.
\end{remark}
\par Below is a theorem to illustrate the stability of MC and the average power mismatch estimation in the scheme \eqref{3}.
\begin{theorem}
	\label{re3}
	Based on the result of Theorem \ref{TH2}, the stability of MC and the average power mismatch estimation can be achieved.
\end{theorem}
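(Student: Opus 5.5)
The plan is to build on Theorem \ref{TH2} in three stages. First, Theorem \ref{TH2} gives consensus of the MCs, so all $\lambda_i^k$ approach a common trajectory $\bar\lambda^k=\frac1n 1_n^T\lambda^k$. It also gives $\lambda_i^{k+1}-\lambda_i^k\to0$ and boundedness of $\Delta\hat P_a^k$. Second, we use the structure of the cost functions and the capacity limits to show that $\bar\lambda^k$ actually converges and that $\Delta\hat P_a^k\to 0$. Third, this yields $\lambda_i^k\to\omega_1^*$ and $P_i^k\to P_i^*$ from \eqref{power}.

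\textbf{Invariant of the estimator.} Multiplying \eqref{pmf} by $1_n^T$ and using $1_n^TL=0$ gives
\[
1_n^T\Delta\hat P_a^{k+1}-1_n^T\Delta P^{k+1}=1_n^T\Delta\hat P_a^{k}-1_n^T\Delta P^{k}-z_2 1_n^T\nu_P^k .
\]
Along flow intervals $\nu_P^k$ accumulates $\xi_P$, so $1_n^T\nu_P^k$ is a sum of $1_n^TL\Delta\hat P_a=0$ terms. At reset instants, resetting components can only shrink it. So, after initializing $\nu_P^0=0$ and $\Delta\hat P_a^0=\Delta P^0$, we get $1_n^T\Delta\hat P_a^k=1_n^T\Delta P^k$ for all $k$. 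I would state this carefully, possibly as a remark that the reset rule keeps $1_n^T\nu_P^k=0$ up to a vanishing term. Combined with the convergence $\|(I-J)\Delta\hat P_a^k\|\to0$, which follows from Theorem \ref{TH1} and the bound \eqref{DP} once $\Delta P^{k+1}-\Delta P^k\to0$, each $\Delta\hat P_{a,i}^k$ tracks the true average mismatch $\frac1n\sum_i(P_{D,i}+P_{L,i}-P_i^k)$.

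\textbf{Lyapunov-type argument.} Next, I would use the average dynamics $\bar\lambda^{k+1}=\bar\lambda^k+\sigma^k\frac1n1_n^T\Delta\hat P_a^k=\bar\lambda^k+\sigma^k\,\overline{\Delta P}^k$. Note that $h_1\xi_\lambda$ and $h_2\nu_\lambda$ average to zero by the same argument as for $\nu_P$. From $\lambda_i=\frac{2\beta_iP_i+\alpha_i}{1-2Z_iP_i}$ and the saturation \eqref{power}, each $P_i$ is a nondecreasing function of $\lambda_i$. Hence the total mismatch $g(\lambda):=\sum_i(P_{D,i}+P_{L,i}(P_i(\lambda))-P_i(\lambda))$ is nonincreasing in the common value $\lambda$, and it vanishes exactly at $\omega_1^*$. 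For $\bar\lambda^k$ this is a Robbins--Monro/stochastic-approximation recursion with step $\sigma^k$ satisfying $\sigma^k\to0$ and $\sum\sigma^k=\infty$. The perturbation $g(\lambda^k)-g(\bar\lambda^k 1_n)$ vanishes by consensus and Lipschitz continuity of the $P_i$. Taking $V^k=(\bar\lambda^k-\omega_1^*)^2$, we have $(\bar\lambda^k-\omega_1^*)g(\bar\lambda^k)\le0$, with strict inequality away from $\omega_1^*$. The divergent step sum then forces $V^k\to0$. As a result, $\lambda_i^k\to\omega_1^*$, $P_i^k\to P_i^*$, $\Delta P^{k+1}-\Delta P^k\to0$, and, revisiting \eqref{dP1} with an input tending to zero, $\Delta\hat P_a^k\to0$. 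This gives supply--demand balance.

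\textbf{Main obstacle.} The hardest step is the monotonicity of $g$ when line losses are included. $P_{L,i}$ depends on $P_i$, so $g$ is monotone only if the effective loss ratio keeps $1-2Z_iP_i>0$ on the feasible range. I would invoke the capacity constraints $P_{m,i}\le P_i\le P_{M,i}$ and assume $2Z_iP_{M,i}<1$, which is physically reasonable. There is also a circularity: boundedness of $\Delta\hat P_a$ relies on \eqref{DP} while convergence relies on the $\lambda$ argument. I would first fix it using boundedness only, then bootstrap to convergence.
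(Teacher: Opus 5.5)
Your route differs from the paper's. The paper multiplies \eqref{lambda} and \eqref{pmf} by $1_n^T$ and writes $\Delta P_i^{k+1}-\Delta P_i^k=-f_i^k(\lambda_i^{k+1}-\lambda_i^k)$ with a bounded positive $f_i^k$. It then brackets $f^k$ between $f_m^k$ and $f_M^k$. Its conclusion is that the frozen matrices $\begin{bmatrix}1&\sigma^k\\0&1-\sigma^kf^k\end{bmatrix}$ have no eigenvalue outside the unit disk once $\sigma^kf^k\le 2$. You instead use conservation of $1_n^T(\Delta\hat P_a^k-\Delta P^k)$ to obtain the scalar recursion $\bar\lambda^{k+1}=\bar\lambda^k+\frac{\sigma^k}{n}g(\lambda^k)$, and then run a Robbins--Monro/Lyapunov argument.

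The mechanism is the same in both: the summed estimate equals the true mismatch, which is nonincreasing in the MC. Your hypothesis $2Z_iP_{M,i}<1$ is exactly what the paper's positivity of $f_i^k$ tacitly requires. Your version buys more, though:
\begin{itemize}
\item it gives actual convergence $\lambda_i^k\to\omega_1^*$ and $\Delta\hat P_a^k\to 0$, so it also covers Theorem \ref{TH3};
\item it copes with saturated units, where $f_m^k=0$;
\item it copes with increments $\lambda_i^{k+1}-\lambda_i^k$ of mixed sign, where the $f_m^k/f_M^k$ bracketing need not hold;
\item it does not infer convergence of a time-varying recursion from frozen-time eigenvalues, one of which equals $1$.
\end{itemize}
The paper's argument is shorter but heuristic.

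The step that fails as you state it is the invariant. Under the componentwise rules \eqref{DPnu} and \eqref{res1}, a flow step does preserve $1_n^T\nu$, because $1_n^TL=0$. But if the components in a proper subset $\mathcal R$ reset at step $k$, then
\[
1_n^T\nu_P^{k}=1_n^T\nu_P^{k-1}-\sum_{i\in\mathcal R}\nu_{P,i}^{k-1},
\]
and likewise for $\nu_\lambda$. This shift has no definite sign. Zeroing components shrinks the norm of $\nu$, not the magnitude of its sum. So starting from $1_n^T\nu_P=0$, one partial reset makes the sum nonzero, and the following flow steps keep it there. Then $1_n^T(\Delta\hat P_a^k-\Delta P^k)$ drifts by $-z_2 1_n^T\nu_P^k$ per step.

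Allowing a ``vanishing term'' does not rescue this. $\nu_P$ enters through the cumulative sum $\sum_j 1_n^T\nu_P^j$. $\nu_\lambda$ enters $\bar\lambda^{k+1}$ as $-\frac{h_2}{n}1_n^T\nu_\lambda^k$, without the factor $\sigma^k$. You would therefore need $\sum_j 1_n^T\nu_P^j\to 0$ and $1_n^T\nu_\lambda^k$ summable (or $o(\sigma^k)$), and none of this follows from Theorem \ref{TH2}.

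The paper relies on the same property tacitly: its aggregated system simply omits $-h_2 1_n^T\nu_\lambda^k$ and $-z_2 1_n^T\nu_P^k$. So you are not departing from it here. But you should state $1_n^T\nu_\lambda^k=1_n^T\nu_P^k=0$ as an explicit hypothesis, or enforce it with a sum-preserving reset (for example, resetting all components at once, or applying $I-J$ after the reset). With that hypothesis, $\nu^0\in 1_n^\perp$, and $\Delta\hat P_a^0=\Delta P^0$, the rest of your argument goes through.
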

\begin{proof}
	Due to the strong nonlinear relationship between feedback term $\Delta P_{i}^{k+1}-\Delta P_{i}^{k}$ and $\lambda_i^k$, it is difficult to accurately characterize the stability of the involved scheme. But there are still traces to follow. Multiplying \eqref{lambda} and \eqref{pmf} left by $1^T$ yields
	$$\begin{bmatrix}1^T\lambda^{k+1}\\1^T\Delta\hat P_a^{k+1}\end{bmatrix}=\begin{bmatrix}1&\sigma^k\\0&1\end{bmatrix}\begin{bmatrix}1^T\lambda^k\\1^T\Delta\hat P_a^k\end{bmatrix}+\begin{bmatrix}0\\1^T(\Delta P^{k+1}-\Delta P^k)\end{bmatrix}.$$
	We note that within the limits, power is positively correlated with MC according to \eqref{power}. In view of this, a bounded time-varying positive parameter $f_i^k$ is introduced such that
	$$\begin{aligned}
		\Delta P_{i}^{k+1}-\Delta P_{i}^{k} =&(B_i(P_i^{k+1}+P_i^k)-1)(P_i^{k+1}-P_i^k)\\
		=&-f_i^k(\lambda_i^{k+1}-\lambda_i^k). 
	\end{aligned}$$
	At any step, there must be an $f_m^k$ and $f_M^k$ that
	$$1^{T}f_{m}^{k}(\lambda^{k+1}-\lambda^{k})\leq1^{T}f^{k}(\lambda^{k+1}-\lambda^{k})\leq1^{T}f_{M}^{k}(\lambda^{k+1}-\lambda^{k}).$$
	Noting that $1^{T}(\lambda^{k+1}-\lambda^{k})=1^{T}\sigma^{k}\Delta\hat{P}_{a}$, construct the following two systems,
	$$\begin{bmatrix}1^T\lambda^{k+1}\\1^T\Delta\hat{P}_a^{k+1}\end{bmatrix}=\begin{bmatrix}1&\sigma^k\\0&1-\sigma^kf_M^k\end{bmatrix}\begin{bmatrix}1^T\lambda^k\\1^T\Delta\hat{P}_a^k\end{bmatrix},$$
	$$\begin{bmatrix}1^T\lambda^{k+1}\\1^T\Delta\hat{P}_a^{k+1}\end{bmatrix}=\begin{bmatrix}1&\sigma^k\\0&1-\sigma^kf_m^k\end{bmatrix}\begin{bmatrix}1^T\lambda^k\\1^T\Delta\hat{P}_a^k\end{bmatrix}.$$
	Since $f_m^k$ and $f_M^k$ are bounded and $\sigma^k$ is decaying, we can
	assert that from a certain step on, $0\le\sigma^kf_M^k\le 2$ and $0\le\sigma^kf_m^k\le 2$. That is, $\begin{bmatrix}1&\sigma^k\\0&1-\sigma^kf_M^k\end{bmatrix}$ and $\begin{bmatrix}1&\sigma^k\\0&1-\sigma^kf_m^k\end{bmatrix}$ will not have eigenvalues with a modulus greater than 1. These indicate that under the designed scheme \eqref{3}, MC and the estimated average power mismatch of each BESS are stable.
\end{proof}
\begin{remark}
	The validity of Theorem \ref{re3} depends on the coefficients of the cost function. In this article, since the cost function coefficients of a battery are all positive, the scheme proposed in \eqref{3} in this article can lead to stability. For some power sources with negative cost coefficients, more discussion is needed to analyze stability.
\end{remark}
\par According to the result of Theorem \ref{re3}, MC of each BESS is stable. Then, the stability of the estimated average power mismatch can be further derived in the following theorem.
\begin{theorem}
	\label{TH3}
	Based on Theorem \ref{re3}, the estimated power mismatch asymptotically converges to 0 under the average power mismatch estimation scheme \eqref{DPi} with \eqref{DPnu} and \eqref{xiP}.
\end{theorem}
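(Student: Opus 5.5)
The argument has two parts. First I show that the forcing term in \eqref{dP1} dies out, because Theorem \ref{re3} says the marginal costs settle. Then I use the regularity of the reset system to get $y_P^k\to 0$, which gives $\Delta\hat P_{a,i}^k\to0$ for every $i$. The recursion \eqref{dP1}, with reset law \eqref{Pnu}, is a linear system driven by $\Phi(L)$ plus the input $u^k=\mathrm{col}(\Delta P^{k+1}-\Delta P^k,\,0)$.

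\textbf{Step 1: the input vanishes.} By Theorem \ref{TH2} and Theorem \ref{re3}, $\lambda_i^{k+1}-\lambda_i^k\to0$ for each $i$. In the proof of Theorem \ref{re3} the power increment was written as $\Delta P_i^{k+1}-\Delta P_i^k=-f_i^k(\lambda_i^{k+1}-\lambda_i^k)$, with $f_i^k$ bounded and positive. This uses \eqref{power}, the map $\lambda_i\mapsto P_i$, which is monotone and Lipschitz on the admissible range because $\beta_i>0$ and saturation only lowers the slope. Hence $\|u^k\|\le \max_i f_i^k\,\|\lambda^{k+1}-\lambda^k\|\to0$.

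\textbf{Step 2: the linear part forgets the input.} Between resets, unroll \eqref{dP1} as in \eqref{10}: $y_P^k=\Phi^{k-k_0}(L)y_P^{k_0}+\sum_{j=k_0}^{k-1}\Phi^{k-1-j}(L)u^j$. The gain condition of Theorem \ref{TH1} and Lemma \ref{eig1} gives spectral radius $\alpha<1$, so I can take $\|\Phi^m\|\le c\alpha^m$. At every reset instant $\|y_P^{l+}\|\le\|y_P^{l}\|$ holds, since a reset only zeroes components of $\nu_P$. Chaining the segments as in the induction in the proof of Theorem \ref{TH2} then yields the uniform bound $\|y_P^k\|\le c\alpha^{k-k_0}\|y_P^{k_0}\|+c\sum_{j=k_0}^{k-1}\alpha^{k-1-j}\|u^j\|$. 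Because $u^j\to0$, the convolution of the geometric sequence with a null sequence tends to zero. To show this, fix $\varepsilon$, choose $k_0$ with $\|u^j\|<\varepsilon$ for all $j\ge k_0$, and then let $k\to\infty$. The result is $\limsup\|y_P^k\|\le c\varepsilon/(1-\alpha)$ for every $\varepsilon$.

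\textbf{Step 3: the limit is zero, not just a consensus value.} Step 2 gives $\delta_P^k=\Delta\hat P_a^k\to0$ directly, since $\delta_P^k$ was defined as $\Delta\hat P_a^k-0_n$. For consistency I also check the sum. From \eqref{pmf}, $1^T\Delta\hat P_a^{k+1}-1^T\Delta P^{k+1}=1^T\Delta\hat P_a^k-1^T\Delta P^k$ is invariant, so $1^T\Delta\hat P_a^k$ tracks the true total mismatch. The reduced scalar systems in Theorem \ref{re3} have factor $1-\sigma^kf^k<1$ and $\sum\sigma^k=\infty$, which drives $1^T\Delta\hat P_a^k\to0$. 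I would use this product argument, $\prod(1-\sigma^kf_m^k)\to0$, to justify that the average component of $y_P^k$ goes to zero. This matters because $L$ has a zero eigenvalue, so $\Phi(L)$ is not contractive on the $1_n$ direction.

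\textbf{Main obstacle.} The delicate point is exactly that $\Phi(L)$ has eigenvalue $1$ along $\mathrm{span}\{1_n\}$. Lemma \ref{eig1} only controls the eigenvalues with $\eta_i(L)>0$, so the contraction in Step 2 is genuine only on the disagreement subspace. I would therefore split $y_P^k$ into the disagreement part $(I-J)$, handled by Step 2, and the average part $J$, handled by the scalar recursion of Theorem \ref{re3} in Step 3. I would also check that resets, which act componentwise, do not couple the two parts in a way that increases the norm.
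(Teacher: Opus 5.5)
Your Steps 1--2 are essentially the paper's whole proof. The paper argues that stability of the MCs makes $P_i^k$, hence $\Delta P_i^k$, convergent, so $\Delta\bar P=o(1)$ in \eqref{DP}. It then reads off $\|y_P^k\|\to0$, hence $\Delta\hat P_a^k=\delta_P^k\to0$, without separating modes.

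Your ``main obstacle'' correctly identifies what that argument skips. For $\delta_P=a1_n$, $\nu_P=b1_n$, the matrix $\Phi(L)$ acts as $(a,b)\mapsto(a,a+b)$, which is a Jordan block at $1$. So the contraction with $\alpha<1$ behind \eqref{DP}, and behind your Step 2, holds only on the disagreement subspace. Steps 1--2 therefore give only $(I-J)\Delta\hat P_a^k\to0$. The remaining scalar $S^k:=1^T\Delta\hat P_a^k$ is, by the invariance you quote and given $\Delta\hat P_a^0=\Delta P^0$, the true total mismatch $1^T\Delta P^k$. No vanishing-input argument can make it vanish, because its vanishing \emph{is} supply--demand balance, which the paper later deduces from Theorem~\ref{TH3}. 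So the claim at the start of Step 3 that Step 2 already gives $\delta_P^k\to0$ must go, and everything rests on Step 3.

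Step 3 has a genuine gap. The $f_m^k$/$f_M^k$ systems of Theorem~\ref{re3} are comparison systems, not the dynamics:
they drop $h_1\xi_\lambda^k+h_2\nu_\lambda^k$;
the sandwich $f_m^k\le f_i^k\le f_M^k$ bounds $\sum_i f_i^k(\lambda_i^{k+1}-\lambda_i^k)$ only when all increments share a sign;
and the paper extracts only non-expansion from them.

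Worse, $f_m^k=\min_i f_i^k$ is zero as soon as one unit sits at $P_{m,i}$ or $P_{M,i}$. In particular $f_j^k\equiv0$ for every bus without a BESS, which the paper models by $P_{m,j}=P_{M,j}=0$. Then $\prod_k(1-\sigma^kf_m^k)=1$, and your product argument gives nothing.

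The exact recursion is $S^{k+1}=(1-\sigma^k\bar f^k)S^k+e^k$, with $\bar f^k=\frac1n\sum_i f_i^k$ and $e^k=\sum_i f_i^k(h_1\xi_{\lambda,i}^k+h_2\nu_{\lambda,i}^k)-\sigma^k\sum_i f_i^k[(I-J)\Delta\hat P_a^k]_i$. Concluding $S^k\to0$ needs a Polyak-type lemma with $a_k=\sigma^k\bar f^k$, which requires two things:
$\sum_k a_k=\infty$, which needs an explicit feasibility hypothesis (total load plus loss strictly inside the aggregate capacity range);
and $e^k=o(a_k)$ or $\sum_k|e^k|<\infty$, which needs Step 2 plus a decay estimate for $\xi_\lambda^k,\nu_\lambda^k$ relative to $\sigma^k$.

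Two reset effects also need proofs rather than checks. First, the invariance of $1^T(\Delta\hat P_a^k-\Delta P^k)$ ignores the term $-z_2 1^T\nu_P^k$. That term vanishes only while $\nu_P$ stays in the range of $L$, and zeroing single components of $\nu_P$ breaks this; the paper's \eqref{balance} has the same defect. Second, with $c>1$ in $\|\Phi^m\|\le c\alpha^m$, chaining over $r$ resets inflates the constant to $c^r$. Step 2 therefore needs a single norm in which $\Phi(L)$ contracts and zeroing $\nu$-components is non-expansive, not just $\|y_P^{l+}\|\le\|y_P^l\|$ in the Euclidean norm.
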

\begin{proof}
	Due to the stability of MC, the output power $P_i^k$ for $\forall i\in\{1,2,\cdots,n\}$ is convergent calculated by \eqref{power}. Thus, according to the definition of $\Delta P_i^k$ for $\forall i\in\{1,2,\cdots,n\}$ is convergent, which implies $\Delta\bar{P}_{i}=o(1)$ with $k\to+\infty$. Based on this and \eqref{DP}, it can only be inferred that $\lim\limits_{k\to+\infty}\|y_P^k\|=0$. Further, we arrive at
	\begin{equation}
		\lim_{k\to+\infty}\|\delta_P^k\|=0,
	\end{equation}
	which leads that $\Delta{\hat P}_{a,i}^k$ for $\forall i\in\{1,2,\cdots,n\}$ converges to 0 asymptotically. Thus, we can conclude that Theorem \ref{TH3} holds.
\end{proof}
\begin{remark}
	Theorems \ref{re3} and \ref{TH3} perfectly fill some of the regrets in \cite{chenDistributedEconomicDispatch2021}, where the estimated average power mismatch has not been proven to converge to 0. However, in this article, the boundedness and convergence of $\Delta{P}_i^{k+1}-\Delta{P}_i^k$ are cleverly applied to prove that the estimated average power mismatch asymptotically converges to 0.
\end{remark}
\par So, there is a following proposition about the ED problem of an isolated BESS network.
\begin{proposition}
	According to the result of Theorems \ref{TH2} and \ref{TH3}, the ED problem of an isolated BESS network can be solved by using \eqref{power}, \eqref{lambdai} with \eqref{res1} and \eqref{lambdaxi}, and \eqref{DPi} with \eqref{xiP} and
	\eqref{DPnu}.
\end{proposition}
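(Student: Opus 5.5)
The plan is to check the two control objectives of Section 2.2 directly, using only the limits supplied by Theorems \ref{TH2} and \ref{TH3}. Combined with the optimality conditions \eqref{gsmmc}--\eqref{power}, these limits should show that the limit point of the scheme is the optimal output power vector.

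\emph{Step 1: MC consensus.} By Theorem \ref{TH2}, the decaying gain $\sigma^k$ with $\sum_j\sigma^j=+\infty$ gives $\lim_{k\to+\infty}\|(I-J)\lambda^k\|=0$. Theorem \ref{re3} and the last part of the proof of Theorem \ref{TH2} give $\lambda_i^{k+1}-\lambda_i^k\to0$ and stability of $\lambda_i^k$. I would combine these as follows.
\begin{itemize}
\item The average $\frac1n 1^T\lambda^k$ is monotone from some step on. Its increment is $\frac{\sigma^k}{n}1^T\Delta\hat P_a^k$, and the second row of the comparison systems in Theorem \ref{re3} shows that $1^T\Delta\hat P_a^k$ eventually keeps a fixed sign and decreases in modulus.
\item The average is bounded, because $P_i^k$ is clipped by \eqref{power}.
\item Hence every $\lambda_i^k$ converges to a common value $\lambda^\ast$. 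This gives the first objective.
\end{itemize}

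\emph{Step 2: supply--demand balance.} I would multiply \eqref{pmf} on the left by $1^T$. Since $1^TL=0$, and $1^T\nu_P^k=0$ under zero initialization of the integrators (a reset only replaces components by $\xi_P$, which also sums to zero), the estimator preserves the invariant
\[
1^T\Delta\hat P_a^k-1^T\Delta P^k=1^T\Delta\hat P_a^0-1^T\Delta P^0 .
\]
Choosing the initialization $\Delta\hat P_{a,i}^0=\Delta P_i^0$, which is available locally, makes this invariant zero. Theorem \ref{TH3} gives $\Delta\hat P_a^k\to0$, so $1^T\Delta P^k\to0$. Here $\Delta P_i$ must be read as including the line loss, $P_{D,i}+P_{L,i}-P_i$; then this limit is exactly $\sum_i(P_{D,i}+P_{L,i}-P_i)=0$, the second objective.

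\emph{Step 3: optimality.} Each $P_i^k$ is obtained from $\lambda_i^k$ through the continuous saturated map \eqref{power}. So $P_i^k\to P_i(\lambda^\ast)$, with the box constraints holding automatically. The limit pair $(\lambda^\ast,P(\lambda^\ast))$ then satisfies the KKT conditions derived from the Lagrangian $L_a$:
\begin{itemize}
\item equal MC for the unsaturated units;
\item saturation at $P_{m,i}$ or $P_{M,i}$ otherwise, corresponding to nonzero $\omega_{2,i}$ or $\omega_{3,i}$;
\item power balance.
\end{itemize}
Taking $\omega_1=\lambda^\ast$, convexity of $f_i$ with positive $\beta_i,\alpha_i$ (in the regime $1-2Z_iP_i>0$) makes these conditions sufficient. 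The limit is therefore the optimal ED solution, with $\lambda^\ast=\omega_1^\ast$ by uniqueness of the multiplier.

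\emph{Main obstacle.} The hard part is Step 1. Theorem \ref{TH2} gives only $\lambda_i^{k+1}-\lambda_i^k\to0$, which does not by itself imply convergence. To obtain an actual limit I would first try the eventual sign-constancy and monotone decay of $1^T\Delta\hat P_a^k$ from the comparison systems in Theorem \ref{re3}. If that fails, I would fall back on a summability argument: use Theorem \ref{TH3} to get $\Delta\hat P_a^k\to0$ fast enough that $\sum_k\sigma^k\|\Delta\hat P_a^k\|<\infty$.

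A secondary point to flag is the initialization $\Delta\hat P_{a}^0=\Delta P^0$ together with zero initial integrators. Both are implicit in the scheme and are needed for the invariant in Step 2.
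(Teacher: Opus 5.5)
\paragraph{Genuine gap: the conservation identity in Step 2.} Your route is the paper's. You take consensus from Theorem \ref{TH2} and $\Delta\hat P_a^k\to0$ from Theorem \ref{TH3}, then use the conservation identity \eqref{balance} to turn the latter into $1^T\Delta P^k\to0$. You are right that this silently needs $\Delta\hat P_a^0=\Delta P^0$ and a loss term in $\Delta P_i$.

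The step that fails is your justification of $1^T\nu_P^k\equiv0$. The reset \eqref{DPnu} acts agent by agent. If only the agents in a set $R_k$ reset at step $k$, then, using $1^T\xi_P^{k-1}=0$,
\[
1^T\nu_P^k=\sum_{i\notin R_k}\bigl(\nu_{P,i}^{k-1}+\xi_{P,i}^{k-1}\bigr)+\sum_{i\in R_k}\xi_{P,i}^{k-1}=1^T\nu_P^{k-1}-\sum_{i\in R_k}\nu_{P,i}^{k-1}.
\]
The discarded integrator mass $\sum_{i\in R_k}\nu_{P,i}^{k-1}$ is generically nonzero. The fact that $\xi_P$ sums to zero only helps when every agent resets at once; this is automatic for $n=2$ but not for $n\ge3$. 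Consequently
\[
1^T(\Delta\hat P_a^{k+1}-\Delta P^{k+1})=1^T(\Delta\hat P_a^0-\Delta P^0)-z_2\sum_{j\le k}1^T\nu_P^j,
\]
and $\Delta\hat P_a^k\to0$ no longer gives $1^T\Delta P^k\to0$.

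The paper's derivation of \eqref{balance} makes the same move: it asserts that the $\nu_P$-sum vanishes via ``$\nu_P^l=0$'' at reset instants. So the paper does not fill the hole either. To close it you would need one of the following:
\begin{enumerate}
\item network-wide simultaneous resets;
\item an integrator written as $\nu_P=Lw$ with the reset applied to $w$, so that $1^T\nu_P\equiv0$ identically;
\item a separate proof that the reset-induced drift vanishes.
\end{enumerate}

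\paragraph{Further problems in Step 1.} The paper does not attempt this step at all; it simply quotes Theorem \ref{TH2} for consensus and stability of the MC.
\begin{enumerate}
\item Your increment $\frac{\sigma^k}{n}1^T\Delta\hat P_a^k$ for the average presumes $1^T\nu_\lambda^k=0$. This fails under \eqref{res1} for exactly the same reason as above.
\item Clipping bounds $P_i^k$, not $\lambda_i^k$. The MC can drift to $\pm\infty$ with every unit pinned at a limit. Boundedness has to come from strict feasibility: the net supply $\sum_i(P_i-P_{L,i})$ must exceed $\sum_iP_{D,i}$ with all units at $P_{M,i}$ and fall short with all at $P_{m,i}$. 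Then a runaway average would, through consensus, saturate every unit and reverse the sign of $1^T\Delta P^k$ (which equals $1^T\Delta\hat P_a^k$ once the invariant holds). That contradicts the sign constancy you invoke.
\item Your fallback fails because Theorem \ref{TH3} gives no rate. Its proof also already presupposes convergence of the MC, so using it to obtain that convergence is circular.
\end{enumerate}

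\paragraph{Step 3.} This goes beyond the paper, which stops at the two control objectives of Section~2. Note that the lossy balance equality is not a convex constraint. KKT sufficiency therefore requires passing to the relaxation $\sum_i(P_i-P_{L,i})\ge\sum_iP_{D,i}$ together with $\lambda^\ast\ge0$.
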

\begin{proof}
	The consensus and asymptotic stability on MC have been proven in Theorem \ref{TH2}. In Theorem \ref{TH3}, the stability of the estimated average power mismatch has been proven. Next, the supply-demand balance is needed to be explained under the current situation.
	\par With the help of $1^TL=0$ and $\nu_P^l=0$, the following derivation holds true,
	\begin{equation}
		\label{balance}
		\begin{aligned}
			&1^T[\Delta\hat{P}_{a}^{k+1}-\Delta P^{k+1}]\\
			=&1^T[(I-h_1L)\Delta\hat{P}_a^k-h_2v_P^k-\Delta P^k] \\
			=&1^{T}[\Delta\hat{P}_{a}^{k}-h_{2}v_{P}^{k}-\Delta P^{k}]
			=\cdots\\
			=&1^T[\Delta\hat{P}_a^0-h_2\sum_{j=0}^kv_P^j-\Delta P^0] \\
			=&1^{T}[\Delta\hat{P}_{a}^{0}-\Delta P^{0}]-h_{2}1^{T}\sum_{j=0}^{k}v_{P}^{j} \\
			=&-h_21^T\sum_{l=1}^{R^{k+1}}v_P^l=0.
		\end{aligned}
	\end{equation}
	According to the result of Theorem \ref{TH3}, it can be concluded that
	$$\lim_{k\to+\infty}1^T\Delta P^k=0.$$
	According to the definition of power mismatch, it can be asserted that the supply-demand balance can be asymptotically maintained.
	\par At this point, the ED problem regarding an isolated BESS network has been resolved.
\end{proof}
\begin{remark}
	A decaying feedback term $\sigma^k\Delta\hat{P}^k$ is introduced into the MC control scheme, which does not affect the consensus of MCs, but does not lead to average consensus. Because of this, the total output of all BESSs under this scheme with a feedback term $\sigma^k\Delta\hat{P}^k$ can afford the load and dynamic line loss.
\end{remark}
\subsection{Further Discussion}
In this section, a distributed ED scheme with PI+R controllers is redesigned by introducing the feedback item $\Delta\hat{P}^k$ into the closed-loop error $\xi_{\lambda,i}$. Denote $y_{i}^{k}=\begin{bmatrix}\lambda_{i}^{k}\\\Delta\hat{P}_{a,i}^{k}\end{bmatrix}$. Redefine the difference item of MC is
\begin{equation}
	\label{xilambda}
	\xi_{\lambda,i}^k=\sum_{i=1}^na_{ij}(\lambda_i^k-\lambda_j^k)+\Delta\hat{P}_{a,i}.
\end{equation}
Denote $\xi_i^k=\begin{bmatrix}\xi_{\lambda,i}^k\\\xi_{P,i}^k\end{bmatrix}$ and $\nu_i=\begin{bmatrix}\nu_{\lambda,i}^k\\\nu_{P,i}^k\end{bmatrix}$. In a compact form, we have
\begin{subequations}
	\label{19}
	\begin{equation}
		y^{k+1}=y^k-h_1\xi^k-h_2\nu^k+(\Delta P^{k+1}-\Delta P^k)\otimes\begin{bmatrix}0\\1\end{bmatrix},
	\end{equation}
	\begin{equation}
		\left.\nu_{\lambda}^{k+1}=\left\{\begin{matrix}\nu_{\lambda}^{k}+\xi_{\lambda}^{k}, if\,k\in\mathcal{F}_{\lambda},\\\nu_{\lambda}^{k+}+\xi_{\lambda}^{k}, if\,k\in\mathcal{J}_{\lambda},\end{matrix}\right.\right.
	\end{equation}
	\begin{equation}
		\nu_P^{k+1}=\left\{\begin{matrix}\nu_P^k+\xi_P^k, if \,k\in\mathcal{F}_P,\\\xi_P^k,if\,k\in\mathcal{J}_P,\end{matrix}\right.
	\end{equation}
\end{subequations}
where $\otimes$ denote Kronecker product, and whose base system is
\begin{equation}
	\label{base3}
	\begin{bmatrix}\xi^{k+1}\\\nu^{k+1}\end{bmatrix}=\Phi(\phi(L))\begin{bmatrix}\xi^k\\\nu^k\end{bmatrix}
\end{equation}
with $\phi(L)=\begin{bmatrix}L&I\\0&L\end{bmatrix}$. The eigenvalues of $\phi(L)$ are the same as those of $L$, which leads that any eigenvalue of $\Phi(\phi(L))$ is
one of those of $\Phi(L)$. So, a lemma on the base system \eqref{base3} is given without the need for proof.
\begin{lemma}
	\label{lem3}
	Under Assumption \ref{assu1}, if $\frac{2(h_1+h_2)}{h_{1}^{2}}\ge\eta_M$ and $\frac{4h_2}{h_{1}^{2}}>\eta_m$, or $\frac{4h_{2}}{h_{1}^{2}}\le\eta_{m}$ with $h_{1} \leq \frac{2}{\eta_M}$ or $h_{1}>\frac{2}{\eta_M}$ and $2h_{1}-h_{2}<\frac{4}{\eta_M}$, the base system \eqref{base3} is
	asymptotically stable, i.e., regular.
\end{lemma}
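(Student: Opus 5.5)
The plan is to reduce Lemma \ref{lem3} to Lemma \ref{eig1} by showing that $\Phi(\phi(L))$ has the same spectrum as $\Phi(L)$, with multiplicities doubled. The modified coupling through $\Delta\hat{P}_{a,i}$ enters only through the upper-triangular block structure of $\phi(L)$, and this does not change the eigenvalues.

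\textbf{Step 1: a polynomial identity.} Since $\Phi(M)=\begin{bmatrix}I-h_1M&-h_2M\\I&I\end{bmatrix}$ is built from commuting blocks (polynomials in $M$ and $I$), the block-determinant formula for commuting blocks gives
$$|\mu I-\Phi(M)|=\big|(\mu-1)^2I+(\mu-1)h_1M+h_2M\big|
=\big|(\mu-1)I\cdot(\mu I-I+h_1M)+h_2M\big|.$$
This holds for any square $M$. Expanding recovers the factor $\mu^2+(h_1\eta-2)\mu+1+\eta(h_2-h_1)$ used before Lemma \ref{eig1}, with $z_\ast$ replaced by $h_\ast$.

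\textbf{Step 2: apply it to $M=\phi(L)$.} The matrix $\phi(L)$ is block upper triangular with diagonal blocks $L,L$, so $\phi(L)$ is similar to an upper-triangular matrix whose diagonal consists of the eigenvalues $\eta_i(L)$, each appearing twice. Concretely, using $L=U\,\mathrm{diag}(\eta_i)U^T$, conjugate $\phi(L)$ by $\mathrm{diag}(U,U)$ to obtain $\begin{bmatrix}\Lambda&I\\0&\Lambda\end{bmatrix}$. The matrix $(\mu-1)^2I+(\mu-1)h_1\phi(L)+h_2\phi(L)$ is then also block upper triangular in these coordinates, with diagonal blocks $(\mu-1)^2I+((\mu-1)h_1+h_2)\Lambda$. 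Its determinant is therefore
$$\prod_{i=1}^n\big[\mu^2+(h_1\eta_i(L)-2)\mu+1+\eta_i(L)(h_2-h_1)\big]^2.$$
It follows that every eigenvalue $\mu$ of $\Phi(\phi(L))$ is one of the $\mu_i$ computed for $\Phi(L)$ with gains $(h_1,h_2)$.

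\textbf{Step 3: invoke Lemma \ref{eig1}.} The gain conditions of Lemma \ref{lem3} are exactly those of Lemma \ref{eig1} with $(z_1,z_2)\to(h_1,h_2)$. Lemma \ref{eig1} gives $|\mu_{i,j}|<1$ for every eigenvalue index and $j\in\{1,2\}$, so $\rho(\Phi(\phi(L)))<1$. A linear time-invariant discrete system with spectral radius below one is asymptotically stable, even when Jordan blocks are nontrivial, and the doubled multiplicity may indeed create them. The base system \eqref{base3} is therefore asymptotically stable, hence regular in the sense of \cite{banosResetControlSystems2011}.

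\textbf{Main obstacle.} I expect the main subtlety to be the zero eigenvalue $\eta_1(L)=0$. For it the factor becomes $(\mu-1)^2$, which yields $\mu=1$ with multiplicity two per copy, so the claimed strict stability cannot hold on the consensus subspace. As in Lemma \ref{eig1}, I would handle this by restricting attention to the disagreement/error coordinates, i.e.\ the $\eta_i>0$ modes, where the bounds hold. The mode $\eta_1=0$ corresponds to the average direction, which is treated separately through the $1^T$-projection argument of Theorem \ref{re3}. I would state this restriction explicitly, using $\eta_m=\eta_2(L)>0$ under Assumption \ref{assu1}.
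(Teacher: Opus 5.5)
Your Steps 1--2 are correct, and they are the paper's argument made rigorous. The paper only observes that $\phi(L)$ has the spectrum of $L$, so every eigenvalue of $\Phi(\phi(L))$ is a root of one of the quadratics behind Lemma \ref{eig1}, and then states the lemma without proof. The proof still fails at Step 3, because it imports Lemma \ref{eig1}, whose first and third branches are false. The constant term of your quadratic, $1-(h_1-h_2)\eta_i(L)$, is the product of its two roots, and for a complex pair it equals $|\mu_i|^2$. So any gains with $h_2\ge h_1$ give a root of modulus at least $1$, and both branches admit such gains. Take the two-node graph ($\eta_m=\eta_M=2$). Then $h_1=0.1$, $h_2=1$ satisfies the first branch but yields $\mu^2-1.8\mu+2.8$, with $|\mu|^2=2.8$. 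Likewise $h_1=5$, $h_2=10$ satisfies the third branch but yields $\mu^2+8\mu+11$, with roots $-4\pm\sqrt5$. The condition $2(h_1+h_2)/h_1^2\ge\eta_M$ comes from computing $4|\mu_i|^2=(2-h_1\eta_i)^2+\Delta_i$ instead of $(2-h_1\eta_i)^2-\Delta_i$, where $\Delta_i=h_1^2\eta_i^2-4h_2\eta_i$. Instead of citing Lemma \ref{eig1}, apply the Jury test to the quadratic you derived in Step 1. Over $\eta_i\in(0,\eta_M]$ it gives exactly $0<h_2<h_1$ and $(2h_1-h_2)\eta_M<4$, and only the second stated branch (with $h_2>0$) lies inside this region.

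You are right that $\eta_1(L)=0$ breaks the literal claim, which the paper ignores, but your remedy does not carry over from Lemma \ref{eig1}. There $\xi=Lx$ and an un-reset $\nu$ stay in $1_n^\perp$, so dropping the zero mode is harmless. In \eqref{base3} it is not harmless, for two reasons. First, $1_n^T\xi_\lambda=1_n^T\Delta\hat P_a$, which is nonzero in general (it equals $1_n^T\Delta P$ by \eqref{balance}). Second, the zero eigenvalue of $\phi(L)$ is defective: $\phi(L)\,\mathrm{col}(0_n,1_n)=\mathrm{col}(1_n,0_n)$ and $\phi(L)\,\mathrm{col}(1_n,0_n)=0$. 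Hence, on the invariant four-dimensional subspace where every block is a multiple of $1_n$, $\Phi(\phi(L))$ is a single Jordan block of size $4$ at $\mu=1$. Concretely, since $1_n^T\xi_P^k=0$, the base system gives $1_n^T\xi_\lambda^{k+1}=1_n^T\xi_\lambda^k-h_2\,1_n^T\nu_P^k$ and $1_n^T\nu_\lambda^{k+1}=1_n^T\nu_\lambda^k+1_n^T\xi_\lambda^k$. So even with $1_n^T\nu_P^k=0$, the sum $1_n^T\nu_\lambda^k$ grows linearly whenever $1_n^T\Delta\hat P_a\neq0$. Once the gain condition is corrected, what your argument actually proves is Schur stability of the restriction of $\Phi(\phi(L))$ to the complementary invariant subspace $(1_n^\perp)^4$. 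The average modes cannot be deferred to Theorem \ref{re3}, which treats scheme \eqref{3}, where $\xi_\lambda=L\lambda$. They must either be excluded explicitly in the statement or handled separately in the closed-loop analysis. The paper's one-line justification has both defects as well.
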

\par It can be seen that the range of gains tuning of the base system will not be affected by introducing the feedback term into the closed-loop error. Next, the stability and convergence of the designed scheme \eqref{19} are explained in Theorem \ref{TH4} and its proof.
\begin{theorem}
	\label{TH4}
	According to Lemma \ref{lem3}, for an isolated BESS network, if $\frac{2(h_1+h_2)}{h_{1}^{2}}\ge\eta_M$ and $\frac{4h_2}{h_{1}^{2}}>\eta_m$, or $\frac{4h_{2}}{h_{1}^{2}}\le\eta_{m}$ with $h_{1} \leq \frac{2}{\eta_M}$ or $h_{1}>\frac{2}{\eta_M}$ and $2h_{1}-h_{2}<\frac{4}{\eta_M}$, MC and the estimated average power mismatch are bounded under the control scheme \eqref{19}.
\end{theorem}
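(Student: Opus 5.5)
The plan is to mirror the proof of Theorem \ref{TH2}. We treat scheme \eqref{19} as a reset system driven by a bounded exogenous input. Its base part is the linear map $\Phi(\phi(L))$ of \eqref{base3}, which Lemma \ref{lem3} makes asymptotically stable under the stated gain conditions.

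First, I will write the closed-loop error dynamics in stacked form. Stack $w^k=\operatorname{col}(\xi^k,\nu^k)$ using the redefined $\xi_{\lambda,i}^k$ in \eqref{xilambda}; this gives
\[
w^{k+1}=\Phi(\phi(L))\,w^k+\begin{bmatrix}\phi(L)\big((\Delta P^{k+1}-\Delta P^k)\otimes\begin{bmatrix}0\\1\end{bmatrix}\big)\\0\end{bmatrix},
\]
with the same coordinate reordering as in \eqref{base3}. Between reset instants this is exact. At a reset instant $l\in l_\lambda\cup l_P$, some components of $\nu$ are set to zero, so $\|w^{l+}\|\le\|w^{l}\|$, exactly as at the start of the proof of Theorem \ref{TH2}.

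Second, I will bound the input. By \eqref{power}, each $P_i^k$ lies in $[P_{m,i},P_{M,i}]$, so $\|\Delta P^{k+1}-\Delta P^k\|\le\Delta\bar P<+\infty$. Since $\phi(L)$ is a fixed matrix, the whole forcing term is bounded by $\|\phi(L)\|\Delta\bar P$. Let $\alpha<1$ be the spectral radius of $\Phi(\phi(L))$. Because its eigenvalues coincide with those of $\Phi(L)$, Lemma \ref{lem3} gives $\alpha<1$. I will then repeat the induction over reset intervals from Theorem \ref{TH2}, using a constant $c$ with $\|\Phi^k(\phi(L))\|\le c\alpha_1^k$ for some $\alpha<\alpha_1<1$. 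This yields
\[
\|w^k\|\le c\,\alpha_1^k\|w^0\|+\frac{c\,\|\phi(L)\|\Delta\bar P}{1-\alpha_1},
\]
so $\xi^k$ and $\nu^k$ are uniformly bounded.

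Third, I will recover $\lambda^k$ and $\Delta\hat P_a^k$ from $\xi^k$, using the block-triangular structure of $\phi(L)$. The bounded $\xi_P^k=L\Delta\hat P_a^k$ only controls $(I-J)\Delta\hat P_a^k$. For the mean, I will use the invariant in \eqref{balance}: $1^T(\Delta\hat P_a^k-\Delta P^k)$ equals a constant minus $h_2 1^T\sum\nu_P$, and it vanishes with the reset initialization as argued there. Since $\Delta P^k$ is bounded, $1^T\Delta\hat P_a^k$ is bounded, and hence $\Delta\hat P_a^k$ is bounded. Next, boundedness of $\xi_\lambda^k=L\lambda^k+\Delta\hat P_a^k$ gives boundedness of $L\lambda^k$, and therefore of $(I-J)\lambda^k$. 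For the mean of $\lambda^k$, multiplying the $\lambda$-update by $1^T$ gives $1^T\lambda^{k+1}=1^T\lambda^k-h_11^T\Delta\hat P_a^k-h_21^T\nu_\lambda^k$. This mean is not obviously bounded, and I expect it to be the main obstacle.

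To handle it, I will argue as in Theorem \ref{re3}, via the positive coupling $\Delta P_i^{k+1}-\Delta P_i^k=-f_i^k(\lambda_i^{k+1}-\lambda_i^k)$. If $1^T\lambda^k$ drifted to $\pm\infty$, every $P_i^k$ would eventually saturate at $P_{M,i}$ or $P_{m,i}$. The power increments would then vanish and the forcing would disappear, so $w^k\to0$. In particular $\xi_\lambda^k\to0$ and $\nu_\lambda^k\to0$, which stops the drift. Combined with the feasibility of the supply-demand constraint, this excludes unbounded MC. I would formalize this by contradiction on a subsequence along which $|1^T\lambda^k|\to\infty$. With these steps, both MC and $\Delta\hat P_a^k$ are bounded.
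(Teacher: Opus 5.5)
Your route differs from the paper's. The paper never forms $\Phi(\phi(L))$. It keeps two $\Phi(L)$-subsystems, $y_\lambda^k=\mathrm{col}(\delta_\lambda^k,\nu_\lambda^k)$ and $y_P^k$, reuses \eqref{DP} for $y_P^k$, and bounds the input $J\Delta\hat P_a^k$ using $1^T\Delta\hat P_a^k=1^T\Delta P^k$. Its ``MC bounded'' therefore only bounds $(I-J)\lambda^k$ and $\nu_\lambda^k$, and $1^T\lambda^k$ is never treated. Your extra step on the mean goes beyond the paper, and the obstacle you flag there is worse than you treat it: it already invalidates your second step.

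\textbf{The mean mode is not contracted.} $\phi(L)$ inherits $\eta_1(L)=0$, whose characteristic factor is $(\mu-1)^2$. So $\mu=1$ is an eigenvalue of $\Phi(\phi(L))$ (and of $\Phi(L)$), and ``$\alpha<1$'' cannot hold on the full state. In scheme \eqref{3} this is harmless for the MC loop, since $1^T\xi_\lambda^k=1^TL\lambda^k=0$. In \eqref{19}, however, $1^T\xi_\lambda^k=1^T\Delta\hat P_a^k\neq0$ in general. From \eqref{19} one gets $1^T\xi_\lambda^{k+1}=1^T\xi_\lambda^k-h_21^T\nu_P^k+1^T(\Delta P^{k+1}-\Delta P^k)$, and between resets $1^T\nu_\lambda^{k+1}=1^T\nu_\lambda^k+1^T\xi_\lambda^k$. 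Neither contracts, so your uniform bound on $w^k$ is unsupported. The paper's own input $J\Delta\hat P_a^k$ lies exactly along $1_n$, so its bound has the same defect.

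\textbf{The saturation argument fails for the same reason.} With all $P_i^k$ saturated the input vanishes. Yet $1^T\xi_\lambda^k=1^T\Delta P^k$ (the identity you borrow) stays at the saturated mismatch, which strict feasibility makes nonzero. So $w^k\not\to0$, and your contradiction refutes the Schur premise, not unbounded MC. Two further points fail: divergence along a subsequence does not give saturation for all large $k$, and $\xi_\lambda^k,\nu_\lambda^k\to0$ would not make the increments of $1^T\lambda^k$ summable. In fact $1^T\lambda^{k+1}=1^T\lambda^k-h_11^T\Delta\hat P_a^k-h_21^T\nu_\lambda^k$ is a nonlinear PI loop closed through the slopes $f_i^k$ of Theorem \ref{re3}. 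The purely spectral gain conditions of Lemma \ref{lem3} say nothing about it. With the sign of $\Delta\hat P_{a,i}$ in \eqref{xilambda} as printed, the Theorem \ref{re3}-style averaged system even has $1+h_1f^k>1$ where that proof has $1-\sigma^kf^k$.

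\textbf{The reset induction does not close.} Once you (rightly) admit a transient constant $c>1$, between consecutive resets $l<l'$ you only get $\|w^{l'+}\|\le c\alpha_1^{l'-l}\|w^{l+}\|+c\|\phi(L)\|\Delta\bar P/(1-\alpha_1)$. Resets at consecutive steps then let the factor $c\alpha_1$ compound. Euclidean non-expansiveness of a reset helps only if the flow contracts in the same norm, and it does not: $\|\Phi(M)\,\mathrm{col}(0,u)\|^2=\|u\|^2+h_2^2\|Mu\|^2$. This is also why the paper's $\|\Phi^k(L)\|\le\alpha^k$ already fails at $k=1$. You would need either a norm in which the base map contracts and zeroing any subset of $\nu$-components is non-expansive (an $H_\beta$-type condition not implied by Lemma \ref{lem3}), or a dwell time between resets.

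\textbf{The balance identity is not reset-invariant.} $1^T\Delta\hat P_a^k=1^T\Delta P^k$ from \eqref{balance} requires $1^T\nu_P^k\equiv0$. A reset of $\nu_{P,i}$ discards $\nu_{P,i}^k$ from that sum. After that, $1^T(\Delta\hat P_a^k-\Delta P^k)$ drifts by $-h_21^T\nu_P^k$ per step.
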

\begin{proof}
	Referring \eqref{19}, we have
	\begin{subequations}
		\begin{equation}
			\label{dlambda2}
			\delta_\lambda^k=\delta_\lambda^{k-1}-h_1\xi_\lambda^{k-1}-h_2\nu_\lambda^{k-1}-J\Delta\hat{P}_a^k,
		\end{equation}
		\begin{equation}
			\label{dP2}
			\delta_P^k=\delta_P^{k-1}-h_1\xi_P^{k-1}-h_2\nu_P^{k-1}+\Delta P^{k+1}-\Delta P^k.
		\end{equation}
	\end{subequations}
	The compact form of \eqref{dlambda2} and \eqref{dP2} are written as
	\begin{subequations}
		\begin{equation}
			y_\lambda^{k+1}=\Phi(L)y_\lambda^k-\begin{bmatrix}J\Delta\hat{P}_a^k\\0\end{bmatrix},
		\end{equation}
		\begin{equation}
			y_P^{k+1}=\Phi(L)y_P^k+\begin{bmatrix}\Delta P^{k+1}-\Delta P^k\\0\end{bmatrix},
		\end{equation}
	\end{subequations}
	where $y_\lambda^k=\begin{bmatrix}\delta_\lambda^k\\\nu_\lambda^k\end{bmatrix}$ and $y_P^k=\begin{bmatrix}\delta_P^k\\\nu_P^k\end{bmatrix}$. According to \eqref{DP}, it can be inferred that
	$$\|y_P^k\|\leq\frac{1}{1-\alpha}\Delta\bar{P}, k\to+\infty.$$
	Similarly, in view of $1_n^T\Delta\hat{P}_a^k=1_n^T\Delta P^k$ according to \eqref{balance},
	$$\begin{aligned}
		\|y_{\lambda}^{k}\|\leq&\|\sum_{j=0}^{k-1}\Phi_{d}^{k-1-j}\begin{bmatrix}J\Delta\hat{P}_{a}^{k}\\0\end{bmatrix}\| 
		\leq\sum_{j=0}^{k-1}\|\Phi_{d}^{k-1-j}\|\|\begin{bmatrix}J\Delta\hat{P}_{a}^{k}\\0\end{bmatrix}\| \\
		\leq&\frac{1}{1-\alpha}\Delta\bar{P},k\to+\infty.
	\end{aligned}$$
	So far, Theorem \ref{TH4} has been proved.
\end{proof}
\begin{remark}
	At present, it can only be proven that MC and the estimated average power mismatch are bounded. Unfortunately, the stability proofs are still pending, which is attributed to the nonlinear relationship between $\Delta P_i^k$ and $\lambda_i$. However, following the approach in \cite{chenDistributedEconomicDispatch2021}, if one is stable, the other is naturally stable. Furthermore, if a time-varying gain $\sigma^k$ is introduced, the problem can be easily solved with the help of $\Delta {\hat P}_i^k$ being bounded, which is further explained in the following theorem. However, this may compromise the dynamic performance of this scheme.
\end{remark}
\par To ensure stability, a decay rate $\sigma^k$ is introduced to improve \eqref{xilambda}, i.e.,
\begin{equation}
	\label{xil}
	\xi_{\lambda,i}^k=\sum_{i=1}^na_{ij}(\lambda_i^k-\lambda_j^k)+\sigma^k\Delta\hat{P}_{a,i}^k.
\end{equation}
Next, a theorem and its proof are presented to demonstrate convergence.
\begin{theorem}
	\label{TH5}
	Based on Theorem \ref{TH4}, the ED problem can be solved by \eqref{19} with \eqref{xil}. That is, $\lambda_{i}^{k}-\lambda_{j}^{k}=0$ and $\Delta\hat{P}_{i}^{k} = 0, k \to +\infty $ for $i,j\in\{1,2,\cdots,n\}$.
\end{theorem}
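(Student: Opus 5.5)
The plan is to reuse, almost verbatim, the error-dynamics route of Theorems \ref{TH2}, \ref{re3} and \ref{TH3}, now with the coupling term $\sigma^k\Delta\hat{P}_{a}^k$ sitting inside $\xi_{\lambda}^k$.

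\textbf{Step 1 (error dynamics).} With $\delta_\lambda^k=(I-J)\lambda^k$ and $\delta_P^k=\Delta\hat{P}_a^k$, substitute \eqref{xil} into \eqref{19}. This gives
\begin{equation*}
y_\lambda^{k+1}=\Phi(L)y_\lambda^k-\begin{bmatrix}h_1\sigma^k(I-J)\Delta\hat{P}_a^k\\ -\sigma^k\Delta\hat{P}_a^k\end{bmatrix},
\end{equation*}
where the $\nu_\lambda$ row now also carries the injected term $\sigma^k\Delta\hat{P}_a^k$. The $y_P$ dynamics are unchanged from the proof of Theorem \ref{TH4}.

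\textbf{Step 2 (boundedness and MC consensus).} By Lemma \ref{lem3}, the base matrix has spectral radius $\alpha<1$. Resets never increase the norm of the state (the argument opening the proof of Theorem \ref{TH2}). The induction used for \eqref{DP} then gives $\|y_P^k\|\le\Delta\bar P/(1-\alpha)$, so $\Delta\hat{P}_a^k$ is bounded, as already stated in Theorem \ref{TH4}. The forcing term of $y_\lambda$ is therefore $O(\sigma^k)$, and $\sigma^k\to0$. A convolution of a geometrically decaying kernel with an $o(1)$ sequence is $o(1)$: split the sum at $k/2$. Hence $\|y_\lambda^k\|\to0$, which gives $\lambda_i^k-\lambda_j^k\to0$, and then $\xi_\lambda^k\to0$ and $\nu_\lambda^k\to0$.

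\textbf{Step 3 (stability of the average MC).} Left-multiply the $\lambda$-update by $1^T$. Since $1^TL=0$, this yields
\begin{equation*}
1^T\lambda^{k+1}=1^T\lambda^k-h_1\sigma^k1^T\Delta\hat{P}_a^k-h_21^T\nu_\lambda^k .
\end{equation*}
Next use the linearization $\Delta P_i^{k+1}-\Delta P_i^k=-f_i^k(\lambda_i^{k+1}-\lambda_i^k)$ from the proof of Theorem \ref{re3}, with bounded $f_i^k\ge0$. Together with the conservation identity \eqref{balance}, $1^T\Delta\hat{P}_a^k=1^T\Delta P^k$, this produces a scalar pair $(1^T\lambda^k,1^T\Delta\hat{P}_a^k)$ with a $2\times2$ transition matrix of the form $\begin{bmatrix}1&-h_1\sigma^k\\0&1-h_1\sigma^kf^k\end{bmatrix}$. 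Here $f^k$ lies between $f_m^k$ and $f_M^k$, and the sign is such that $1^T\Delta P$ decreases when $\lambda$ increases. Once $\sigma^kf_M^k\le2/h_1$, which holds eventually, the second diagonal entry lies in $[-1,1]$. So $1^T\Delta\hat{P}_a^k$ is nonincreasing in magnitude. Because $\sum\sigma^k=+\infty$, the product $\prod(1-h_1\sigma^kf^k)$ tends to $0$, provided $f^k$ is bounded below by a positive constant on the unsaturated set.

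\textbf{Step 4 (convergence of the mismatch estimate).} Step 3 shows $1^T\Delta P^k\to0$, and hence $\lambda^k$ converges. By \eqref{power}, $P^k$ then converges, so $\Delta P^{k+1}-\Delta P^k\to0$. Rerunning the argument of Theorem \ref{TH3} on the $y_P$ system, with vanishing forcing, gives $\|y_P^k\|\to0$, that is, $\Delta\hat{P}_{a,i}^k\to0$. Combined with Step 2, this establishes the theorem.

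The main obstacle is Step 3. Two difficulties arise there. First, the term $h_21^T\nu_\lambda^k$ does not vanish identically, because resets break the conservation $1^T\nu_\lambda$. I would bound it by its $o(1)$ decay from Step 2 and treat it as a summable perturbation. Summability is not automatic, so I would try to show that $\nu_\lambda$ decays like $\sigma^k$ and then impose $\sum(\sigma^k)^2<\infty$ if necessary. Second, $f^k$ needs a positive lower bound. This fails when all units saturate, in which case I would appeal to feasibility of the dispatch problem, so that $\lambda$ cannot drift into total saturation.
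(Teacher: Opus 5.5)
Your Steps 1--2 are essentially the paper's entire proof. The paper writes the $y_\lambda$ recursion, appeals to \eqref{lam1} for $\|y_\lambda^k\|\to0$, and then simply cites \eqref{DP} for $\|y_P^k\|\to0$. You correctly see that \eqref{DP} gives this only once $\Delta P^{k+1}-\Delta P^k\to0$, i.e.\ once $\lambda^k$ itself converges. So your Steps 3--4 carry the real content, and your Step 1 rightly keeps the forcing in the $\nu_\lambda$ row, which the paper drops. The consensus half is fine at the paper's level of rigor, since on $1_n^\perp$ the forcing is $O(\sigma^k)$.

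Step 3, however, fails as written because of a sign error. Your own first row gives $1^T\lambda^{k+1}-1^T\lambda^k=-h_1\sigma^k1^T\Delta\hat P_a^k+\cdots$. Combined with $\Delta P^{k+1}-\Delta P^k=-f^k(\lambda^{k+1}-\lambda^k)$, $f^k\ge0$, this yields $1^T\Delta\hat P_a^{k+1}=(1+h_1\sigma^kf^k)1^T\Delta\hat P_a^k+\cdots$, not $1-h_1\sigma^kf^k$. In \eqref{lambdai} the term $+\sigma^k\Delta\hat P_{a,i}^k$ enters $\lambda_i$ directly, which is why Theorem \ref{re3} obtains $1-\sigma^kf^k$. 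In \eqref{xil} it enters through $-h_1\xi_{\lambda,i}^k$, so with the sign as printed the aggregate mismatch is fed back positively. Then $\sum_k\sigma^k=+\infty$, together with your positive lower bound on $f^k$, makes $\prod_k(1+h_1\sigma^kf^k)$ grow without bound instead of vanishing. Your contraction argument can only be run after reversing the coupling sign, $\xi_{\lambda,i}^k=\sum_j a_{ij}(\lambda_i^k-\lambda_j^k)-\sigma^k\Delta\hat P_{a,i}^k$, and you must state this explicitly.

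Even after that fix, $h_21^T\nu_\lambda^k$ is not a perturbation you can make summable. Under \eqref{xil}, $1^T\xi_\lambda^k=\pm\sigma^k1^T\Delta\hat P_a^k$, which is nonzero whenever the aggregate mismatch is. So $1^T\nu_\lambda^k$ is not conserved even between resets; up to resets it is the running sum of $\sigma^j1^T\Delta\hat P_a^j$.

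Step 2 cannot give $\nu_\lambda^k\to0$ for this component. Along $(0,1_n)$ the matrix $\Phi(L)$ has eigenvalue $1$ (from $\eta_1(L)=0$), so the contraction with $\alpha<1$ holds only on the disagreement subspace. Moreover, this component enters $1^T\lambda$ with gain $h_2$, not $h_2\sigma^k$, so it is the leading term rather than a remainder.

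The aggregate is really a PI loop in $(1^T\Delta\hat P_a^k,1^T\nu_\lambda^k)$. With a common $f^k$, no resets and the corrected sign, its transition matrix is $\begin{bmatrix}1-h_1\sigma^kf^k&h_2f^k\\-\sigma^k&1\end{bmatrix}$, with determinant $1+\sigma^kf^k(h_2-h_1)$. This is the structure of a base-system mode with $\eta_i(L)$ replaced by $\sigma^kf^k$. Whether the product of these matrices tends to zero therefore depends on the gains. It cannot when $h_2>h_1$, since the product's determinant stays $\ge1$.

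So this must be proved as a genuine two-dimensional, slowly time-varying stability argument, possibly exploiting the resets. Neither ``$\nu_\lambda$ decays like $\sigma^k$'' nor $\sum(\sigma^k)^2<\infty$ addresses it. Until that is done, the convergence of $\lambda^k$ that Step 4 needs, and hence $\Delta\hat P_{a,i}^k\to0$, is not established.
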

\begin{proof}
	Under \eqref{xil}, one can get
	\begin{equation}
		\delta_{\lambda}^{k}=\delta_{\lambda}^{k-1}-h_{1}\xi_{\lambda}^{k-1}-h_{2}\nu_{\lambda}^{k-1}-J\sigma^{k}\Delta\hat{P}_{a}^{k},
	\end{equation}
	which is further derived as
	\begin{equation}
		y_{\lambda}^{k+1}=\Phi(L)y_{\lambda}^{k}-\begin{bmatrix}J\sigma^{k}\Delta\hat{P}_{a}^{k}\\0\end{bmatrix}.
	\end{equation}
	According to \eqref{lam1}, it can be obtained that
	$$\|y_\lambda^k\|=0, k\to+\infty,$$
	which implies that $\lambda_{i}-\lambda_{j}=0$, $k\to+\infty$ for $i,j \in \{1,2,\cdots,n\}$. Referring to \eqref{DP}, it can be asserted that
	$$\|y_P^k\|=0, k\to+\infty.$$
	So far, Theorem \ref{TH5} is proven.
\end{proof}
\begin{remark}
	Here, we briefly describe the progressiveness of scheme \eqref{3} compared with scheme \eqref{power}. In fact, due to the strong nonlinear term $\Delta P^{k+1}-\Delta P^{k}$, it is difficult to characterize the difference in convergence rates between the two schemes. We can compare the upper bound of states to discover some clues. Compared with \eqref{lambdaxi}, \eqref{xil} behaves with
	$$\begin{aligned}
		\|y_{\lambda}^{k+1}\|\leq &\|\Phi(L)\|^{k+1}\|y_{\lambda}^{0}\|
		+\|\sum_{j=0}^{k-1}\Phi^{k-1-j}(L)\begin{bmatrix}J\sigma^j\Delta\hat{P}_a^j\\0\end{bmatrix}\| \\
		\leq&\|\Phi(L)\|^{k+1}\|y_\lambda^0\|\\
		&+\|\sum_{j=0}^{k-1}\Phi^{k-1-j}(L)\begin{bmatrix}(I-J)\sigma^j\Delta\hat{P}_a^j\\0\end{bmatrix}\|,
	\end{aligned}$$
	which indicates that to a certain extent, the convergence rate of \eqref{xil} will not be lower than that of \eqref{lambdaxi}. Meanwhile, MC is smoother under scheme \eqref{xil}, which will be verified through subsequent simulations.
\end{remark}
\begin{remark}
	MC and estimated average power mismatch driven by the scheme \eqref{19} still conform to the conclusion of Theorem \ref{re3}, which is stability. Furthermore, as can be seen from Theorem \ref{re3}, this approach can accelerate consensus without adversely affecting stability.
\end{remark}
\section{Some Simulation Cases}
\begin{table}[h]
	\centering
	\caption{Comparison of Three Schemes}\label{table}
	\begin{tabular}{@{}cccc@{}}
		\toprule
		\makecell{Comparative\\Indicators}
		&\textbf{\eqref{3}}&{\eqref{19}}&\makecell{The scheme\\in \cite{chenDistributedEconomicDispatch2021}}\\
		\midrule
		Settling Time &$25s$&$20s$&$45s$\\
		\addlinespace[0.1cm]
		Consensus Time &$10s$&$10s$&$25s$\\
		\addlinespace[0.1cm]
		\makecell{BESS-oriented\\Play-and-Plug} &$>40s$&$20s$&$>40s$\\
		\addlinespace[0.1cm]
		\makecell{Agent-oriented\\Play-and-Plug} &$125s$&$10s$&$>175s$\\
		\addlinespace[0.1cm]
		Load Change &$>100s$&$20s$&$>100s$\\
		\addlinespace[0.1cm]
		\makecell{Large-scale\\Power Sytem} &$>150s$&$70s$&$>150s$\\
		\toprule
	\end{tabular}
\end{table}
\par In order to verify the effectiveness and progressiveness of the designed schemes, five cases are arranged. The simulation results are compared with the scheme designed in \cite{chenDistributedEconomicDispatch2021}. The adopted BESS network and its accompanying communication network are shown in Fig. \ref{fig3}. 
\par In Case 1, the designed schemes and the one in \cite{chenDistributedEconomicDispatch2021} are tested to compare settling time and consensus time.
\par In Cases 2 and 3, BESSs in the physical system and agents in the information system are tested and compared for plug and play, respectively.
\par In Cases 4 and 5, three schemes are test and compared under load switching and a large-scale system, respectively.
\par In order to more clearly describe the progressiveness, the relevant indicators in the simulation results in this section are summarized in Table \ref{table} to provide an intuitive comparison, where the latter four indicators uses the approximate time required to achieve supply-demand balance. Please refer to the simulation results and discussion in the corresponding case for other differences.
\begin{figure}
	\centering
	\includegraphics[width=8cm]{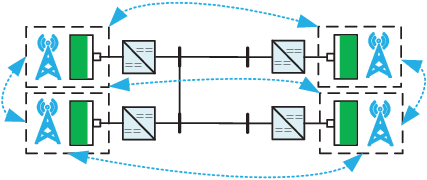} \caption{A BESS network with its communication graph.\label{fig3}}
\end{figure}
\begin{figure}
	\centering
	\includegraphics[width=8cm]{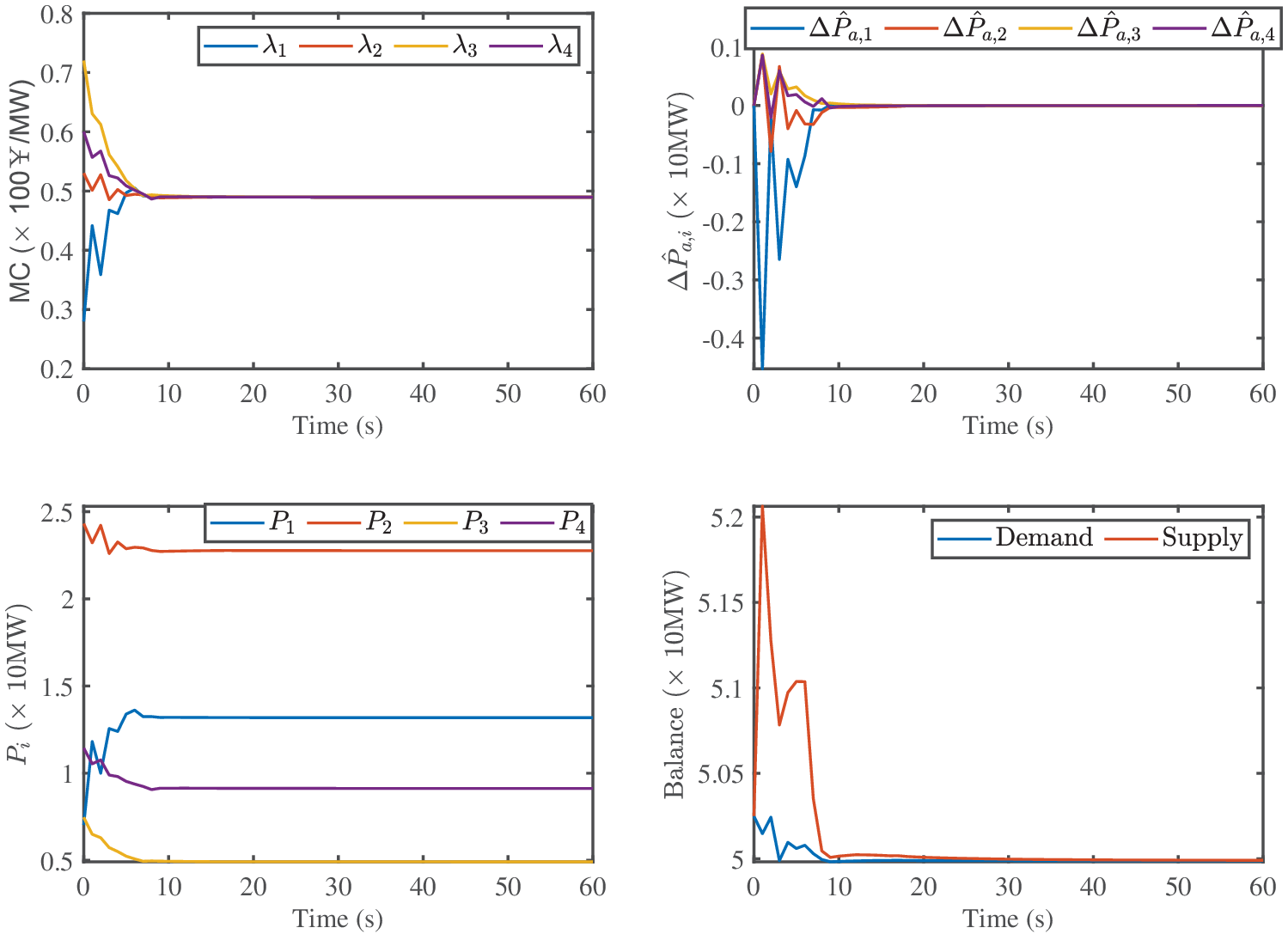} \caption{Simulation results of the scheme \eqref{3} in Case 1.\label{figc1a}}
\end{figure}
\begin{figure}
	\centering
	\includegraphics[width=8cm]{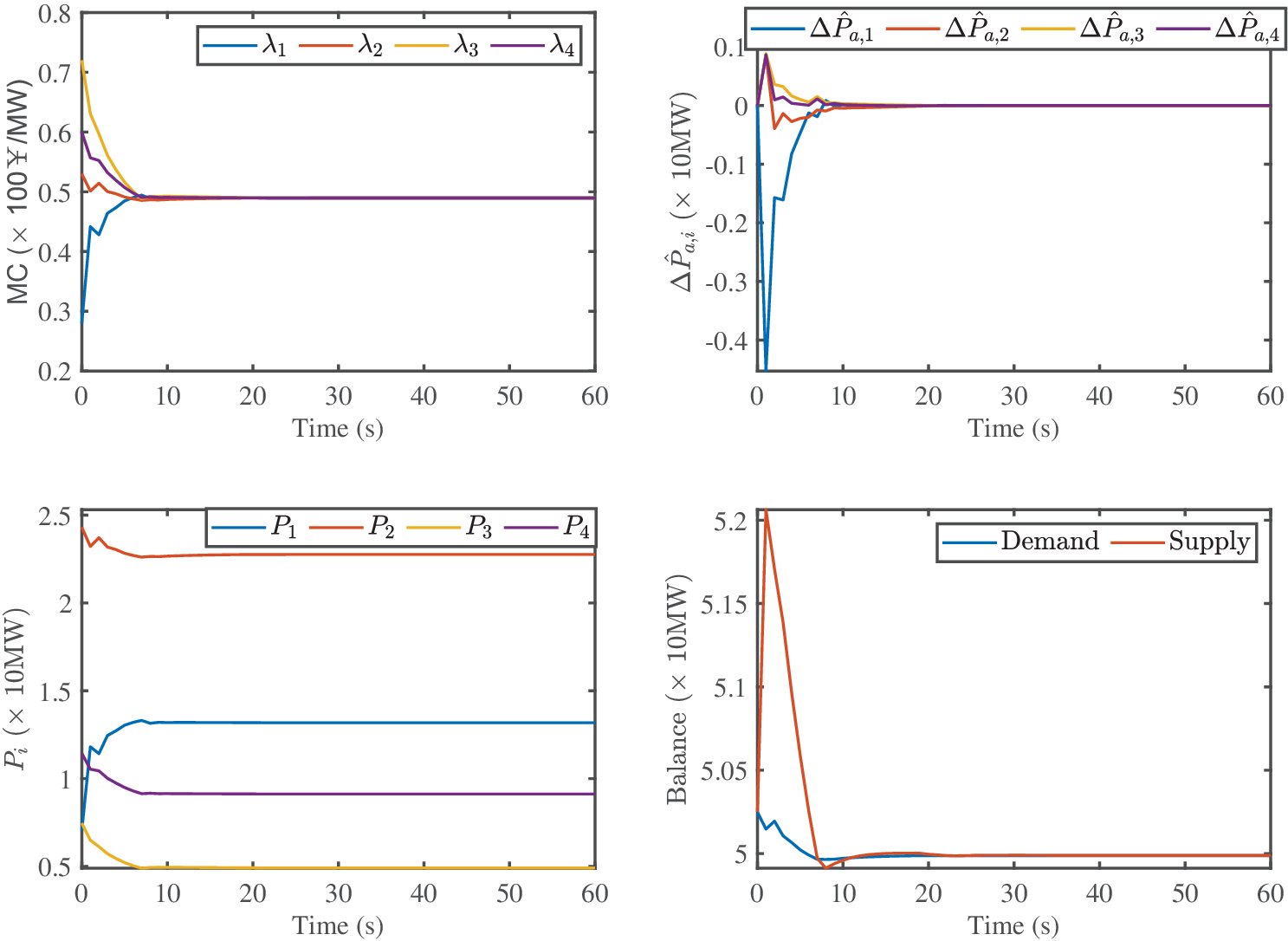} \caption{Simulation results of the scheme \eqref{19} in Case 1.\label{figc1b}}
\end{figure}
\begin{figure}
	\centering
	\includegraphics[width=8cm]{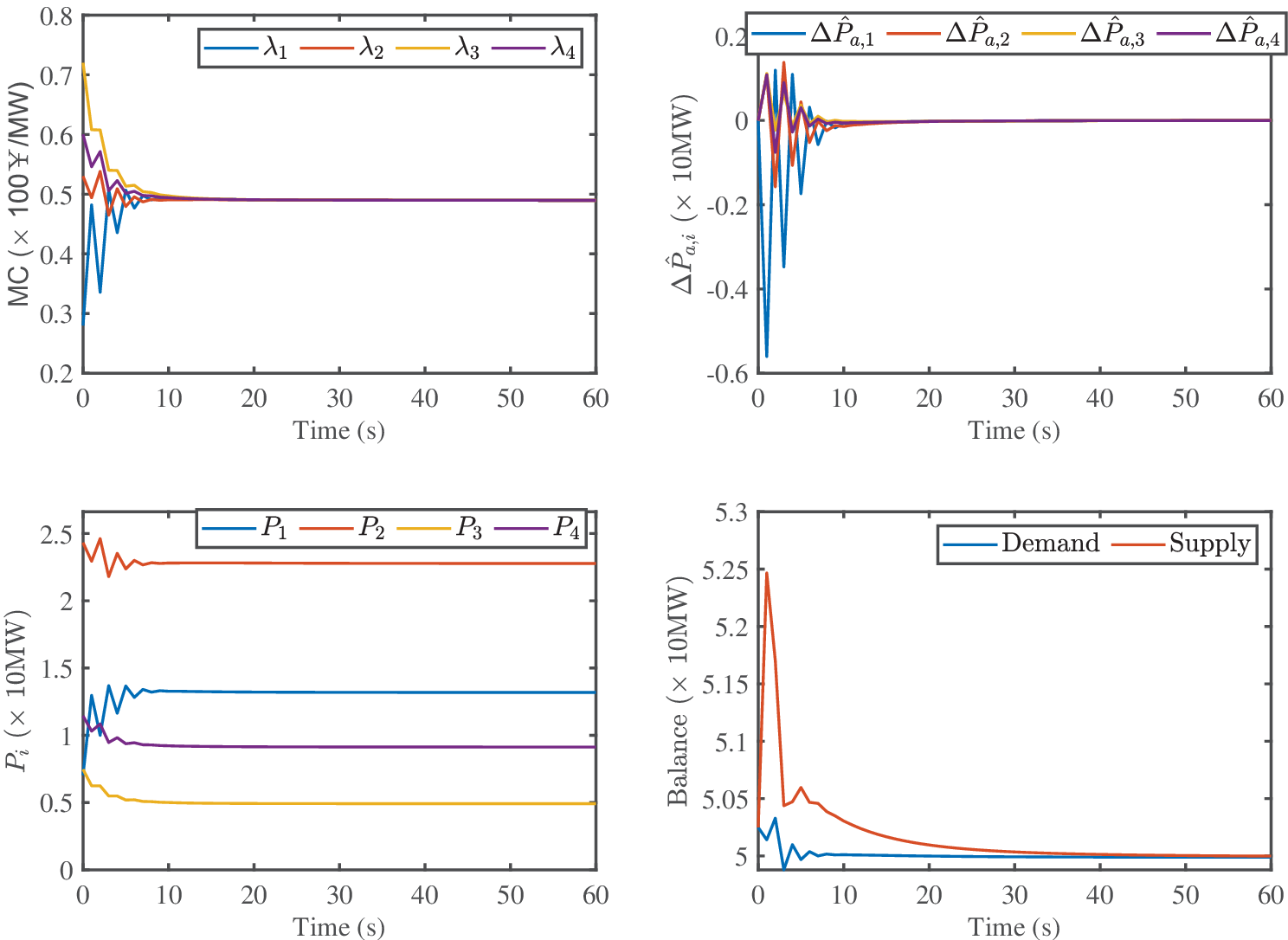} \caption{Simulation results of a distributed scheme designed in \cite{chenDistributedEconomicDispatch2021} in Case 1.\label{figc1c}}
\end{figure}
\subsection{Case 1. The Test on The Effectiveness and Progressiveness}
\par This case is arranged to investigate the designed isolated ED schemes. Here, the designed schemes \eqref{3} and \eqref{19} are tested under the same gains and compared with the scheme in \cite{chenDistributedEconomicDispatch2021}. The simulation results of \eqref{3} , \eqref{19}, and the comparison are shown in Figs. \ref{figc1a}-\ref{figc1c}, respectively.
\par From the simulation results in Figs. \ref{figc1a} and \ref{figc1b}, respectively, it can be seen that MCs can ultimately achieve consensus, which in turn leads to a stable output power sequence. Then, the estimated average power mismatch can converge to 0, and the supply-demand balance can be well maintained. However, there are still some obvious differences. In terms of convergence rate, scheme \eqref{19} performs better than scheme \eqref{3} . Specifically, the settling time of MC and the estimated average power mismatch, under the schemes \eqref{3} and \eqref{19}, are about $25s$ and $20s$, respectively. Under the scheme \eqref{3}, MC and the estimated average power mismatch appear more oscillatory, which result in a more significant fluctuation in the total supply power.
\par Previously, some researchers proposed a distributed ED scheme with proportional protocols in \cite{chenDistributedEconomicDispatch2021}, as shown below,
$$\lambda_i^{k+1}=\lambda_i^k-h_1\sum_{i=1}^na_{ij}(\lambda_i^k-\lambda_j^k)+\sigma^k\Delta\hat{P}_{a,i}^k,$$
$$\Delta\hat{P}_{a,i}^{k+1}=\Delta\hat{P}_{a,i}^{k}-z_{1}\sum_{i=1}^{n}a_{ij}(\Delta\hat{P}_{a,i}^{k}-\Delta\hat{P}_{a,j}^{k})+\Delta P_{i}^{k+1}-\Delta P_{i}^{k}.$$
From Fig. \ref{figc1c}, it can be seen that the settling time of MC and the estimated average power mismatch are about $45s$. Comparing with this, the schemes \eqref{3} and \eqref{19}, especially those of \eqref{19}, are superior to those of \cite{chenDistributedEconomicDispatch2021}, in terms of the convergence rate.
\subsection{Case 2. The Test on The BESS-oriented Plug-and-Play Function}
\par In a BESS network, certain BESSs may exit the energy supply sequence due to planning or unexpected events, such as low-level SoC, severe failures, etc. Therefore, this case is arranged to investigate the plug-and-play performance of
the two designed schemes and the scheme designed in \cite{chenDistributedEconomicDispatch2021}. Specifically, BESS 3 is isolated at $t = 30s$ and restored at $t = 70s$. The simulation results are shown in Figs. \ref{figc3a}-\ref{figc3c}.
\begin{figure}
	\centering
	\includegraphics[width=8cm]{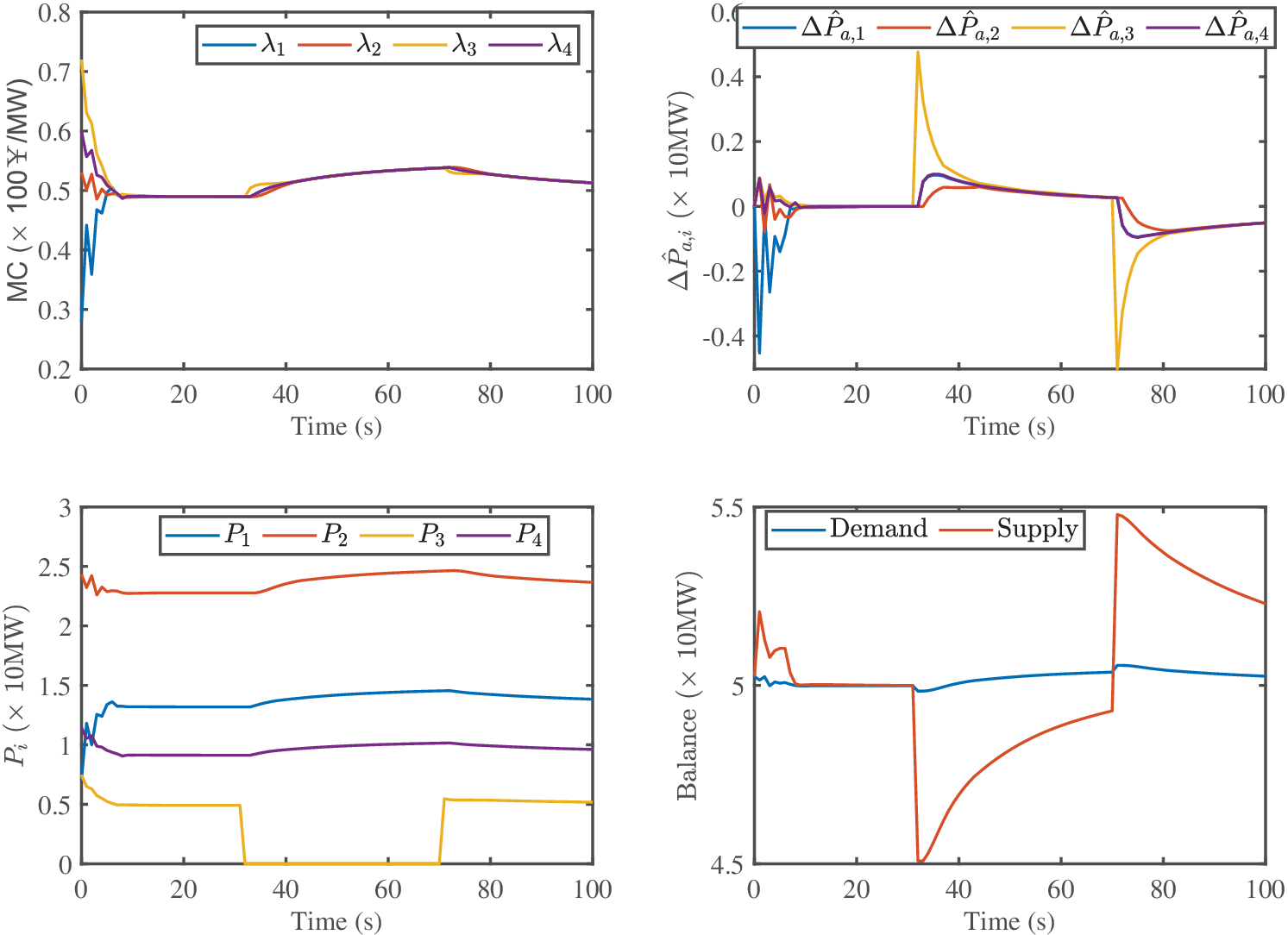} \caption{Simulation results of the designed scheme \eqref{3} under an isolated and connected BESS 3 in Case 2.\label{figc3a}}
\end{figure}
\begin{figure}
	\centering
	\includegraphics[width=8cm]{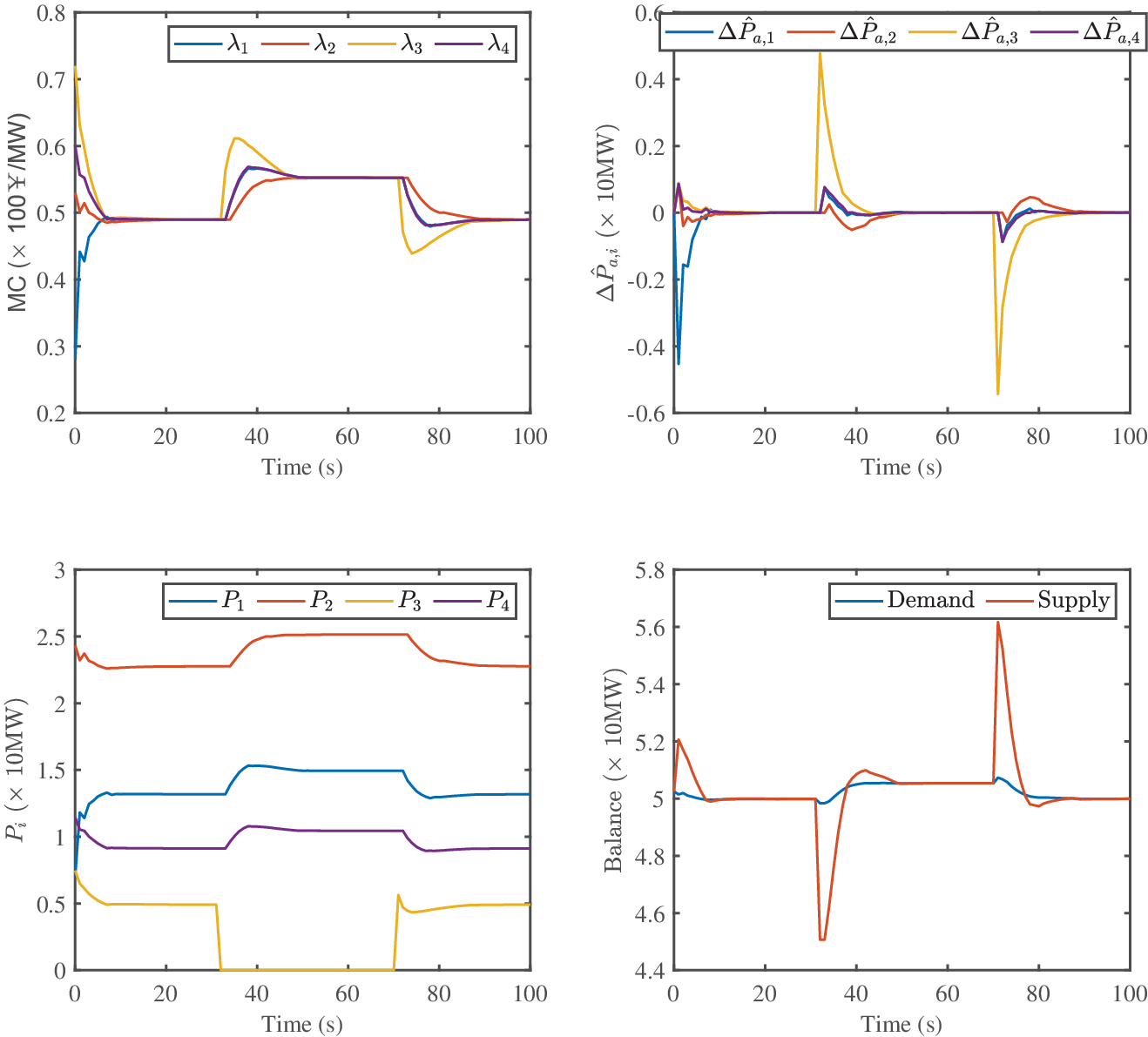} \caption{Simulation results of the designed scheme \eqref{19} under an isolated and connected BESS 3 in Case 2.\label{figc3b}}
\end{figure}
\begin{figure}
	\centering
	\includegraphics[width=8cm]{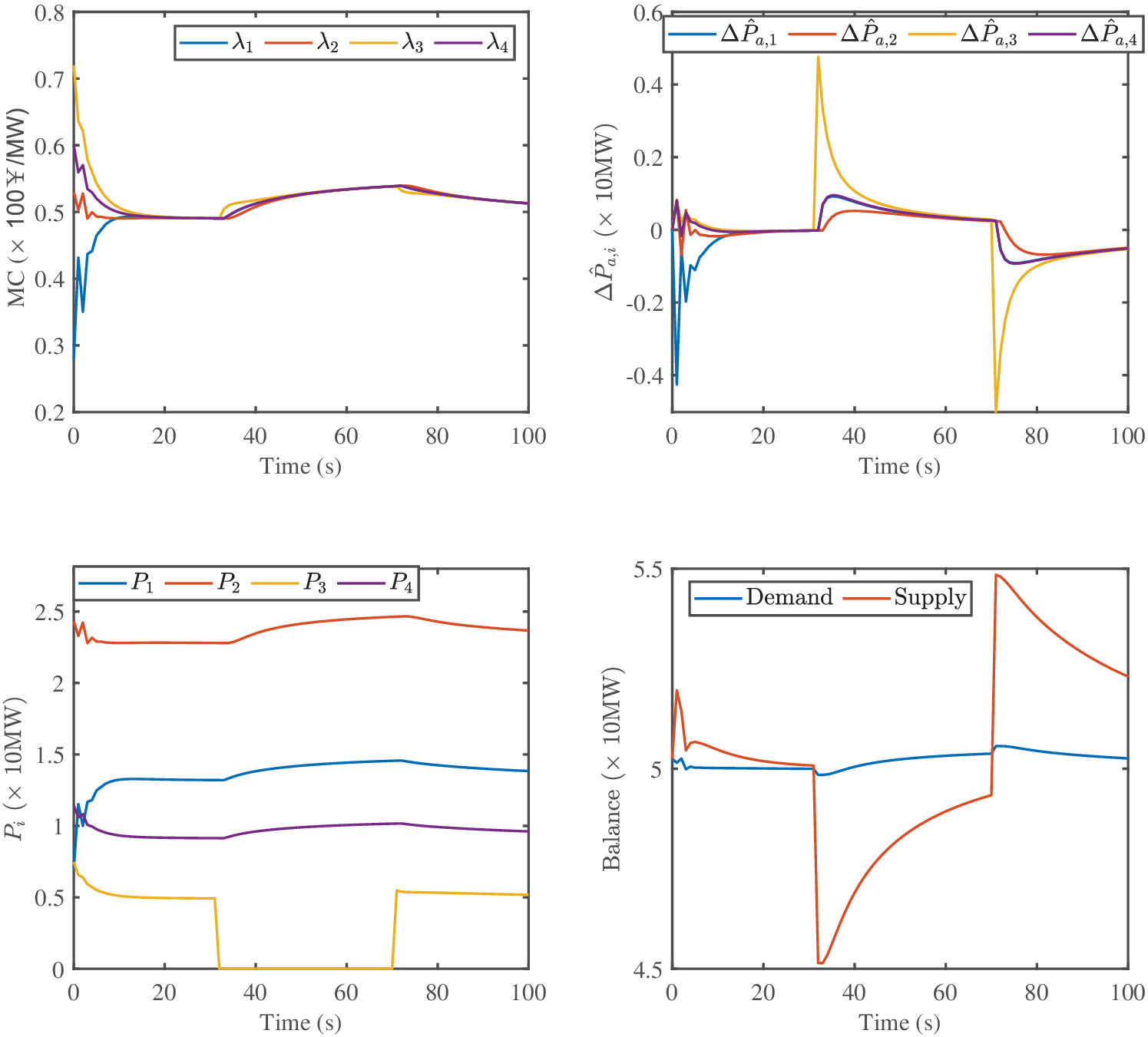} \caption{Simulation results of the designed scheme in \cite{chenDistributedEconomicDispatch2021} under an isolated and connected BESS 3 in Case 2.\label{figc3c}}
\end{figure}
\par From the simulation results, it can be seen that the performance of the two schemes in dealing with the plug-and-play of BESS 3 is different. Comparing the simulation results in Figs. \ref{figc3a} and \ref{figc3b}, it is not difficult to see that scheme \eqref{3} is on par with scheme \eqref{19} in terms of consensus rate, while scheme \eqref{3} is significantly inferior to scheme \eqref{19} in terms of convergence rate. This is because for the MC consensus scheme, the estimated average power mismatch is not introduced into the closed-loop error in scheme \eqref{3}. This further deteriorates the stability of the output power of each BESS and the supply and demand balance when BESS 3 is isolated or restored. In addition, the simulation results also indicate that as long as the total demand does not exceed the upper and lower limits of the network, the supply-demand balance can also be achieved.
\par Comparing Figs. \ref{figc3a} and \ref{figc3c}, under the scheme \eqref{3}, the consensus rate of MCs and the consensus rate of the estimated average power mismatch can be accelerated. However, the effect of the scheme \eqref{3} on improving convergence performance is negligible. In addition, the scheme \eqref{3} and the one in \cite{chenDistributedEconomicDispatch2021} perform very poorly in dealing with the exit and access of BESS 3, with a slow convergence rate leading to significant steady-state errors. Compared with Figs. \ref{figc3b} and \ref{figc3c}, under the scheme \eqref{19}, the convergence rate and consensus rate can be significantly accelerated, which makes the scheme have good BESS plug and play performance.
\par In summary, compared to the existing scheme, the consensus rate of MCs and the consensus rate of average power mismatch estimates have been appropriately accelerated under scheme \eqref{3}, but the convergence rate has not changed significantly. Under scheme \eqref{19}, the consensus and convergence performance is further greatly improved.
\subsection{Case 3. The Test on The Agent-oriented Plug-and-Play Function}
\begin{figure}
	\centering
	\includegraphics[width=8cm]{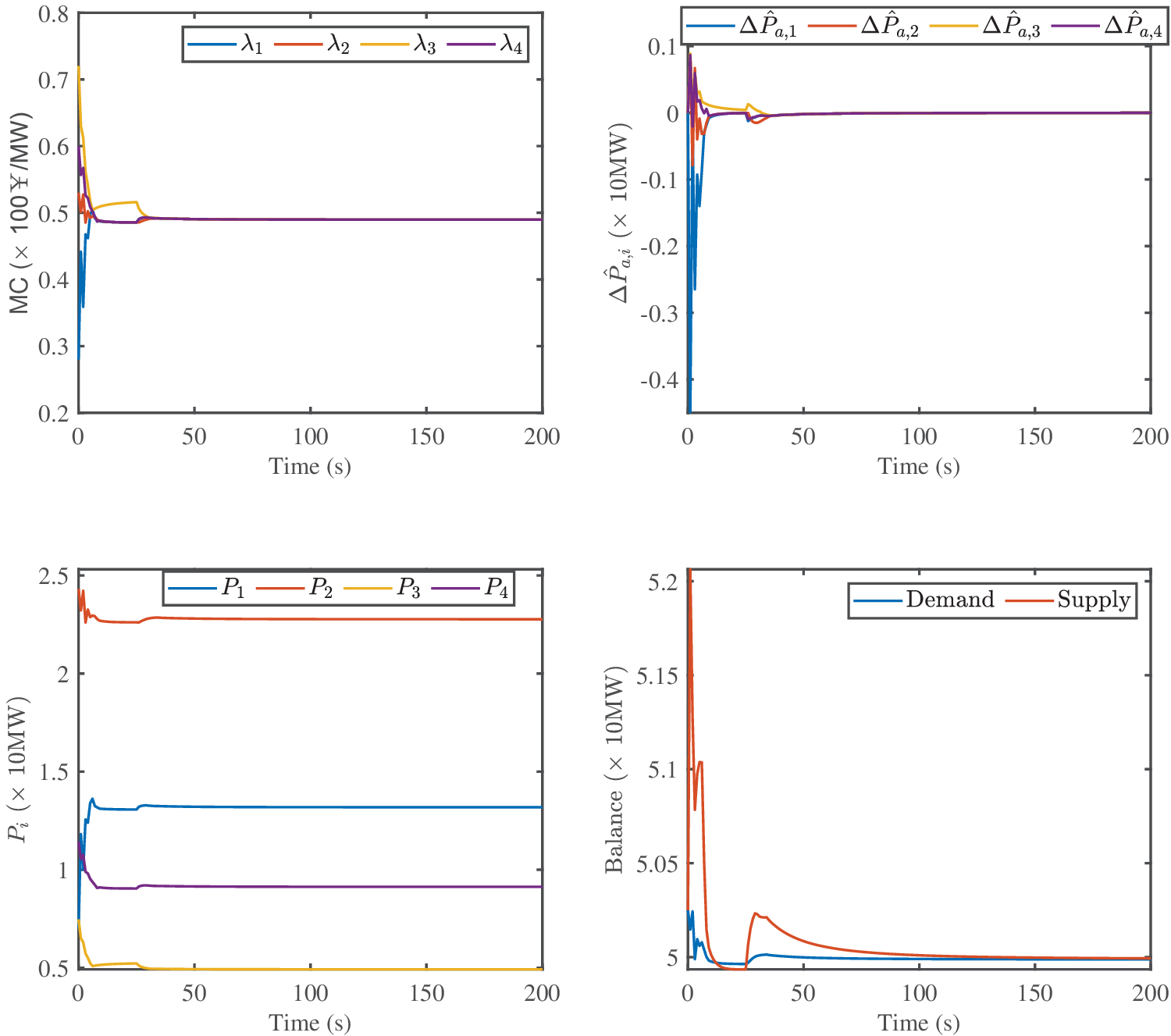} \caption{Simulation results of the designed scheme \eqref{3} under an isolated and connected agent 3 in Case 3.\label{figc4a}}
\end{figure}
\begin{figure}
	\centering
	\includegraphics[width=8cm]{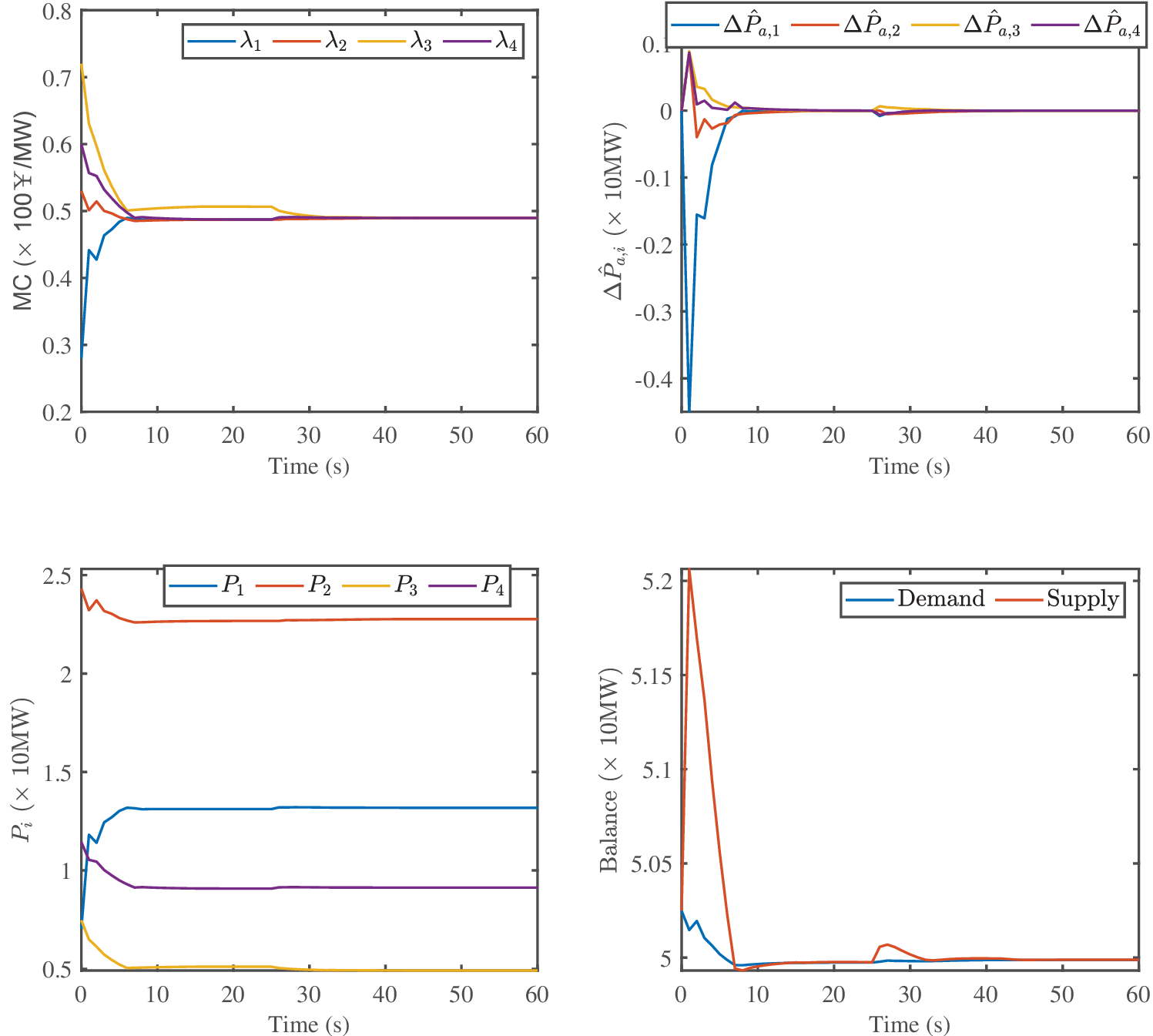} \caption{Simulation results of the designed scheme \eqref{19} under an isolated and connected agent 3 in Case 3.\label{figc4b}}
\end{figure}
\begin{figure}
	\centering
	\includegraphics[width=8cm]{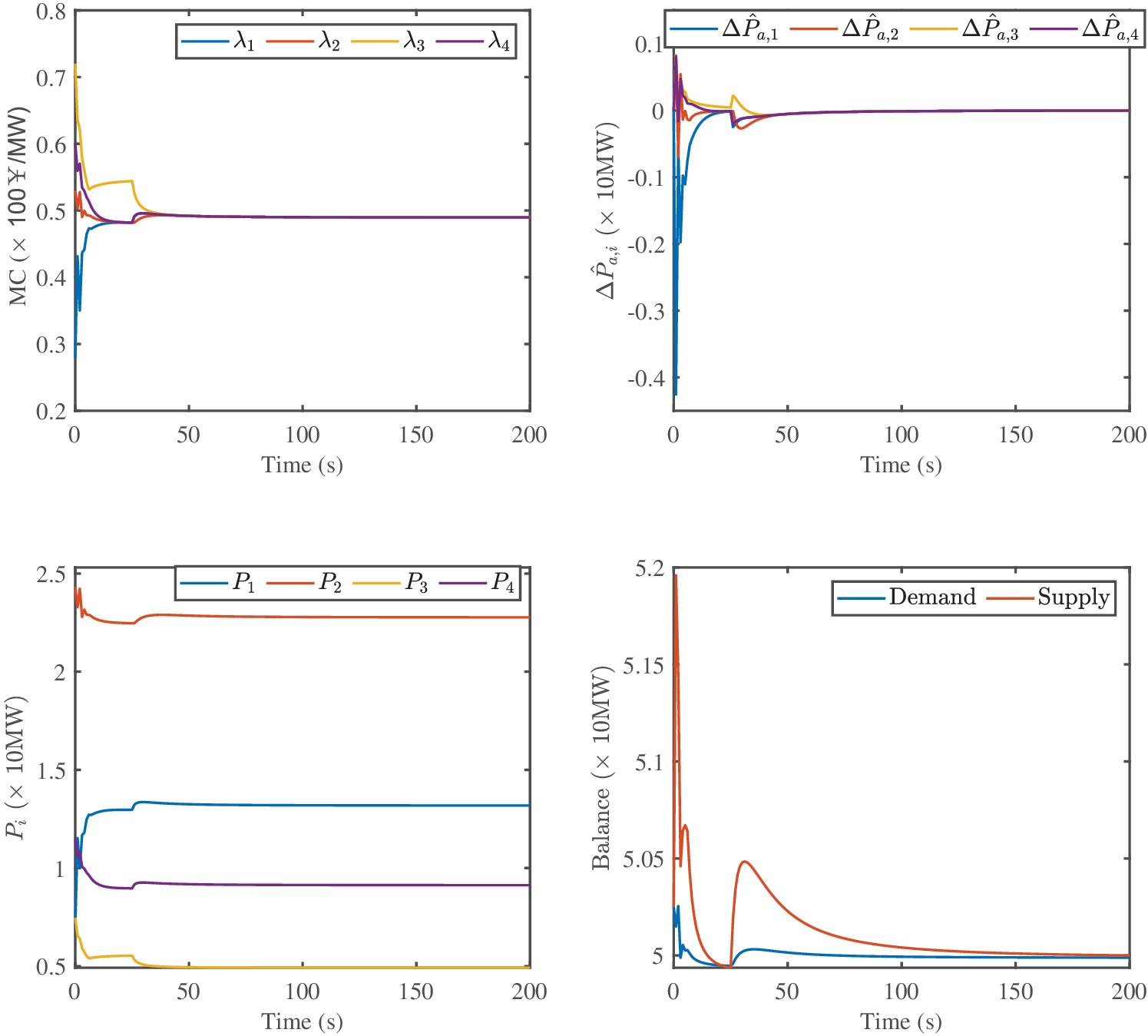} \caption{Simulation results of the designed scheme in \cite{chenDistributedEconomicDispatch2021} under an isolated and connected agent 3 in Case 3.\label{figc4c}}
\end{figure}
\par This case is arranged to investigate the effectiveness and progressiveness of the designed scheme in dealing with agent plug and play. Similar to Case 3, three ED schemes are simulated here. As for the simulation event, agent 3 refuses to communicate with others from $t=5s$ until it resumes communication at $t=25s$. The simulation results are shown in Figs. \ref{figc4a}-\ref{figc4c}.
\par From the simulation results, it can be seen that under schemes \eqref{3} and \eqref{19}, the access and exit of agent 3 will not affect the stability of each variable, and can ensure the balance of power supply and demand. However, some other significant differences have emerged. Compared with the existing scheme in \cite{chenDistributedEconomicDispatch2021}, under the drive of schemes \eqref{3} and \eqref{19}, the estimated average power mismatch and MC reach consensus faster. Furthermore, driven by scheme \eqref{19}, the convergence rate is accelerated, resulting in the scheme \eqref{19} being superior to \eqref{3}. Anyway, the scheme \eqref{3} outperforms the existing scheme in terms of consensus rate, while the scheme \eqref{19} outperforms the existing scheme and the scheme \eqref{3} in terms of consensus rate and convergence rate. In addition, the simulation results also indicate that as long as the communication structure is connected, MC consensus can be achieved, and the estimated average power mismatch can converge to 0.
\begin{figure}
	\centering
	\includegraphics[width=8cm]{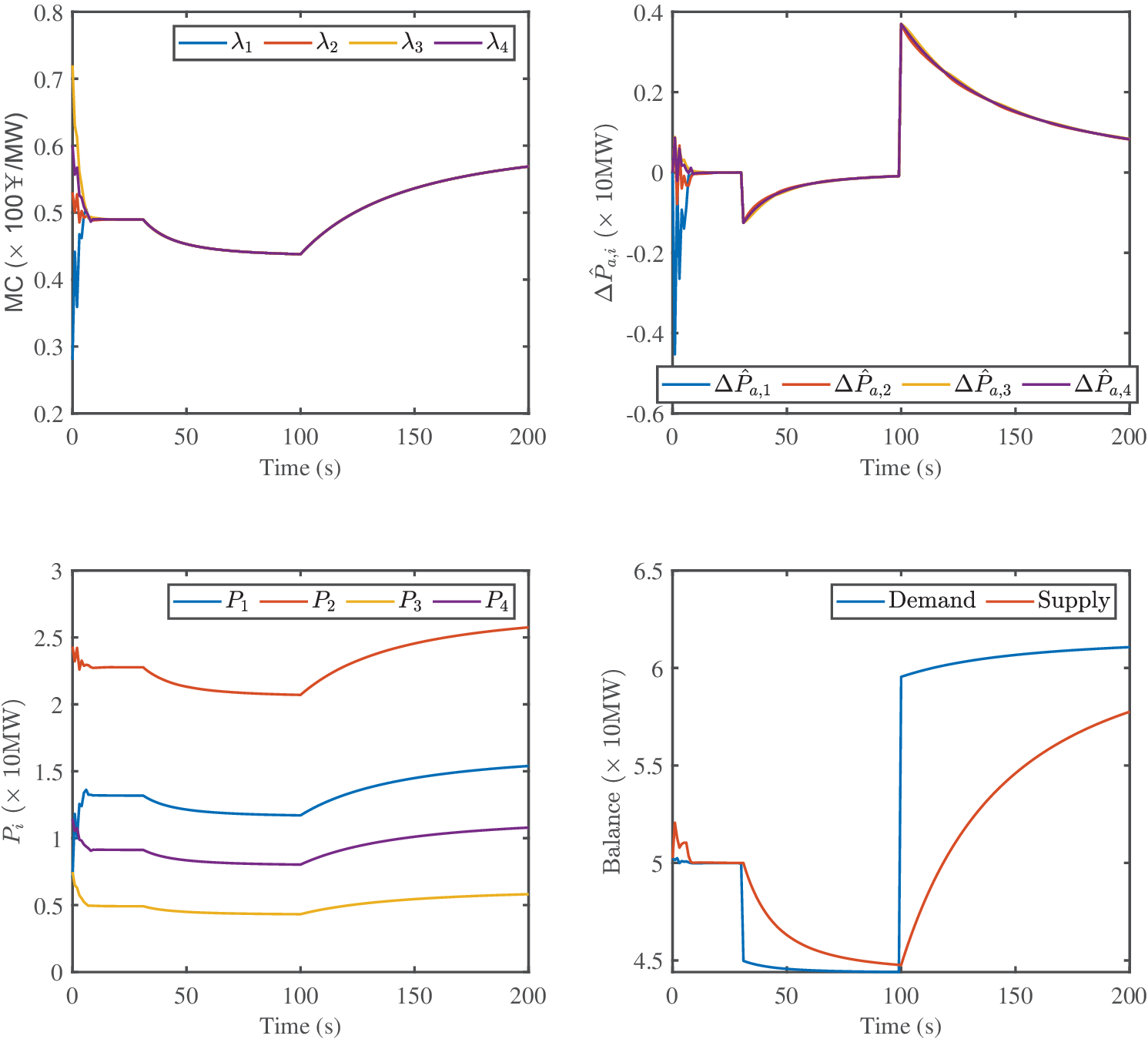} \caption{Simulation results of the designed scheme \eqref{3} under a load switching in Case 4.\label{figc5a}}
\end{figure}
\begin{figure}
	\centering
	\includegraphics[width=8cm]{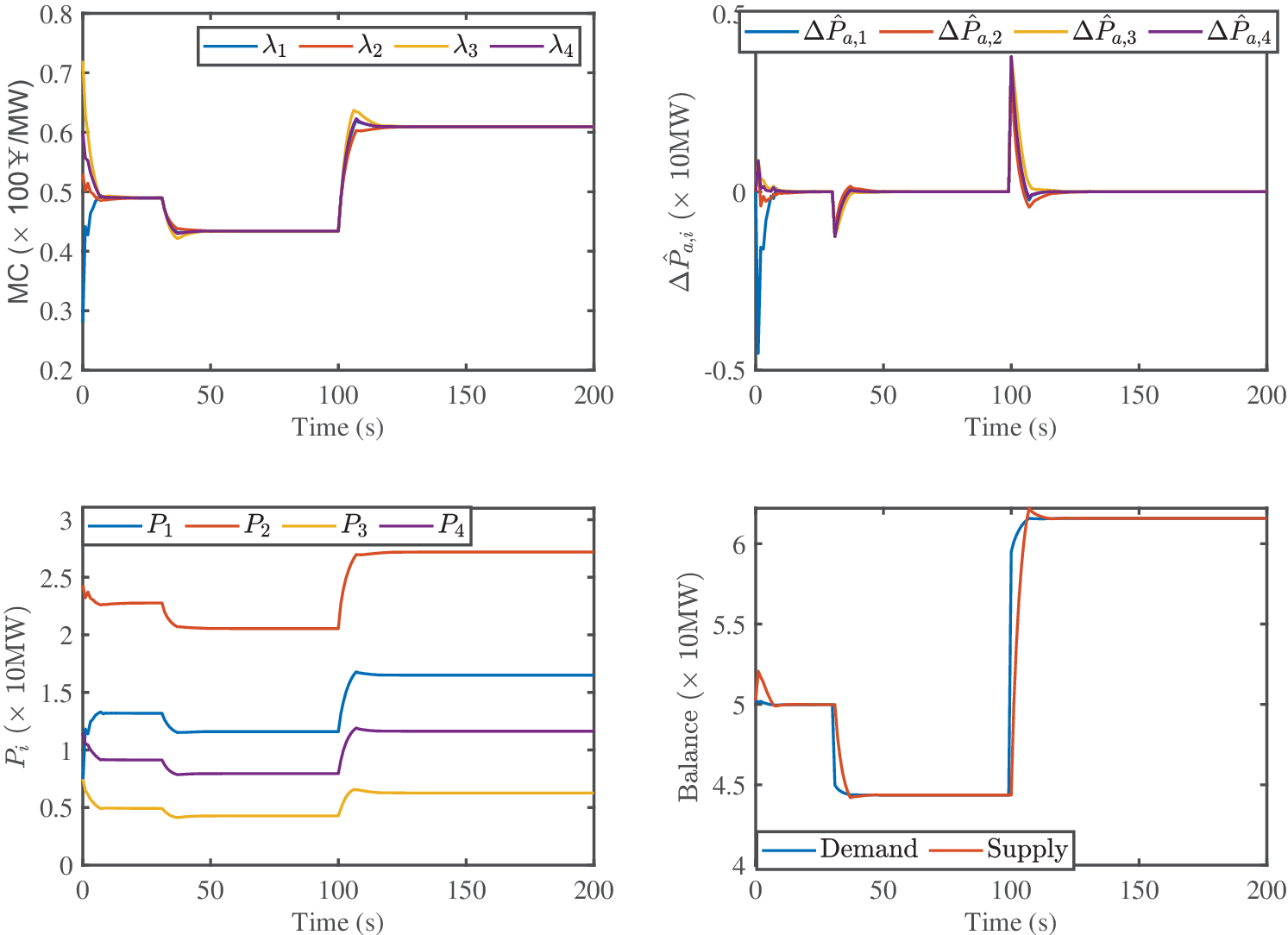} \caption{Simulation results of the designed scheme \eqref{19} under a load switching in Case 4.\label{figc5b}}
\end{figure}
\begin{figure}
	\centering
	\includegraphics[width=8cm]{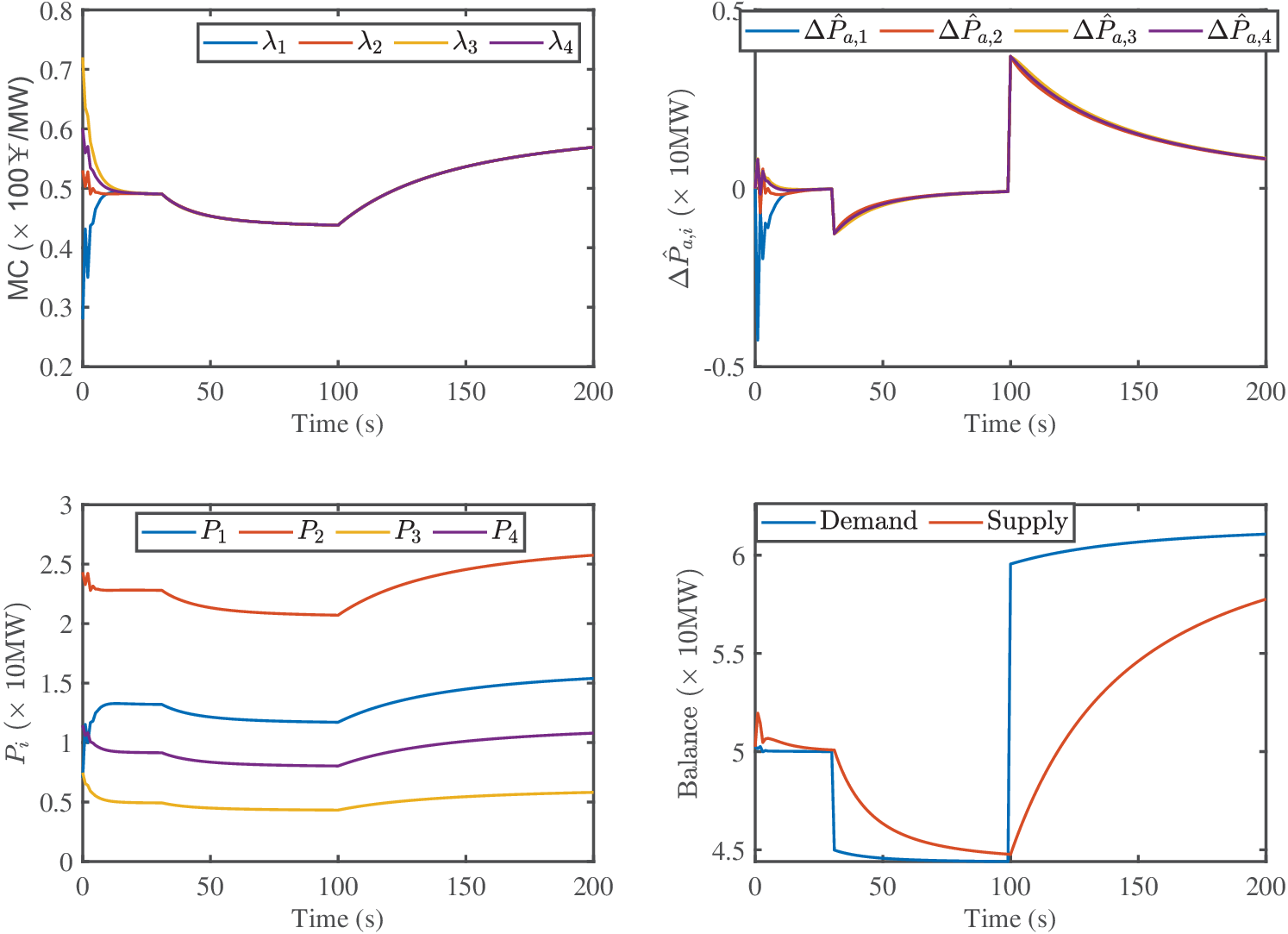} \caption{Simulation results of the designed scheme in \cite{chenDistributedEconomicDispatch2021} under a load switching in Case 4.\label{figc5c}}
\end{figure}
\subsection{Case 4. The Test on Load Switching}
\par To test the performance of the designed schemes and the scheme in \cite{chenDistributedEconomicDispatch2021} in handling load switching, this case study is arranged, and a load decrease and an increase are evenly distributed to each load node at $t=30s$ and $t=70s$, respectively. The simulation results are shown in Figs. \ref{figc5a}-\ref{figc5c}.
\par From Fig. \ref{figc5a}, it can be seen that MC consensus and the estimated average power mismatch consensus are less affected by load switching. However, the stability of both is greatly affected by the load switching. This is because in scheme \eqref{3}, the coupling item is not introduced into the closed-loop error of the MC consensus scheme. From Fig. \ref{figc5b}, it can be seen that the scheme \eqref{19} can handle load switching well, thereby ensuring MC consensus and the estimated average power mismatch consensus. At the same time, it can also ensure a balance between power supply and demand. It is worth mentioning that the scheme \eqref{19} has better stability than the scheme \eqref{3}, making it more suitable for handling load switching.
\par Compared with Fig. \ref{figc5c}, when dealing with load switching, scheme \eqref{3} has some advantages in consensus rate and control accuracy, while the convergence performance does not improve significantly. For scheme \eqref{19}, compared to the first two schemes, it shows strong advantages in terms of consensus rate, convergence rate, and control accuracy.
\begin{figure}
	\centering
	\includegraphics[width=7cm]{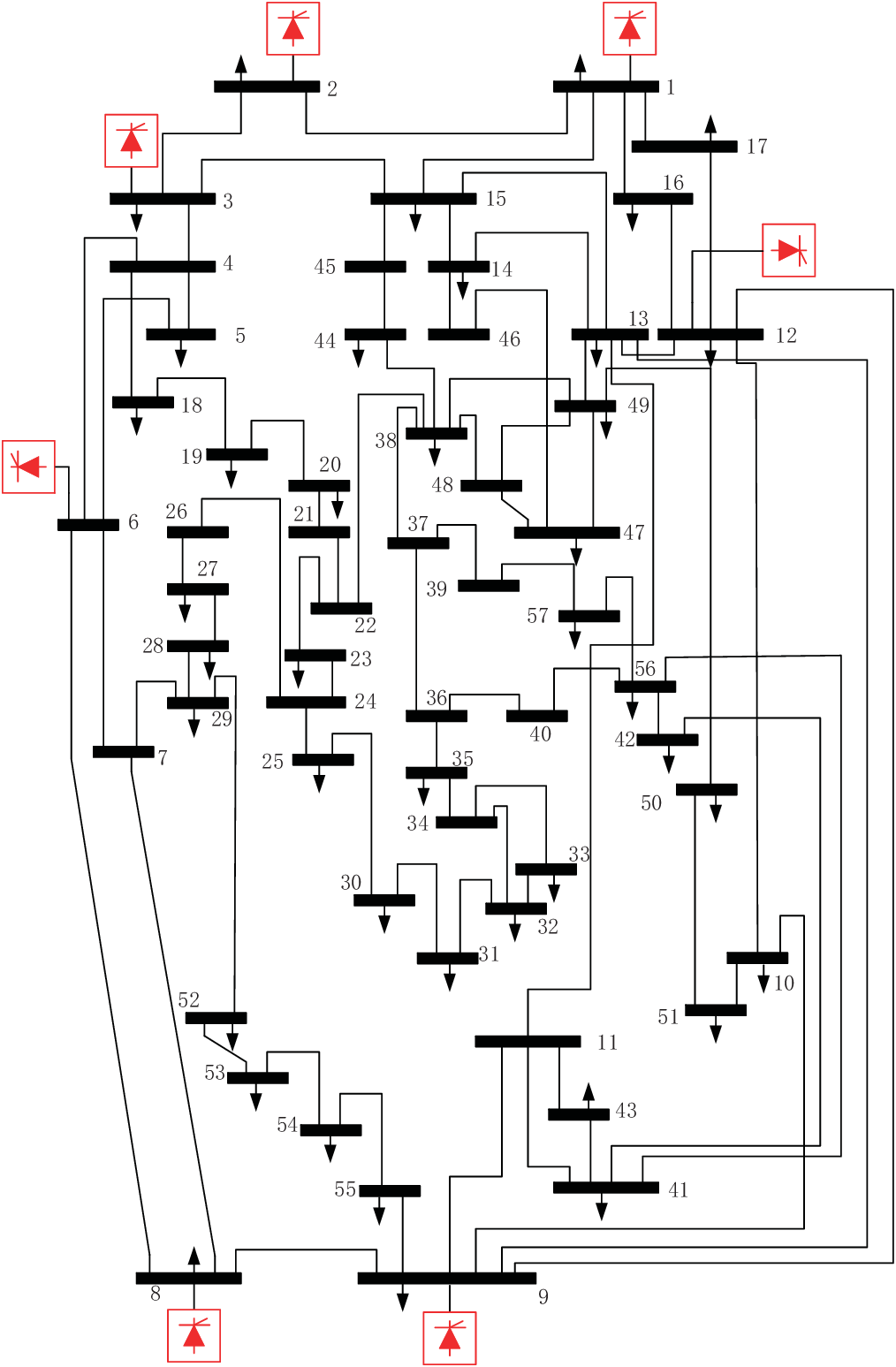} \caption{A IEEE-57 bus system.\label{fig6t}}
\end{figure}
\subsection{Case 5. The Test on A Large-scale Power System}
The interconnection of local power grids helps to enhance power supply reliability and improve the redundancy of microgrids. However, more complex power systems also pose new challenges to control algorithms. So, in this case, an IEEE-57 bus system \cite{PSEA2023} powered by BESSs, as shown in Fig. \ref{fig6t}, is used to test the designed solution. The simulation results are shown in Figs. \ref{figc6a}-\ref{figc6c}.
\begin{figure}
	\centering
	\includegraphics[width=8cm]{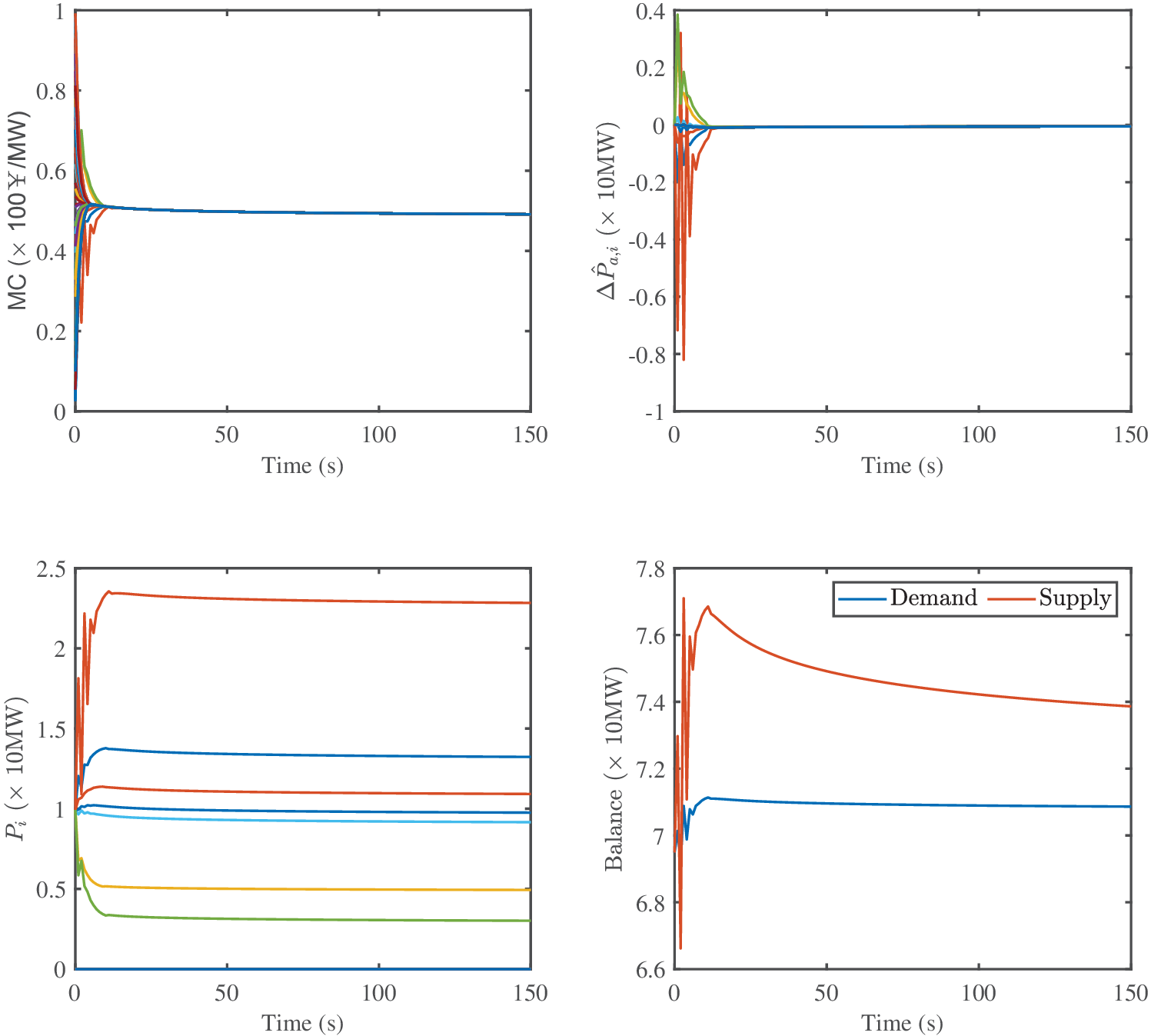} \caption{Simulation results of the designed scheme \eqref{3} on a modified IEEE-57 bus system in Case 5.\label{figc6a}}
\end{figure}
\begin{figure}
	\centering
	\includegraphics[width=8cm]{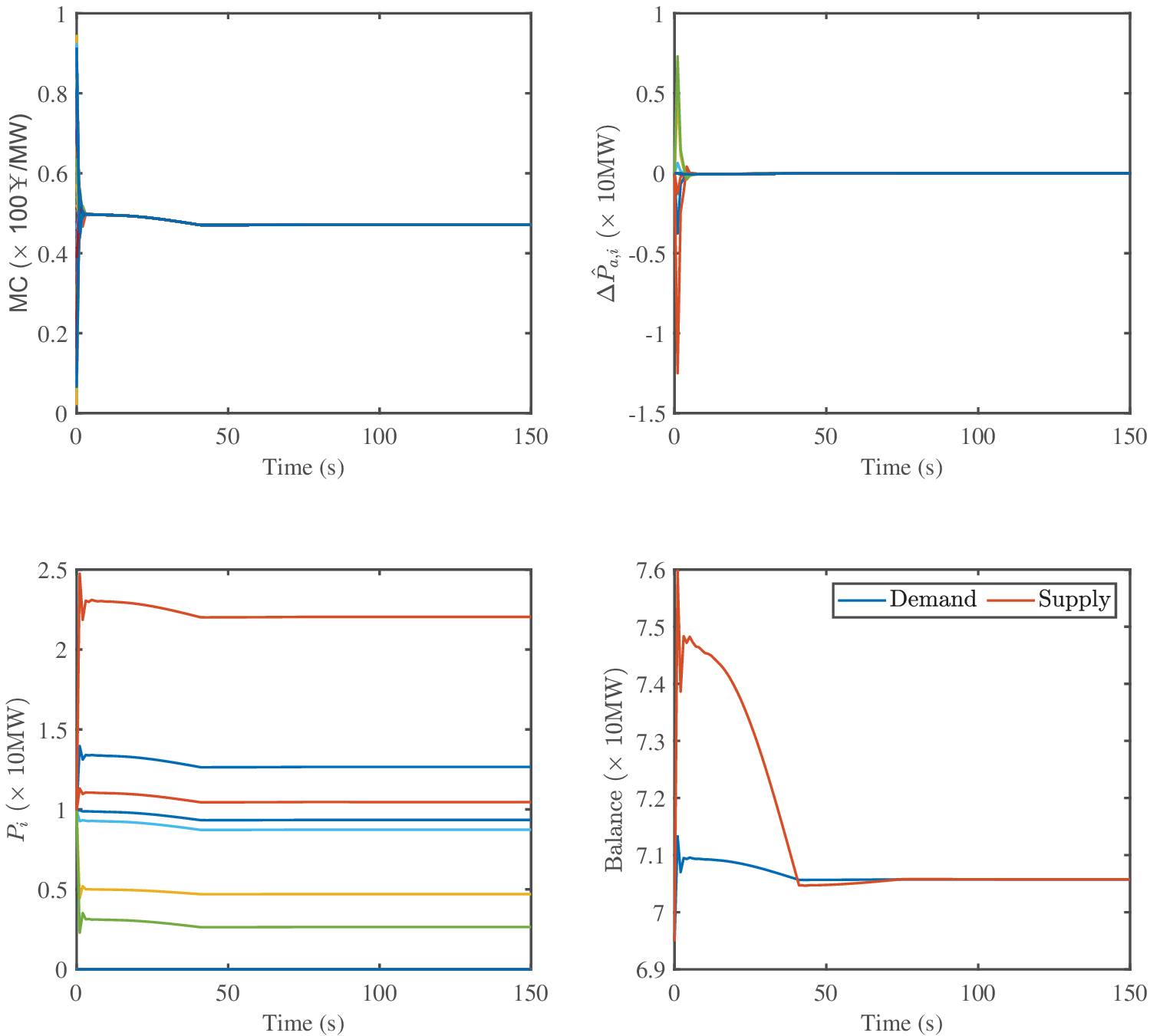} \caption{Simulation results of the designed scheme \eqref{19} on a modified IEEE-57 bus system in Case 5.\label{figc6b}}
\end{figure}
\begin{figure}
	\centering
	\includegraphics[width=8cm]{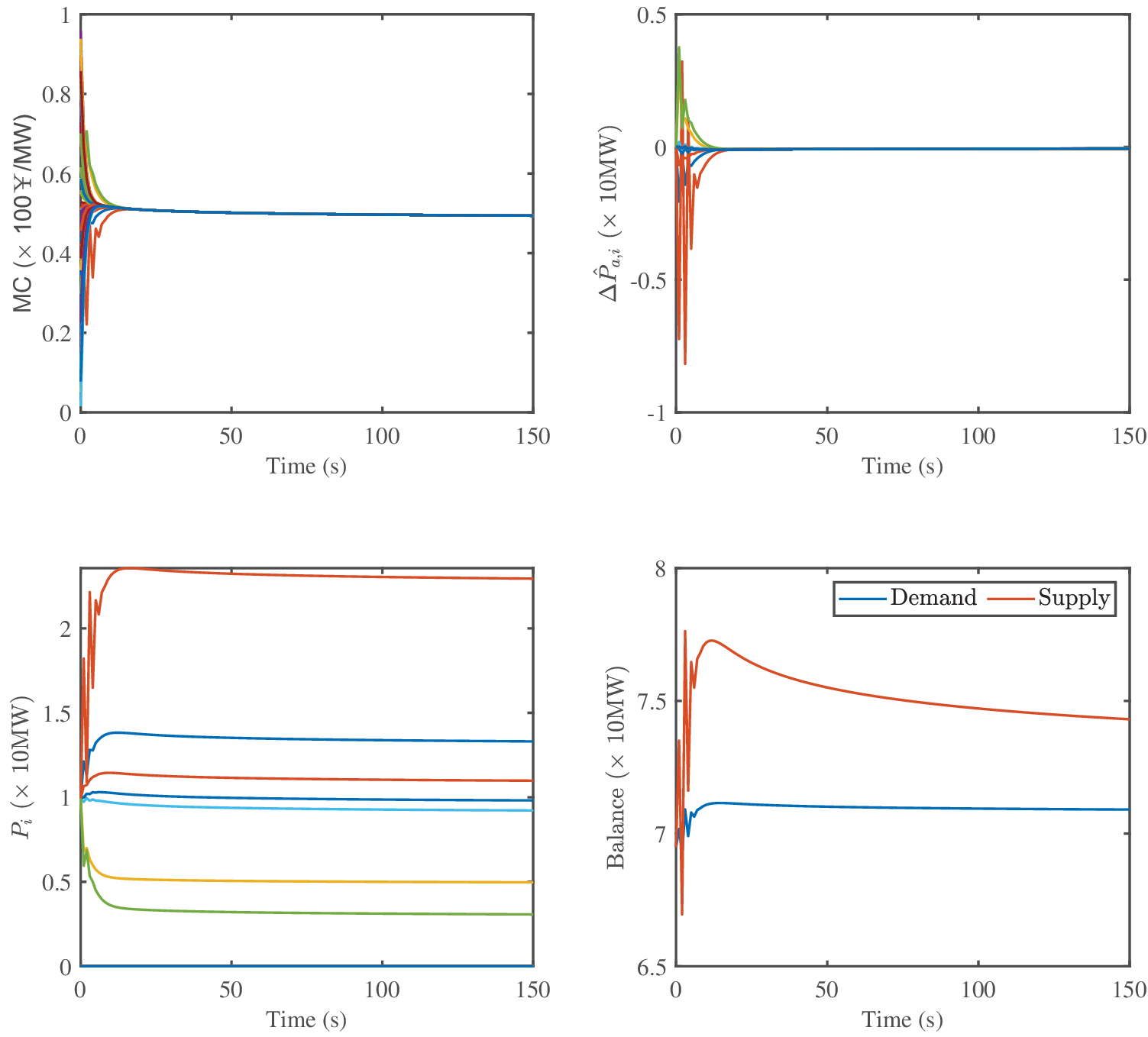} \caption{Simulation results of the designed scheme in \cite{chenDistributedEconomicDispatch2021} on a modified IEEE-57 bus system in Case 5.\label{figc6c}}
\end{figure}
\par For the scheme \eqref{3}, compared with the simulation result in Fig. \ref{figc6c}, it can be seen from Fig. \ref{figc6a} that MCs can reach consensus more quickly, and the estimated average power mismatch can converge to 0 faster. However, it is worth noting that comparing Fig. \ref{figc1a}, it can be seen that a large-scale system causes oscillations on MC and the estimated average power mismatch, resulting in oscillations in the output power of each BESS. In addition, from the simulation results, the control accuracy also deteriorates. In contrast, the scheme \eqref{19} performs very well. Compared to Fig. \ref{figc6b}, the scheme \eqref{19} does not cause significant oscillations in MC and the estimated average power mismatch, and causes the higher control accuracy.
\section{Conclusions}
\par In this article, two distributed ED schemes with PI+R protocols are designed for an isolated BESS network. From the simulation results, under the condition that the output power of each BESS does not exceed the limit, the operating expenses of the BESS network are the lowest, and the supply-demand balance is kept well. In addition, compared to an existing scheme in \cite{chenDistributedEconomicDispatch2021}, the consensus rates of MC and the estimated average power mismatch have been improved by at least 30\%. In the face of load switching, isolation of a BESS/agent and scalability, the design schemes still exhibit good performance. Compared with the existing scheme, the designed schemes, especially the second one, have strong advantages in consensus rate, control accuracy, and dynamic performance, and have stronger scalability and robustness in dealing with BESS/agent plug and play, load switching, and wide area systems. The PI+R controller designed has a simple structure, strong engineering feasibility, and good application prospects. Subsequent work will start with communication delay, privacy protection, and more complex line loss models to further improve the existing solutions.
\subsection*{Conflict of Interest}
There are no conflict of interest.
\subsection*{Authors' Contributions}
Yalin Zhang: Conceptualization, Methodology, Software, Writing - Original Draft, Visualization.
\par Zhongxin Liu: Validation, Resources, Writing - Review $\&$ Editing.
\par Zengqiang Chen: Supervision.
\subsection*{Funding }
This work is supported by the National Natural Science Foundation of China (Grant No. 52477133).
\bibliographystyle{abbrv}
\bibliography{re1}
\biography{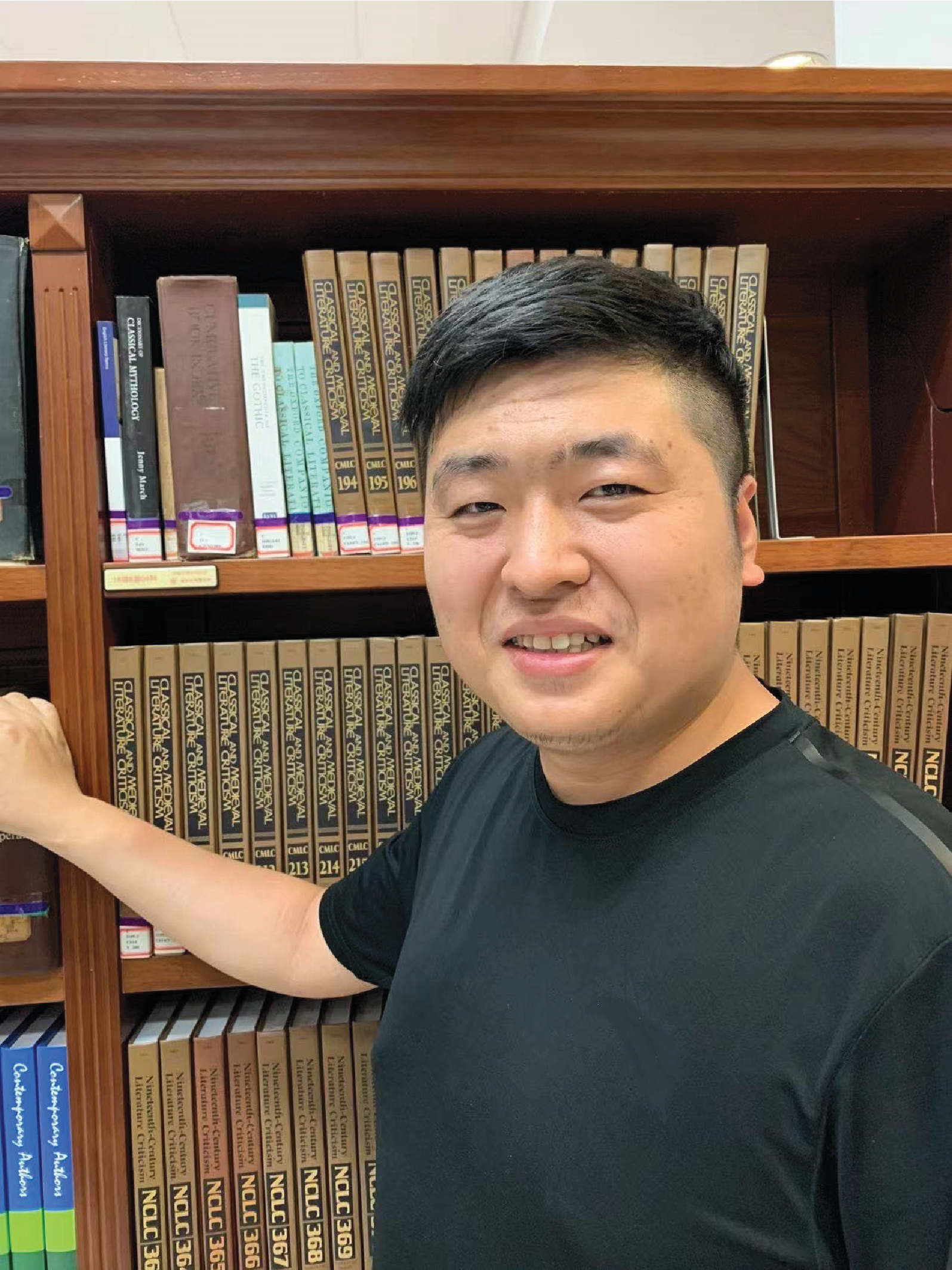}{Yalin Zhang}{received his B.S. and M.S. degrees in Automation and Control Science and Engineering from Henan Polytechnic University in 2016 and 2020, respectively. He also received his PhD in Control Science and Engineering from Nankai University in 2025.
\par He is currently working as a postdoctoral assistant researcher at the School of Electrical Engineering, Zhejiang University. He has published over ten papers in IEEE Transactions and served as a reviewer for several high-level journals. In 2020, he obtained an excellent master's thesis at the university level. His research interests include distributed collaborative control, learning and optimization based on multi-agent systems, complex network theory, and their application in distributed collaborative control and economic dispatch of smart grids.}
\biography{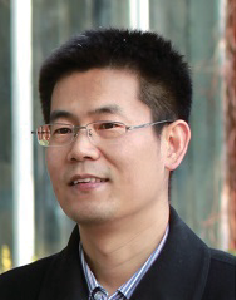}{Zhongxin Liu}{received his B.S. degree in automation and the Ph.D. degree in control theory and control engineering from Nankai University, Tianjin, China, in1997 and 2002, respectively. 
\par He has been with Nankai University, where he is currently a Professor with the Department of Automation. His research interests include multi-agent systems, nonlinear control theory, networked control system, as well as complex network theory and its application.}
\biography{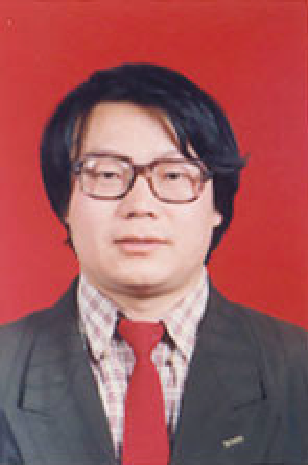}{Zengqiang Chen}{received his B.S. degree in mathematics and the M.S. and Ph.D. degrees in control theory and control engineering from Nankai University, Tianjin, China, in 1987, 1990, and 1997, respectively. 
\par He has been with Nankai University, where he is currently a Professor with the Department of Automation. His research interests include predictive control technology, complex network system, chaotic system theory and its application in information security.} 
\clearafterbiography
\relax 
\end{document}